\newcommand{\mathsym}[1]{{}}
\newcommand{\LL}{\mathcal{L}}
\newcommand{\cu}{{\cal{U}}}
\newcommand{\cupo}{{\cal{U}}_{p}}
\newtheorem{theorem}{Theorem}[section]
\newtheorem{proposition}{Proposition}[section]
\newtheorem{lemma}{Lemma}[section]
\theoremstyle{definition}
\newtheorem{definition}{Definition}[section]
\theoremstyle{remark}
\newtheorem{remark}{Remark}[theorem]
\numberwithin{equation}{section}
\theoremstyle{plain}
\newtheorem{corollary}[theorem]{Corollary}
\def\proof{\noindent{\em Proof.\/}\hspace{1mm}}
\def\Sup {\hat{\Sigma}_\varepsilon}
\def\lup{\vec{\hat{l}}_\varepsilon}
\def\nup{\boldsymbol{\hat{n}}_\varepsilon}
\def\Sdown {\Sigma_\varepsilon}
\def\ldown {\vec{l}_\varepsilon}
\def\ndown{\boldsymbol{n}_\varepsilon}
\def \tetrad {\mu}
\def \invtetrad {\nu}
\def \ta {\mathfrak{a}}
\def \tb {\mathfrak{b}}
\def \dimension {m}
\def \abshyp{\sigma}
\def \norm {\xi}
\begin{document}
\title{First order perturbations of hypersurfaces of arbitrary causal character}
\author{Brien C. Nolan$^{1}$ \thanks{ 
		brien.nolan@dcu.ie} , Borja Reina$^{1,2}$\thanks{ 
	 borja.reina@dcu.ie} ,  
Kepa Sousa$^{3}$ \thanks{ 
kepa.sousa@csic.es} \\
$^{1}$Centre for Astrophysics and Relativity, School of Mathematical Sciences,  Dublin City University, Glasnevin,  Dublin 9,  Ireland \\
$^{2}$Departamento de F\'isica Te\'orica e Historia de la Ciencia, University of the Basque Country UPV/EHU, Apartado 644, 48080 Bilbao, Spain\\
$^{3}$Instituto de F\'isica Te\'orica UAM-CSIC
Universidad Aut\'onoma de Madrid, Cantoblanco, 28049 Madrid, Spain
}
\maketitle

\begin{abstract}
	In this work we study the problem of first order perturbations of a general hypersurface, i.e. with arbitrary causal character at each point. We extend the framework by Mars \cite{Mars2005} where this problem was studied to second order for everywhere timelike or spacelike hypersurfaces, and we adapt it to cover the general case. We apply the formalism to the matching of spacetimes across a general hypersurface to first order in perturbation theory.
\end{abstract}

\section{Introduction}
There are many problems in General Relativity that for different reasons have proven to be very difficult to study in the exact theory. For instance this can be due to the lack of symmetries. One possible approach is then to identify a relevant parameter in the problem and resort to perturbation theory. In this context, the background spacetime corresponds to a known exact solution, normally with a high degree of symmetry, and the perturbation of the geometry is encoded in a two covariant symmetric tensor field that satisfies the perturbed field equations.  However, the situation becomes more complicated if, apart from the spacetime itself, there is some special hypersurface sensitive to the perturbation involved in the problem. There are many relevant examples that illustrate this situation, for instance a background spacetime which is an outcome of a matching procedure. In this context, the matching hypersurface is a clear candidate to experience some kind of deformations. Settings of this type have been exploited to tackle a wide variety of problems in the context of relativistic astrophysics, where the global spacetime describing a star can be broken down into two different spacetimes, one accounting for the stellar interior and another one for the vacuum, and both of them are matched across a timelike hypersurface that acts as a boundary separating those two regions. For instance pulsating stars have been studied to first order (see \cite{Price_Thorne} and the related works in the series), or more recently, stellar collapse to second order \cite{Brizuela}, just to list some remarkable works that consider perturbation theory up to different orders.

Nevertheless, a matching is not a requisite for having a hypersurface that is being deformed by some perturbations, as one could consider a wide variety of problems related to hypersurfaces sitting in an ambient spacetime, and for instance, in \cite{VegaPoissonMassey} a formalism is presented to study perturbations of a null hypersurface and then applied to the setting of a black hole immersed in a tidal environment, although it is also suitable for  different problems in the context. Therefore the problem of deforming submanifolds is by itself general and interesting enough to deserve some attention in the context of mathematical relativity. Indeed, there have been many general approaches such as \cite{Battye01}, where perturbations of a brane in an arbitrary bulk were computed; \cite{Capovilla:1994bs}, where perturbations of timelike submanifolds of arbitrary codimension are studied; or the doubly-covariant approach by \cite{Mukohyama00}, where the different sources of gauge freedom are identified.
%and studied in detail. 
 We will focus in the remarkable work \cite{Mars2005}, which is a fully general framework to study first and second order perturbations to hypersurfaces and where the key idea is to formulate the perturbations in an abstract hypersurface detached from the spacetime, so that the perturbations of the first and second fundamental forms can be computed in a very straight and transparent manner (the second order perturbations would be certainly difficult to compute without using the methods therein). As in any approach based on perturbations, there is some inherent freedom in the method, namely the spacetime gauge transformations and hypersurface gauge transformations, but their effect on the perturbations is studied separately. There is, however, one assumption taken explicitly in \cite{Mars2005} regarding the causal character of the hypersurfaces, as they are assumed to be timelike or spacelike everywhere. This method has been applied to linear order to obtain uniqueness results for the Einstein-Strauss model \cite{MarsMenaVera2008} and to second order to revisit the problem of slowly rotating stars \cite{ReinaVera2014}.

The problem of hypersurfaces of changing causal character has been studied in the work  \cite{Mars1993}, and recently revisited  in \cite{Mars_constraints}.
The main difference with respect to the standard timelike or spacelike hypersurfaces,  is that at null points the first fundamental form becomes degenerate, and therefore it does not define a proper geometric structure, so that additional ingredients are needed such as a transverse vector to the hypersurface, used to endow the submanifold with a \textit{rigged} connection \cite{schouten2013ricci}, \cite{Mars1993}. Note that the transverse vector is non-unique, and it adds a gauge freedom in the method. We refer to this vector as a rigging.

But even keeping the causal character fixed, to the best of our knowledge, there is only one framework dealing with perturbations of null hypersurfaces \cite{VegaPoissonMassey}, where the focus is put on the generators of the null hypersurface. In this work we consider general hypersurfaces, and therefore the approach based in the generators is not convenient, so that we start from the framework \cite{Mars2005} and generalize it appropriately to deal with hypersurfaces of general causal character. This upgrade of the method \cite{Mars2005} results satisfactory to deal with perturbations of general hypersurfaces, but also implements the possibility of characterizing the perturbations of everywhere timelike/spacelike or null hypersurfaces with a specification of the transverse vector other than the normal vector, or the null transverse vector respectively. We restrict our analysis to first order perturbations. 

This work is organized as follows. In Section 2 we introduce the setting and the formalism that we will use to work with embedded hypersurfaces, which is based in the works \cite{Mars1993} and \cite{Mars_constraints}. The starting point of these two  approaches is different, but what is relevant for this work is that the hypersurface $\abshyp$ is endowed with a Riemannian structure, whose fundamental building blocks are the first fundamental form $(h)$, a two covariant symmetric tensor that captures the extrinsic properties of the hypersurface $(Y)$, and a scalar $(l^{(2)})$ and a one form $(\boldsymbol{l})$ related to the rigging vector. 
In Section \ref{section_constuction} we make explicit use of the framework \cite{Mars2005} to construct the perturbations of all these geometric objects. The perturbative method relies on a construction of a one-parameter family of the involved geometric objects, in particular we end with a collection of hypersurfaces $\Sigma_\varepsilon$ diffeomorphically related among themselves. Each of them is equipped with a rigging $\vec{l}_\varepsilon$ and we encounter the problem of how to relate (or compare) these vector fields, since they are defined in different tangent spaces. Since we are working locally, we can restrict to a neighbourhood of a point in the background hypersurface and extend this vector fields to this neighbourhood. The necessity of these extensions was already touched on \cite{Mars2005}, but we revisit this question in order to discuss their existence. We provide an example of an extension of the riggings that can be built without restricting the generality of the method, which enables us to use safely Lemma \ref{lemma_marc_perturbations} to produce expressions for the perturbations of $\{h, l^{(2)}, \boldsymbol{l}, Y\}$ which are given in Section \ref{section_perturbations_data} in terms of the \textit{allowed ingredients}: the background elements that define $\abshyp$ listed above, the metric perturbations $g_1$ and a vector field $\vec{Z}$ that describes the deformation of the hypersurface. There is also a vector field -- that we denote by $\vec{\zeta}$ -- that comes up from the direct application of the method to calculate the perturbations of the hypersurface data. This extra ingredient in the method reflects the freedom in the choice of rigging (see point (ii) in the next paragraph) to first order.

The first order perturbations are not invariant under the distinct sources of freedom in the method. In Section \ref{section_freedom} we discuss how the perturbations transform under spacetime and hypersurface gauge transformations. These were already discussed in \cite{Mars2005} and the results therein remain valid for general hypersurfaces. The dependence of the method in the rigging vectors $\vec{l}_\varepsilon$ adds two sources of freedom that we examine, (i) the possibility of taking different extensions of $\vec{l}_\varepsilon$ and (ii) transformations of $\vec{l}_\varepsilon$ into other transverse vectors $\vec{l}_\varepsilon{}'$. We conclude that the perturbations do not depend on the extensions, and we characterize the change of the perturbations under rigging transformations in Lemma \ref{lemma_rigging_transformations_fo}. We show that the perturbations of $\{\boldsymbol{l}, l^{(2)} \}$ are in direct correspondence with the gauge objects that characterize the transformations (ii) to first order.

Finally, we devote Section 6 to apply the results to a matching situation. We consider the proper matching conditions to first order established in \cite{Mars2005} and a series of results that hold in the exact case, such as the uniqueness of the rigging and the independence of the junction condition regarding the extrinsic curvature on the choice of rigging.

\section{General hypersurfaces}
\label{section_introduction_hypersurfaces}
We consider an ambient spacetime, which is a $\dimension+1$ dimensional Lorentzian manifold $\mathcal M$ endowed with a Lorentzian metric $g$. The metric is assumed to be at least $C^2$ everywhere,
 unless otherwise stated.  We use index notation, and in particular spacetime objects will carry Greek indices, which run from $0$ to $\dimension$.  We denote the Levi-Civita covariant derivative by $\nabla$ and the Riemann tensor follows the convention \cite{Wald}
\begin{equation}
\nabla_\alpha \nabla_\beta \omega_\mu- \nabla_\beta \nabla_\alpha \omega_\mu = R_{\alpha \beta \mu}^{\phantom{\alpha \beta \mu}\nu} \omega_\nu,
\end{equation}
 for any one form $\boldsymbol{\omega}$. We also follow the conventions of \cite{Wald} for symmetrization/antisymmetrization of tensor fields.

Let $\abshyp$ be an abstract hypersurface, i.e. a $\dimension$-dimensional manifold detached from the spacetime. Objects related to this manifold will carry Latin indices, whose range is $\{1,\ldots,\dimension\}$. We introduce local coordinates $\{x^\alpha \}$ in $\mathcal M$ and $\{y^a\}$ in $\abshyp$. This abstract hypersurface is embedded in the ambient space $(\mathcal M, g)$ via the $C^3$ mapping $\Phi: \abshyp \rightarrow \Sigma_0  \subset  \mathcal M$. The tangent space $T(\Sigma_0)$ admits a direct sum decomposition in terms of a $m$-dimensional subspace of vectors tangential to $\Sigma_0$ and a one dimensional subspace of vectors transverse to $\Sigma_0$.

A basis of the tangential subspace can be constructed in terms of the 
image of the natural basis $\partial_{y^a}$ at the tangent spaces $T_p \abshyp$, which  through the differential map $d\Phi$ defines the set of vectors 
\begin{equation}
\vec{e}_a := d\Phi\left(\frac{\partial}{\partial y^a}\right) = \frac{\partial \Phi^\alpha}{\partial y^a} \left.\frac{\partial}{\partial x^\alpha}\right|_{\Sigma_0} = e^\alpha_a \left.\frac{\partial}{\partial x^\alpha}\right|_{\Sigma_0}.%, \qquad \mathbf{n} (\vec{e}_a) = 0.
\end{equation}
Note that the set $\{\vec{e}_a\}$ are $\dimension$-independent spacetime vectors defined at $T(\Sigma_0)$. 
We define the normal one form as the unique one form $\boldsymbol{n}$, up to scaling, that satisfies $\boldsymbol{n} (\vec{e}_a) = 0$ for all $a$. The normal one form induces the normal vector via the metric isomorphism. The hypersurface $\Sigma_0$ can be classified attending to its causal character as follows

\begin{definition}
	The hypersurface $\Sigma_0$ is:
	\begin{itemize}
		\item Timelike at a point $p \in \Sigma_0$ if $\boldsymbol{n}(\vec{n})|_p > 0$.
		\item Spacelike at a point $p \in \Sigma_0$ if $\boldsymbol{n}(\vec{n})|_p < 0$.
		\item Null at a point $p \in \Sigma_0$ if $\boldsymbol{n}(\vec{n})|_p = 0$.
	\end{itemize}
\end{definition}

Since the causal character of the hypersurfaces is a property defined pointwise, it could vary from one point to another. In this case we refer to the hypersurface as a \textit{general hypersurface}. If, on the contrary, the causal character remains constant in the hypersurface, then these are called null, timelike or spacelike. We will refer to the last two options as \textit{standard hypersurfaces}.

The transverse subspace of $T(\Sigma_0)$ is spanned by a transverse vector called the rigging $\vec{l}$ \cite{Mars1993}, where transverse means that $\boldsymbol{n}(\vec{l}) \neq 0$ at every point of $\Sigma_0$. It is clear from this definition of transversality that the rigging is highly non-unique. For instance in standard hypersurfaces, the normal vector $\vec{n}$ is transverse everywhere and it is a very convenient choice as a rigging. However, if there are null points, the normal vector becomes tangent there and ceases to complete the basis at $T(\Sigma_0)$. The existence of the rigging is ensured by the following lemma.
\begin{lemma}{\bf (Mars 2013 \cite{Mars_constraints})} \label{lemma_existence_rigging}
	Let $\Sigma_0$ be a hypersurface in $\mathcal M$. A rigging $\vec{l}$ exists if and only if $\Sigma_0$ is orientable.
\end{lemma}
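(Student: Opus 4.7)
The lemma is a classical fact from differential topology about the triviality of the normal line bundle of a codimension-one submanifold. My plan is to prove both implications directly: the forward one by using the ambient volume form to induce an orientation on $\Sigma_0$, and the converse by patching together local transverse fields with a partition of unity once an orientation fixes a coherent ``side'' of $\Sigma_0$. I work under the standard general-relativistic assumption that $\mathcal M$ is oriented, with volume form $\boldsymbol\eta$.

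For the forward direction, given a rigging $\vec l$, I consider the $m$-form on the abstract hypersurface defined by $\boldsymbol\omega:=\Phi^*(\iota_{\vec l}\boldsymbol\eta)$. Evaluated on any tangent frame $\{\vec e_a\}$ this reads $\boldsymbol\eta(\vec l,\vec e_1,\dots,\vec e_m)$, which is nonzero precisely because transversality makes $\{\vec l,\vec e_1,\dots,\vec e_m\}$ a basis of $T_p\mathcal M$ at every point of $\Sigma_0$. A nowhere-vanishing top form is an orientation, so $\Sigma_0$ is orientable.

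For the converse, I would fix an orientation $\boldsymbol\omega$ on $\Sigma_0$ and cover $\Sigma_0$ by coordinate charts $\{U_i\}$ on each of which a local transverse vector field $\vec l_i$ is straightforward to exhibit (pick any vector at a point not in the tangent hyperplane and extend as a constant in local coordinates; transversality is an open condition). On each $U_i$, the $m$-form $\Phi^*(\iota_{\vec l_i}\boldsymbol\eta)$ is nowhere-vanishing and hence equals a nonzero smooth multiple of $\boldsymbol\omega$; after possibly replacing $\vec l_i$ by $-\vec l_i$, this multiple is positive throughout $U_i$. Writing $\vec l_j=\lambda_{ij}\vec l_i+\vec t_{ij}$ with $\vec t_{ij}$ tangent on any overlap $U_i\cap U_j$, a short calculation using that $\iota_{\vec t_{ij}}\boldsymbol\eta$ annihilates any tangent frame shows that $\lambda_{ij}$ equals the ratio of these two positive multiples, hence $\lambda_{ij}>0$. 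For any partition of unity $\{\rho_i\}$ subordinate to $\{U_i\}$, the global field $\vec l:=\sum_i\rho_i\vec l_i$ then has, at every point and in every local decomposition, a transverse component given by a strictly positive convex combination; in particular $\vec l$ is nowhere tangent and is the required rigging.

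The main obstacle lies in the coherent sign synchronisation in the converse direction: a naive partition-of-unity sum of locally transverse fields could become tangent on overlaps if the local fields pointed to opposite sides of $\Sigma_0$. Orientability of $\Sigma_0$ is precisely what allows a uniform sign choice eliminating this possibility, as captured by the step $\lambda_{ij}>0$ above, whereas its failure would make the normal line bundle nontrivial and preclude any global rigging. The forward direction presents no real difficulty.
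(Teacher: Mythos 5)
Your argument is correct, but note that the paper does not actually prove this lemma: it is quoted from Mars (2013) and used as a black box, so there is no in-paper proof to compare against. Taken on its own terms, your proof is sound. The forward direction via the nowhere-vanishing top form $\Phi^*(\iota_{\vec l}\boldsymbol\eta)$ is clean, and in the converse the key step — that the transition coefficients $\lambda_{ij}$ in $\vec l_j=\lambda_{ij}\vec l_i+\vec t_{ij}$ are positive after the orientation-based sign synchronisation, because $\iota_{\vec t_{ij}}\boldsymbol\eta$ annihilates tangent frames — is exactly what makes the partition-of-unity sum stay transverse (the transverse component at each point is a positive combination). One remark worth making explicit: the equivalence as stated is genuinely false without orientability of $\mathcal M$ (or at least of a neighbourhood of $\Sigma_0$); e.g.\ the orientable cylinder sitting as the core of a M\"obius band crossed with $\mathbb{R}$ has nontrivial normal bundle and admits no rigging. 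The intrinsically correct statement is that a rigging exists iff the normal line bundle of $\Sigma_0$ is trivial, which reduces to orientability of $\Sigma_0$ precisely under your standing assumption that $\mathcal M$ is oriented — so flagging that hypothesis, as you did, is not a convenience but a necessity, and it is implicit in the paper's spacetime setting.
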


In the rest of the manuscript we adopt the convention $\boldsymbol{n} (\vec{l}) = 1$ for the transversality condition.

Finally, we construct the dual basis with the normal one form and the one forms $\boldsymbol{\omega}^a$ uniquely defined by 
\begin{equation}
\omega^a_\alpha e_b^\alpha = \delta_b^a, \qquad \boldsymbol{\omega}^a(\vec{l}) = 0.
\end{equation}
The dual basis depends on the rigging, which is not unique. We review this freedom in the method at the end of this section.
The decomposition of the unit tensor $\delta_\alpha^\beta$ in this basis results in the following expression
\begin{equation}
e_a^\alpha \omega^a_\beta + l^\alpha n_\beta = \delta_\beta^\alpha. \label{basis_completeness}
\end{equation}
In the particular case of a timelike or spacelike hypersurface, the rigging can be fixed by $\vec l = \norm \vec n$ with $\norm \equiv \boldsymbol{n}(\vec{n})$, and this relation is simply $e_a^\alpha \omega^a_\beta = \delta_\beta^\alpha - \norm n^\alpha n_\beta$.

The vectors $\{e_a^\alpha\}$ and the covectors $\{\omega^a_\alpha\}$ applied to tensor fields act as the distinct differential mappings associated to $\Phi$. Let $A$ and $\mathcal A$ be tensor fields of the appropriate rank defined at points of $\Sigma_0$ in the ambient spacetime and in the abstract hypersurface respectively. The pullback of spacetime covariant objects to $\abshyp$ reads
\begin{equation*}
\mathcal A_{{a_1}\dots {a_r}} := \Phi^* ( A_{{\alpha_1}\cdots {\alpha_r}}) =   e^{\alpha_1}_{a_1} \dots e^{\alpha_r}_{a_r}  A_{{\alpha_1}\cdots {\alpha_r}},
\end{equation*}
and any contravariant tensor in $\abshyp$ can be promoted to the spacetime similarly
\begin{equation*}
A^{{\alpha_1}\cdots {\alpha_r}} := d\Phi (\mathcal A^{{a_1}\dots {a_r}}) =  e^{\alpha_1}_{a_1} \dots e^{\alpha_r}_{a_r} \mathcal A^{{a_1}\dots {a_r}}.
\end{equation*}
The one forms $\{\boldsymbol{\omega}^a\}$ projects contravariant tensors to $\abshyp$ 
\begin{equation*}
\mathcal A^{{a_1}\dots {a_r}} := d\Phi^{-1} ( A^{{\alpha_1}\cdots {\alpha_r}}) =\omega_{\alpha_1}^{a_1} \dots \omega_{\alpha_r}^{a_r} A^{{\alpha_1}\cdots {\alpha_r}},
\end{equation*}
and promotes covariant tensors from $\abshyp$ to the ambient space at points of $\Sigma_0$
\begin{equation*}
A_{\alpha_1 \ldots \alpha_r} := \Phi^{-1}{}^* (\mathcal A_{a_1 \ldots a_r}) = \omega_{\alpha_1}^{a_1} \ldots \omega_{\alpha_r}^{a_r} \mathcal A_{a_1 \ldots a_r}.
\end{equation*}
We will often use the decomposition of a  vector field $A^\alpha$ in the tangent basis $ A^\alpha \equiv \mathcal A l^\alpha +  {\mathcal A}^a  e_a^\alpha$. This defines the following fields in $\abshyp$: $\mathcal A \equiv n_\alpha  A^\alpha$ and $ {\mathcal A}^a \equiv  A^\alpha \omega^a_\alpha$. If the vector field has no rigged component ($\mathcal A=0$) we will often use the notation $ \vec A_\abshyp$ to denote its counterpart in $\abshyp$, i.e. $\vec{A} = d \Phi (\vec{A}_\abshyp)$, which is equivalent to $A^\alpha = \mathcal A^a e_a^\alpha$, or abusing notation $A^\alpha =  A^a e_a^\alpha$. We remark that whenever we use this notation, the subindex $\abshyp$ in the vector field $\vec{A}_\abshyp$ is a simple reminder that the vector field is defined on $T(\abshyp)$ and it should not be confused with a spacetime index.

There are some relevant objects defined through projections in the bases $\{\vec{l}, \vec{e}_a\}$ or $\{\boldsymbol{n}, \boldsymbol{\omega}^a\}$, such as the component  of the normal vector in $T(\abshyp)$: $n^a \equiv n^\alpha \omega_\alpha^a$, or similarly for the rigging one form: $l_a \equiv l_\alpha e_a^\alpha$, and also the symmetric tensors
\begin{eqnarray*}
h_{ab} = e_a^\alpha e_b^\beta \left. g_{\alpha \beta} \right|_{\Sigma_0}, \quad P^{ab} \equiv \omega^a_\alpha \omega^b_\beta \left. g^{\alpha \beta}\right|_{\Sigma_0}.
\end{eqnarray*}
The tensor $h$ is known as the first fundamental form of $\abshyp$. Its signature is not fixed, and it is an induced metric on $\abshyp$ for standard hypersurfaces, but at null points it becomes degenerate \cite{Mars1993}.  In fact, the causal character is a property that we have defined for points that belong to the hypersurface $\Sigma_0$ in $\mathcal M$, but given the embedding $\Phi_0$, which identifies points and tangent spaces of $\Sigma_0$ with those of $\abshyp$, this property is also attached to the abstract hypersurface $\abshyp$ pointwise.  The procedure depends on the embedding, but there are no ambiguities as long as we keep it unchanged.

We consider the derivatives of the normal one form and rigging decomposed in the tangent/dual basis at $\Sigma_0$ 
\begin{eqnarray}
\nabla_{\vec{e}_a}\boldsymbol{n} = - \varphi_a \boldsymbol n + \kappa_{ab} \boldsymbol{\omega}^b, \quad 
\nabla_{\vec{e}_a}\vec{l} = \varphi_a \vec{l} + \Psi_a^b \vec{e}_b,\label{nabla_normal_rigging}
\end{eqnarray}
which define the following objects \cite{Mars1993}
\begin{equation}
\kappa_{ab} := e_a^\alpha e_b^\beta \nabla_\alpha n_\beta,\quad \varphi_a := n_\mu e_a^\nu\nabla_\nu l^\mu, \quad \Psi_b^a := \omega^a_\mu e^\nu_b \nabla_\nu l^\mu. \label{definition_objects}
\end{equation}
The two covariant tensor $\kappa_{ab}$ is the second fundamental form of $\abshyp$, and when the hypersurface is timelike or spacelike everywhere it captures its extrinsic properties.

It is possible to define a torsion free covariant derivative in $\abshyp$ exploiting the rigged structure built at points of $\Sigma_0$. The rigged connection  $\overline{\nabla}$ is constructed as follows: consider any pair of vectors $\vec{X}_\abshyp$ and $\vec{Y}_\abshyp$ from $\abshyp$. These induce vectors in the spacetime via the mapping $d\Phi$, that we denote by $\vec{X}$ and $\vec{Y}$ respectively. The covariant derivative of one along the other is a vector that will have, in general, a rigged component and a component completely tangent to $\Sigma_0$. The latter defines a covariant operator on $\abshyp$
\begin{equation*}
\overline{\nabla}_{\vec{X}_\abshyp} \vec{Y}_\abshyp := d\Phi^{-1} (\nabla_{\vec{X}} \vec{Y}), \text{ or  } X^b \overline \nabla_b Y^a =  \omega^a_\alpha X^\mu \nabla_{\mu} Y^\alpha \text{ using index notation}.
\end{equation*}
Moreover the relation between the covariant derivatives of the ambient space and the embedded hypersurface is ruled by  \cite{Mars1993}
\begin{eqnarray}
\omega_{\mu_1}^{a_1}\cdot \cdot \cdot \omega_{\mu_r}^{a_r} e^{\nu_1}_{b_1} \cdot \cdot \cdot e^{\nu_q}_{b_q} e_c^\gamma \nabla_\gamma  A^{\mu_1 \cdot \cdot \cdot \mu_r}_{\nu_1 \cdot \cdot \cdot \nu_q} &=& \overline \nabla_c \mathcal A^{a_1 \cdot \cdot \cdot a_r}_{b_1 \cdot \cdot \cdot b_q} + \sum_{i=1}^r \mathcal A^{a_1 \cdot \cdot \cdot a_{i-1} \gamma a_{i+1} \cdot \cdot \cdot a_r}_{b_1 \cdot \cdot \cdot b_q} n_\gamma \Psi_c^{a_i} \nonumber\\
&&+\sum_{j=1}^q \mathcal A^{a_1 \cdot \cdot \cdot a_r}_{b_1 \cdot \cdot \cdot b_{j-1}\gamma b_{j+1}\cdot \cdot \cdot b_q} l^\gamma \kappa_{c b_j}, \label{nabla_nablabar}
\end{eqnarray}
where $A$ is a spacetime tensor, $\mathcal A \equiv \Phi^* A$ and the tensor with mixed indices represents the full projection of $A$ to $\abshyp$ except for the index $\gamma$ which is being contracted with $\vec{l}$ or $\boldsymbol{n}$. For instance $\mathcal A^{a_1 \cdot \cdot \cdot a_{i-1} \gamma a_{i+1} \cdot \cdot \cdot a_r}_{b_1 \cdot \cdot \cdot b_q} n_\gamma \equiv \omega_{\mu_1}^{a_1}\cdot \cdot \cdot \omega_{\mu_{i-1}}^{a_{i-1}} n_\gamma \omega_{\mu_{i+1}}^{a_{i+1}}\cdot \cdot \cdot \omega_{\mu_r}^{a_r} e^{\nu_1}_{b_1} \cdot \cdot \cdot e^{\nu_q}_{b_q}   A^{\mu_1 \cdot \cdot \cdot \mu_{i-1}\gamma\mu_{i+1} \cdot \cdot \cdot \mu_r}_{\nu_1 \cdot \cdot \cdot \nu_q}$, and similarly for the contraction of the covariant indices with the rigging vector.

As shown above, this formalism exploits the fact that $\Sigma_0$ is an embedded hypersurface in order to define the relevant geometrical objects in $\abshyp$ in terms of spacetime objects. This strategy has been proven to succeed in order to describe general hypersurfaces and the corresponding junction conditions \cite{Mars1993}, but there are some aspects regarding the perturbations which are better understood from an alternative point of view. The \textit{hypersurface data} approach, introduced in \cite{Mars_constraints}, consist of considering a set of geometric data on $\abshyp$ that characterizes the abstract manifold completely, as well as a series of properties and relations that this data satisfies. The main advantage of this approach is that it is independent of the spacetime, and in fact an embedding is not required \textit{a priori}. Obviously when the data is embedded both approaches are equivalent.

The main ingredients for this approach are the so called \textit{hypersurface metric data}, which is a set consisting on a $\dimension$ dimensional smooth manifold, a symmetric two covariant tensor, a one form and a scalar: $\{\abshyp, h_{ab}, l_a, l^{(2)}\}$ so that the matrix  \footnote{ The \textit{hypersurface metric data} in matrix form reads \cite{Mars_constraints} $	\begin{bmatrix}
l^{(2)}    & l_a \\
l_b & h_{ab}
\end{bmatrix}
$.} formed by these has a Lorentzian signature everywhere on $\abshyp$.  When the data is embedded in a spacetime, these objects become
\begin{equation}
\Phi(\abshyp) = \Sigma_0, \quad h = \Phi^*(g), \quad \boldsymbol{l} = \Phi^* g(\vec{l}, \cdot), \quad  l^{(2)} = \Phi^* g(\vec{l}, \vec{l}),
\end{equation}
which are the relations defining the left hand sides in the spacetime approach. From this hypersurface metric data the objects $\{P^{ab}, n^a, n^{(2)}\}$ are defined as the solution of the equations
\begin{eqnarray}
&&P^{ac}h_{bc} + n^a l_b =\delta_b^a, \quad P^{ab}l_b + l^{(2)}n^a =0,\nonumber\\
&&n^al_a + n^{(2)}l^{(2)}= 1, \quad h_{ab}n^b + n^{(2)}l_a = 0. \label{constraints_definitions}
\end{eqnarray}
When the data is embedded these objects agree with the definitions provided in the \textit{spacetime approach}, i.e. $P^{ab} = \omega^a_\alpha \omega^b_\beta g^{\alpha \beta}$, $ n^a = n^\alpha \omega^a_\alpha$, and $n^{(2)} = g^{\alpha \beta}n_\alpha n_\beta$.

The \textit{hypersurface data} is the set comprising the \textit{hypersurface metric data} plus an additional two covariant symmetric tensor $Y_{ab}$, which for embedded  data is
\begin{equation}
Y = \frac{1}{2}\Phi^* (\LL_{\vec{l}}g).
\end{equation}
We will refer to this object as the \textit{rigged fundamental form}, and it captures the extrinsic properties of a general hypersurface. In fact it is possible to define objects (\ref{definition_objects}) solely in terms of the hypersurface metric data and the tensor $Y$  (see Proposition 1 in \cite{Mars2005}).

One of the main results from this second intrinsic approach that we will use is summarized in the following lemma.
\begin{lemma} {\bf (Mars 2013 \cite{Mars_constraints})} \label{lemma_vector_reconstruction}
	Let $Z_a$ and $W$ be given. There exists a vector $V^a$ satisfying $V^al_a = W$ and $h_{ab}V^b=Z_a$ if and only if
	\begin{equation}
	n^b Z_b + n^{(2)}W=0. \label{constraint_lemma_reconstruction}
	\end{equation}
	Moreover, the solution is unique and reads
	\begin{equation*}
	V^a = P^{ab}Z_b + n^a W.
	\end{equation*}
	
\end{lemma}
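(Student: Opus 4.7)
My plan is to treat the lemma as a straightforward piece of linear algebra built on the defining relations (\ref{constraints_definitions}), splitting the argument into necessity, explicit construction, and uniqueness.

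For necessity, I would assume a solution $V^a$ exists and contract $h_{ab}V^b=Z_a$ with $n^a$. Using the symmetry of $h_{ab}$ together with the relation $h_{ab}n^b+n^{(2)}l_a=0$ from (\ref{constraints_definitions}), the left-hand side becomes $-n^{(2)}l_bV^b=-n^{(2)}W$, so $n^aZ_a+n^{(2)}W=0$ drops out immediately. This establishes (\ref{constraint_lemma_reconstruction}) as a necessary condition.

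For sufficiency I would take the candidate $V^a=P^{ab}Z_b+n^aW$ proposed in the statement and simply verify the two equations, using (\ref{constraints_definitions}) and the symmetry of $P^{ab}$ and $h_{ab}$. Computing $l_aV^a$, the identity $P^{ab}l_b+l^{(2)}n^a=0$ turns the first term into $-l^{(2)}n^bZ_b$, after which the constraint (\ref{constraint_lemma_reconstruction}) replaces $n^bZ_b$ by $-n^{(2)}W$, and the remaining combination $(l^{(2)}n^{(2)}+n^al_a)W$ collapses to $W$ by the scalar relation in (\ref{constraints_definitions}). Similarly, computing $h_{ab}V^b$ and applying $P^{bc}h_{ab}=\delta_a^c-n^cl_a$ together with $h_{ab}n^b=-n^{(2)}l_a$ yields $Z_a-(n^cZ_c+n^{(2)}W)l_a$, which reduces to $Z_a$ by (\ref{constraint_lemma_reconstruction}).

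For uniqueness I would consider the difference $U^a=V^a-\tilde V^a$ of two solutions, which satisfies the homogeneous system $h_{ab}U^b=0$, $l_aU^a=0$. Contracting the first equation with $P^{ca}$ and using $P^{ca}h_{ab}=\delta_b^c-n^cl_b$ gives $U^c=n^c(l_bU^b)=0$, so the difference vanishes. I do not expect any real obstacle here; the only point requiring care is to use the symmetry of $h_{ab}$ and $P^{ab}$ consistently when transposing indices in (\ref{constraints_definitions}), and to invoke (\ref{constraint_lemma_reconstruction}) at precisely the step where each verification would otherwise leave an unwanted $n^aZ_a+n^{(2)}W$ term.
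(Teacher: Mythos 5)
Your proof is correct: the necessity step (contracting $h_{ab}V^b=Z_a$ with $n^a$ and using $h_{ab}n^b=-n^{(2)}l_a$), the verification of the explicit candidate, and the uniqueness argument via $P^{ca}h_{ab}=\delta^c_b-n^cl_b$ all check out against the relations (\ref{constraints_definitions}). Note that the paper itself supplies no proof of this lemma --- it is quoted from \cite{Mars_constraints} --- so there is nothing to compare against, but your direct linear-algebra verification is exactly the natural argument and is complete.
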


Finally, we recall that these methods to characterise an embedded hypersurface depend on a choice of transverse vector, which is highly non-unique. Any two different riggings $\vec{l}$ and $\vec{l}{}'$ are related by
\begin{equation*}
\vec{l}{}' = \lambda (\vec{l} + \vec{v}),
\end{equation*}
so that the gauge freedom is encoded in a function $\lambda$ on $\Sigma_0$ ($\lambda \neq 0$) and in a tangent vector on $\Sigma_0$, $\vec v$.  As a result of this transformation the tangent basis and the cobasis at points of $\Sigma_0$ transform as
\begin{eqnarray*}
e_a'{}^\alpha = e_a^\alpha, \quad n'_\alpha = \lambda^{-1} n_\alpha, \quad \omega'^a_{\alpha} = \omega^a_\alpha -v^a n_\alpha.
\end{eqnarray*}
Let us denote the corresponding vector in $\abshyp$ by $\overline{v}$, so that $\vec v = d \Phi (\overline{v})$. Hence the data transforms as follows
\begin{eqnarray}
h'_{ab} &=& h_{ab},\label{rigging_transformation_1ff}\\
l'{}^{(2)} &=& \lambda^2 (l^{(2)} + 2\overline{v}^a l_a + \overline{v}^{(2)}),\label{rigging_transformation_l2}\\
l'_a &=& \lambda (l_a + h_{ab} \overline{v}^b),\label{rigging_transformation_la}\\
Y'_{ab} &=& \lambda Y_{ab} + \frac{1}{2} (\lambda_{,a} l_b + \lambda_{,b} l_a) + \frac{1}{2} \mathcal L_{\lambda \overline{v}} h, \label{transformation_Y}
\end{eqnarray}
whereas the objects $\{n^{(2)}, n^a, P^{ab} \}$ transform according to
\begin{eqnarray}
n^{(2)}{}' = \frac{n^{(2)}}{\lambda^2}, \quad n^a{}' = \frac{1}{\lambda}(n^a - n^{(2)} \overline{v}^a) , \quad P^{ab}{}' = P^{ab} + n^{(2)} \overline{v}^a \overline{v}^b -2 \overline{v}^{(a} n^{b)}. \label{inverse_data_transformations}
\end{eqnarray}
Part of the freedom in choosing the rigging vector is encoded in the vector $\vec{v}$. Since it is tangent to $\Sigma_0$, we write it as $\vec{v} = v^a \vec{e}_a$ and its associated one form reads
\begin{equation*}
\boldsymbol{v} = W \boldsymbol{n} + Z_a \boldsymbol{\omega}^a, \quad W \equiv l_a v^a, \quad Z_a \equiv h_{ab} v^b.
\end{equation*}
In fact, using the identity on $\abshyp$ we find that it is possible to decompose $v^a$ as follows
\begin{equation*}
v^a = v^c \delta_c^a = (n^a l_c + P^{ab} h_{bc})v^c = n^a (l_c v^c) + P^{ab} (h_{bc}v^c) = W n^a + Z_b P^{ab}.
\end{equation*}
An equivalent point of view would be to encode the rigging transformations in the two fields $W$ and $Z_a$ that replace $\overline{v}$, which according to Lemma \ref{lemma_vector_reconstruction} satisfy the constraint $n^{(2)}W + n^a Z_a = 0$.

In terms of these objects, the change of the hypersurface metric data under a rigging transformation is rendered as
\begin{eqnarray}
l_a' &=& \lambda (l_a + Z_a), \label{rigging_transformed_la}\\
l^{(2)}{}' &=& \lambda^2 (l^{(2)} + 2W + P^{ab}Z_a Z_b - n^{(2)}W^2),\label{rigging_transformed_l2}
\end{eqnarray}

Along the rest of the paper, and as in the previous discussion, a primed $(')$ quantity shall denote that it is referred to a (transformed) rigging $\vec{l}'$, related to $\vec{l}$ by the gauge fields $\lambda$ and $\overline{v}$.

\section{Construction of the perturbations}
\label{section_constuction}

The method of perturbation of spacetimes requires a one-parameter family of spacetimes $(\mathcal M_\varepsilon, \hat{g}_\varepsilon)$, where the element corresponding to $\varepsilon =0$ is singled out as the background spacetime $(\mathcal M_0, \hat g_0)$, or simply $(\mathcal M, g)$. The spacetimes inside the family are identified through the smooth diffeomorphism 
\begin{equation*}
\psi_\varepsilon: \mathcal M \equiv \mathcal M_0 \rightarrow \mathcal M_\varepsilon,
\end{equation*}
where
$\psi_0$ is the identity map.
This mapping defines a one-parameter family of two covariant symmetric tensors $g_\varepsilon$ and the so called metric perturbations $g_1$ by
\begin{equation*}
g_\varepsilon := \psi_\varepsilon^* \hat{g}_\varepsilon, \quad g_1 := \left.\frac{d}{d \varepsilon} g_\varepsilon \right|_{\varepsilon = 0}.
\end{equation*}
We assume that the Einstein equations \footnote{$\chi$ is the gravitational coupling constant.} $\hat{G}_\varepsilon (\hat{g}_\varepsilon) = \chi \hat{T}_\varepsilon$ are satisfied in each of the $(\mathcal M_\varepsilon, \hat{g}_\varepsilon)$ for a given energy momentum tensor $\hat{T}_\varepsilon$. These can be pulled back to $(\mathcal M, g)$ and result into the following equations for the metric perturbation tensor
\begin{eqnarray*}
 -\nabla_\alpha \nabla_\beta {g_1}_\rho^\rho - \square {g_1}_{\alpha \beta} + 2 \nabla_\rho \nabla_{(\alpha}{g_1}_{\beta)}^\rho - {g_1}_{\alpha \beta } R + g_{\alpha \beta} ({g_1}^{\mu \nu} R_{\mu \nu} + \square {g_1}_\rho^\rho - \nabla_\mu \nabla_\nu {g_1}^{\mu \nu}) = 2\chi {T_1}_{\alpha \beta},
\end{eqnarray*}
where $T_1$ is the linearised energy momentum tensor.

From each member $\mathcal M_\varepsilon$ we single out a (orientable) hypersurface $\hat{\Sigma}_\varepsilon$, and require that these are diffeomorphic among themselves, so that these can be identified to an abstract manifold $\abshyp$ through the map $\phi_\varepsilon: \abshyp \rightarrow \hat{\Sigma}_\varepsilon$. This identification is non unique, and it entails another freedom in the method, regarded as the hypersurface gauge freedom: a $\varepsilon$-dependent diffeomorphism $\chi_\varepsilon: \abshyp \rightarrow \abshyp$ previous to the identification among the $\hat{\Sigma}_\varepsilon$ generates a new diffeomorphim $\phi_\varepsilon' = \phi_\varepsilon \circ \chi_\varepsilon: \abshyp \rightarrow \hat{\Sigma}_\varepsilon$.

The combination of the spacetime and hypersurface identifications generates the embeddings $\Phi_\varepsilon \equiv \psi_\varepsilon^{-1} \circ \phi_\varepsilon: \abshyp \rightarrow \Sigma_\varepsilon \subset \mathcal M$. From this point of view, the one-parameter family of hypersurfaces $\Sigma_\varepsilon$ generates the perturbations of the hypersurface $\Sigma_0$. In fact,  the information about the deformation of the hypersurfaces, relative to the gauges used, is encoded in the deformation vector $\vec Z \equiv \partial_\varepsilon \Phi_\varepsilon|_{\varepsilon = 0}$. In general it has a rigged and tangent parts to $\Sigma_0$, encoded in the scalar $Q$ and the vector field $\vec{T}$ respectively, so that $\vec Z = Q \vec l + \vec T$. Note that $Q$ and $\vec{T}$ will depend on the choice of rigging. The embeddings $\Phi_\varepsilon$ define the one-parameter two covariant fields $h_\varepsilon := \Phi_\varepsilon^*(g_\varepsilon)$ in $\abshyp$, i.e. the $\varepsilon-$family of first fundamental forms, whose perturbations are defined by
\begin{equation*}
\delta h_{ab} := \left.\frac{d h_\varepsilon}{d \varepsilon}\right|_{\varepsilon=0}.
\end{equation*}

In order to complete the construction of the one-parameter \textit{hypersurface data}, we need to describe the behaviour of the rigging vectors.

 Our starting point is the spacetimes $\{\mathcal M_\varepsilon, \hat{g}_\varepsilon, \hat{\Sigma}_\varepsilon\}$. We will assume that the $\Sup$ are orientable, so that there exists a family of normal one forms $\nup$. Thus Lemma \ref{lemma_existence_rigging} ensures that for each  hypersurface $\Sup$, there exists a rigging $\lup$, and as in the exact case we impose the normalisation condition $\nup(\lup) = 1$. 
 We use the pushforward of the inverse of $\psi_\varepsilon$ to generate the riggings $\vec{l}_\varepsilon$ in $\Sigma_\varepsilon$ as stated in the following lemma.
\begin{lemma}
	The one forms $\ndown := \psi_\varepsilon^*(\nup)$ and the vectors $\vec{l}_\varepsilon := d\psi_\varepsilon^{-1} (\vec{\hat{l}}_\varepsilon)$ are normal one forms and riggings with respect to $\Sigma_\varepsilon$.
\end{lemma}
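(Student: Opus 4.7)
The plan is to exploit the fact that $\psi_\varepsilon$ is a diffeomorphism between $\mathcal{M}$ and $\mathcal{M}_\varepsilon$ together with the identification $\Sigma_\varepsilon = \psi_\varepsilon^{-1}(\Sup)$, which follows from $\Phi_\varepsilon = \psi_\varepsilon^{-1}\circ\phi_\varepsilon$ and $\phi_\varepsilon(\abshyp)=\Sup$. The argument is essentially a functoriality computation: push and pull commute appropriately across the diffeomorphism, so tangency, transversality and the normalisation condition $\nup(\lup)=1$ all transfer directly.

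First I would verify that $\ndown$ annihilates every tangent vector of $\Sigma_\varepsilon$. Let $\vec{X}\in T_p(\Sigma_\varepsilon)$ for $p\in\Sigma_\varepsilon$. Since $\psi_\varepsilon|_{\Sigma_\varepsilon}\colon\Sigma_\varepsilon\to\Sup$ is a diffeomorphism, the differential $d\psi_\varepsilon$ maps $T_p(\Sigma_\varepsilon)$ isomorphically onto $T_{\psi_\varepsilon(p)}(\Sup)$. Consequently
\begin{equation*}
\ndown(\vec{X}) \;=\; \bigl(\psi_\varepsilon^*\nup\bigr)(\vec{X}) \;=\; \nup\bigl(d\psi_\varepsilon(\vec{X})\bigr) \;=\; 0,
\end{equation*}
because $d\psi_\varepsilon(\vec{X})$ is tangent to $\Sup$ and $\nup$ annihilates tangent vectors there. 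This shows $\ndown$ annihilates $T(\Sigma_\varepsilon)$; the uniqueness of the normal one-form up to scaling (which is a nonzero scaling here, since $\psi_\varepsilon$ is a diffeomorphism so $\ndown\not\equiv 0$) then identifies it, up to scaling, with the normal one form of $\Sigma_\varepsilon$.

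Next I would check transversality and the chosen normalisation for $\vec{l}_\varepsilon$. Using the definitions and the chain rule,
\begin{equation*}
\ndown(\vec{l}_\varepsilon) \;=\; \bigl(\psi_\varepsilon^*\nup\bigr)\bigl(d\psi_\varepsilon^{-1}(\lup)\bigr) \;=\; \nup\bigl(d\psi_\varepsilon\circ d\psi_\varepsilon^{-1}(\lup)\bigr) \;=\; \nup(\lup) \;=\; 1,
\end{equation*}
by the assumed normalisation on $\Sup$. In particular $\ndown(\vec{l}_\varepsilon)\neq 0$, so $\vec{l}_\varepsilon$ is transverse to $\Sigma_\varepsilon$ at every point and therefore qualifies as a rigging in the sense specified in Section \ref{section_introduction_hypersurfaces}; moreover the convention $\boldsymbol{n}(\vec{l})=1$ is preserved for the pair $(\ndown,\vec{l}_\varepsilon)$.

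There is no serious obstacle: the only point requiring care is making sure the image sets match, i.e.\ that $\psi_\varepsilon$ genuinely restricts to a diffeomorphism $\Sigma_\varepsilon\to\Sup$, so that $d\psi_\varepsilon$ sends tangent vectors of $\Sigma_\varepsilon$ to tangent vectors of $\Sup$. This is immediate from the construction $\Sigma_\varepsilon=\psi_\varepsilon^{-1}(\Sup)$ together with $\psi_\varepsilon$ being a smooth diffeomorphism of the ambient manifolds, so the entire statement reduces to the two displayed computations above.
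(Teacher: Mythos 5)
Your proof is correct and follows essentially the same route as the paper's: both arguments rest on the duality $\psi_\varepsilon^*\nup\bigl(d\psi_\varepsilon^{-1}(\hat{X})\bigr)=\nup(\hat{X})$ together with the fact that $d\psi_\varepsilon$ carries $T(\Sigma_\varepsilon)$ onto $T(\Sup)$. The only cosmetic difference is that the paper packages the tangency and normalisation checks into a single computation with a constant $C$ (taking $C=0$ for tangent vectors and $C=1$ for the rigging), whereas you treat the two cases separately.
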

\proof
Assume that the action of the normal one forms on a vector field $\hat{X}_\varepsilon$ defined in points of $\Sup$ is a constant with respect to $\varepsilon$, i.e. $\nup (\hat{X}_\varepsilon) = C$. Using the differential maps associated to $\psi_\varepsilon$ we can write the previous operation involving objects in $(\mathcal M, g)$
\begin{equation*}
(\nup(\hat{X}_\varepsilon)) = C \Rightarrow \nup (d \psi_\varepsilon d\psi_\varepsilon^{-1}\hat{X}_\varepsilon) =  \psi_\varepsilon^ * \nup (d\psi_\varepsilon^{-1}\hat{X}_\varepsilon) = C  \Rightarrow \ndown (X_\varepsilon) = C,
\end{equation*} 
where  $X_\varepsilon$ is defined so that $\hat{X}_\varepsilon \equiv d\psi_\varepsilon (X_\varepsilon)$.
If we let $\hat{X}_\varepsilon$ be any vector field tangent to $\Sup$ then $C=0$, and since $\hat{\Sigma}_\varepsilon$ is the image of $\Sigma_\varepsilon$ through the diffeomorphism $\psi_\varepsilon$, tangent vectors of $\Sigma_\varepsilon$ are mapped to tangent vectors of $\hat{\Sigma}_\varepsilon$ and therefore $X_\varepsilon$ is a vector field tangent to $\Sdown$. Hence, we have that $\ndown$ is a family of normal one forms for $\Sdown$.

On the contrary, let $\hat X_\varepsilon = \lup$, normalised so that $C=1$. This same argument shows that $\ndown (\ldown) = 1$, hence $\vec{l}_\varepsilon$ is a rigging for $\Sdown$. 

\hfill $\blacksquare$

\begin{figure}[h!]
	\centering
	%%%%%%%%%%%%%%%%%%%%%%%%%%%%%%
	%%%%%%%%%%%%%%%%%%%%%%%%%%%%%%
	\includegraphics[ %scale=0.40
	width=0.8\textwidth]{./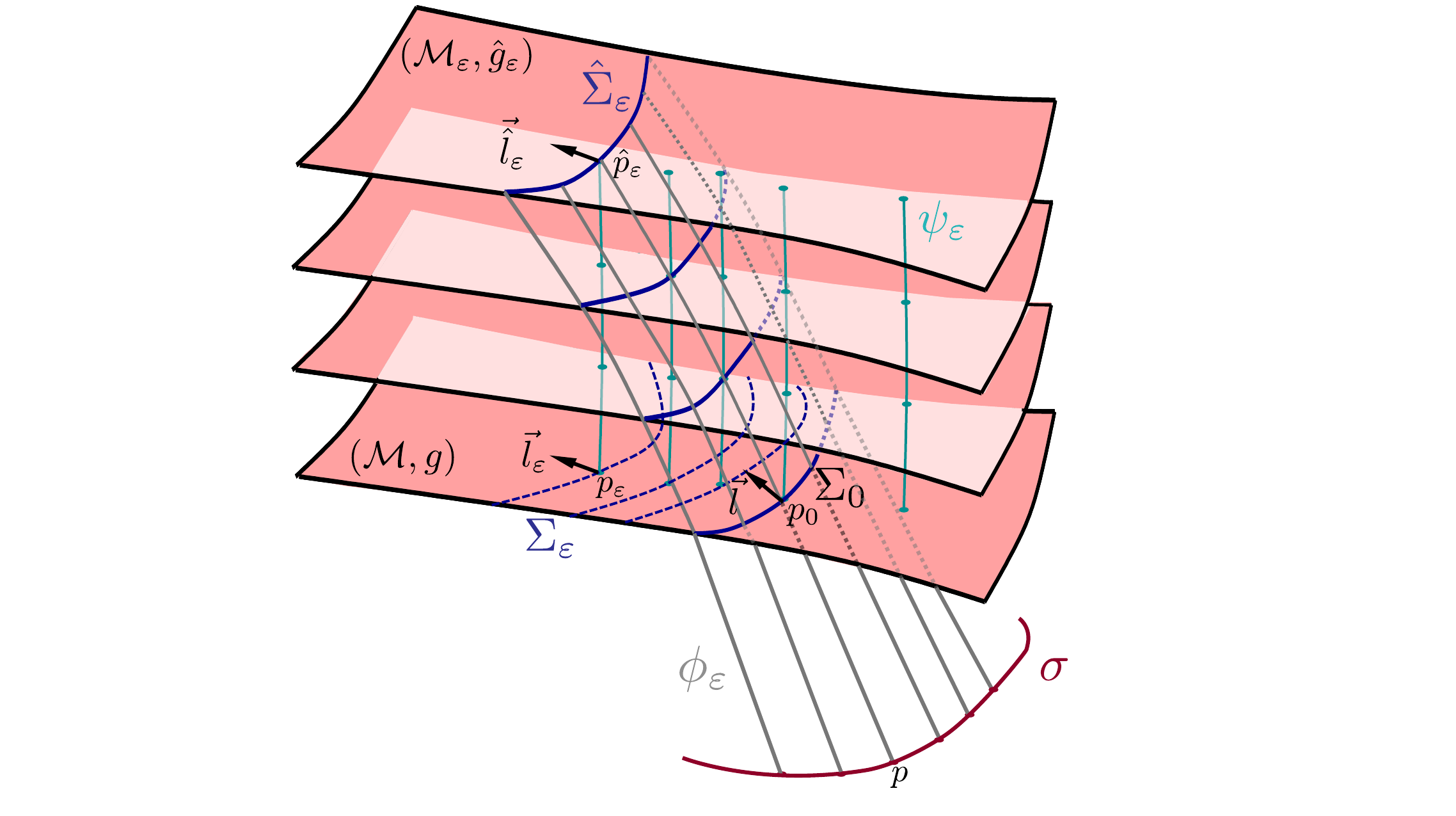}
	%%%%%%%%%%%%%%%%%%%%%%%%%%%%%%%%%
	%%%%%%%%%%%%%%%%%%%%%%%%%%%%%%%%%
	\caption{The geometric approach for the perturbation of hypersurfaces. There is a one parametric family of spacetimes $\{\mathcal{M}_\varepsilon, \hat{g}_\varepsilon \}$, diffeomorphically related through $\psi_\varepsilon$. The hypersurfaces $\hat{\Sigma}_\varepsilon$ are identified with the abstract hypersurface $\abshyp$ via $\phi_\varepsilon$. The riggings $\vec{\hat l}_\varepsilon$ on $\hat{\Sigma}_\varepsilon$ induce the vectors $\vec{l}_\varepsilon$ on $\Sigma_\varepsilon$ through the differential map $d\psi_\varepsilon$. }
	
\label{figure:per_picture_3}
\end{figure}

Having constructed these riggings, we introduce the fields $\boldsymbol{l}_\varepsilon := \Phi_\varepsilon^* g_\varepsilon (\vec{l}_\varepsilon, \cdot )$ and $l^{(2)}_\varepsilon := \Phi_\varepsilon^* g_\varepsilon (\vec{l}_\varepsilon, \vec{l}_\varepsilon )$ and $Y_\varepsilon := (1/2) \Phi_\varepsilon^ * \mathcal L_{\vec{l}_\varepsilon} g_\varepsilon$, which define the perturbations of the remaining objects in the data set
\begin{eqnarray*}
&&\delta \boldsymbol{l} := \left.\frac{d \boldsymbol{l}_\varepsilon}{d \varepsilon}\right|_{\varepsilon=0} =  \left.\frac{d }{d \varepsilon}\Phi_\varepsilon^* g_\varepsilon (\vec{l}_\varepsilon, \cdot )\right|_{\varepsilon=0}, \quad \delta l^{(2)} := \left.\frac{d l^{(2)}_\varepsilon}{d \varepsilon}\right|_{\varepsilon=0} = \left.\frac{d}{d \varepsilon}\Phi_\varepsilon^* g_\varepsilon (\vec{l}_\varepsilon, \vec{l}_\varepsilon )\right|_{\varepsilon=0}, \nonumber\\
&& \delta Y :=  \left.\frac{d Y_\varepsilon}{d \varepsilon}\right|_{\varepsilon=0} = \frac{1}{2} \frac{d}{d \varepsilon} \left.\Phi_\varepsilon^* \LL_{\vec{l}_\varepsilon} g_\varepsilon \right|_{\varepsilon = 0}.
\end{eqnarray*}

Now that we have defined the perturbations, the question is how to compute them in terms of allowed ingredients. By these we refer to (i) the background objects that generate the embedded \textit{hypersurface data} (ii) the perturbations of the metric (iii) the deformation vectors of the hypersurface, and also a perturbation of the rigging vector. The framework \cite{Mars2005} is adequate for this purpose, and the basic result that we will use is a lemma that allows the computation of $\varepsilon$ derivatives of covariant objects in $\abshyp$ in terms of $\varepsilon$ derivatives of spacetime objects.

\begin{lemma}{(adapted from Mars 2005 \cite{Mars2005})}
	Let $A_\varepsilon$ be a $C^2$ one-parameter family of covariant tensor fields on $\mathcal M$, $\Phi_\varepsilon: \abshyp \rightarrow \Sigma_\varepsilon \subset \mathcal M$ a $C^3$ family of embeddings and define $\mathcal{A}_\varepsilon := \Phi_\varepsilon^*(A_\varepsilon)$. Then
	\begin{equation}
	\left. \frac{d \mathcal A_\varepsilon}{d\varepsilon}\right|_{\varepsilon = 0} = \Phi_0^*\left(\mathcal{L}_{\vec{Z}}A_0 \right) + \Phi_0^* \left(\lim_{\varepsilon\rightarrow 0} \frac{d A_\varepsilon}{d \varepsilon}\right), \label{lemma_formula}
	\end{equation}
	where
	\begin{equation*}
	\vec{Z} = \left. \frac{d \Phi_\varepsilon}{d\varepsilon}\right|_{\varepsilon=0}.
	\end{equation*}
	\label{lemma_marc_perturbations}
\end{lemma}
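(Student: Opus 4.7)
The natural strategy is to disentangle the two distinct sources of $\varepsilon$-dependence in $\mathcal{A}_\varepsilon = \Phi_\varepsilon^*(A_\varepsilon)$: the embedding $\Phi_\varepsilon$ and the tensor field $A_\varepsilon$ vary independently, so I would introduce the auxiliary two-parameter family
\begin{equation*}
F(s,t) := \Phi_s^*(A_t),
\end{equation*}
defined for $s,t$ in a neighbourhood of $0$. Since $\mathcal{A}_\varepsilon = F(\varepsilon,\varepsilon)$ and $F$ is jointly differentiable in $(s,t)$ at the origin (by the regularity assumptions on $\Phi_\varepsilon$ and $A_\varepsilon$), the chain rule gives
\begin{equation*}
\left.\frac{d\mathcal{A}_\varepsilon}{d\varepsilon}\right|_{\varepsilon=0} = \left.\frac{\partial F}{\partial s}\right|_{(0,0)} + \left.\frac{\partial F}{\partial t}\right|_{(0,0)},
\end{equation*}
reducing the problem to evaluating each partial derivative separately.

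The second term is straightforward: holding the embedding fixed at $\Phi_0$, the pullback $\Phi_0^*$ is a fixed $\mathbb{R}$-linear map on sections, so it commutes with differentiation in $t$, yielding $\partial_t F|_{(0,0)} = \Phi_0^*(\lim_{\varepsilon\to 0} dA_\varepsilon/d\varepsilon)$. The first term is the substantive one. Here I would use the standard identification of the variation of a pullback with a Lie derivative: pick any smooth extension $\vec{V}$ of $\vec{Z}$ to an open neighbourhood $\mathcal{U}$ of $\Sigma_0$ in $\mathcal{M}$ (such an extension exists by, say, a partition-of-unity argument applied in local charts adapted to $\Sigma_0$), and let $\Psi_s$ denote its local flow. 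Then $\tilde{\Phi}_s := \Psi_s\circ \Phi_0$ is a family of embeddings with $\tilde{\Phi}_0 = \Phi_0$ and $d\tilde{\Phi}_s/ds|_{s=0} = \vec{Z}$, so $\tilde{\Phi}_s$ and $\Phi_s$ agree to first order at $s=0$. Consequently
\begin{equation*}
\left.\frac{\partial F}{\partial s}\right|_{(0,0)} = \left.\frac{d}{ds}(\Psi_s\circ\Phi_0)^*(A_0)\right|_{s=0} = \Phi_0^*\!\left(\left.\frac{d}{ds}\Psi_s^* A_0\right|_{s=0}\right) = \Phi_0^*(\mathcal{L}_{\vec{V}} A_0),
\end{equation*}
using the defining property of the Lie derivative in the last step.

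The remaining point to check, which I would flag as the main subtlety, is that the right-hand side depends only on $\vec{Z}$ along $\Sigma_0$ and not on the chosen extension $\vec{V}$. This follows from the coordinate expression $(\mathcal{L}_{\vec{V}} A)_{\mu_1\cdots\mu_r} = V^\rho\nabla_\rho A_{\mu_1\cdots\mu_r} + \sum_i A_{\cdots\rho\cdots}\nabla_{\mu_i} V^\rho$ together with the observation that pulling back via $\Phi_0^*$ contracts the free indices with $e^{\mu_i}_{a_i}$, which are tangent to $\Sigma_0$; hence only the tangential derivatives of $V^\rho$ along $\Sigma_0$ enter, and these are fixed by $\vec{Z}$. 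One should also verify that the first-order agreement between $\tilde{\Phi}_s$ and $\Phi_s$ is sufficient for their pullbacks of a fixed tensor to agree to first order at $s=0$, which is immediate from a first-order Taylor expansion of $A_0$ in local coordinates around points of $\Sigma_0$. Combining the two partial derivatives yields \eqref{lemma_formula}.
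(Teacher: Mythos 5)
Your proof is correct and follows essentially the same route as the paper: the two-parameter family $F(s,t)=\Phi_s^*(A_t)$ together with the chain rule is just a repackaging of the paper's add-and-subtract decomposition of the difference quotient into $\Phi_\varepsilon^*(A_\varepsilon)-\Phi_0^*(A_\varepsilon)$ (yielding the Lie-derivative term) and $\Phi_0^*(A_\varepsilon)-\Phi_0^*(A_0)$ (yielding the pulled-back $\varepsilon$-derivative). The details you supply --- the extension of $\vec{Z}$ to a neighbourhood, the flow argument, and the independence of the result from the chosen extension --- are precisely the ``little more work'' that the paper defers to \cite{Mars2005}.
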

The proof of this lemma highlights an important subtlety which is of considerable significance to the present paper. The proof proceeds by calculating as follows:
\begin{eqnarray}
\left.\frac{d{\mathcal{A}}_\varepsilon}{d\varepsilon}\right|_{\varepsilon=0} &=& \lim_{\varepsilon\to 0}
\frac{1}{\varepsilon}\left(\Phi_\varepsilon^*(A_\varepsilon)-\Phi_0^*(A_0)\right)\nonumber \\
&=& \lim_{\varepsilon\to 0}
\frac{1}{\varepsilon}\left(\Phi_\varepsilon^*(A_\varepsilon)-\Phi_0^*(A_\varepsilon)\right)+\lim_{\varepsilon\to 0}
\frac{1}{\varepsilon}\left(\Phi_0^*(A_\varepsilon)-\Phi_0^*(A_0)\right).\label{lemma_calculation}
\end{eqnarray}
With a little more work (see \cite{Mars2005}), it can be shown that the first (respectively, second) term on the right hand side of (\ref{lemma_calculation}) yields the first (respectively, second) term on the right hand side of (\ref{lemma_formula}). We flag two crucial points that arise in this calculation. First, the term that is added and subtracted (to ultimately yield the two terms of (\ref{lemma_formula})) involves the pull-back from $\Sigma_0$ of the quantity $A_\varepsilon$. In Lemma \ref{lemma_marc_perturbations}, no problem arises, as $A_\varepsilon$ is defined throughout ${\mathcal{M}}$, and so, in particular, is defined on  $\Sigma_0$. However, in the applications below, we will need to apply Lemma \ref{lemma_marc_perturbations} to quantities which are \textit{ab initio} defined only on each $\Sigma_\varepsilon$. Indeed this arises in \cite{Mars2005}, where the lemma is applied to the one-parameter family of second fundamental forms $\kappa_\varepsilon$. Below, we will apply the lemma to calculate the perturbation of the hypersurface data. These are associated with one-parameter families of geometric quantities (${\boldsymbol{l}}_\varepsilon, l_\varepsilon^{(2)}, Y_\varepsilon$) defined on each $\Sigma_\varepsilon$, as opposed to (for example) the one-parameter family of metrics $g_\varepsilon$, which is defined throughout ${\mathcal{M}}$ for each $\varepsilon$. This presents a fundamental problem with the application of the lemma, as the pull-back of e.g.\ $Y_\varepsilon$ from $\Sigma_0$ is not defined. 
We overcome this problem by constructing \textit{extensions} of such quantities to appropriate neighbourhoods of of $\Sigma_0$. Most importantly, we describe the extension of the one-parameter family of riggings $\vec{l}_\varepsilon$ to appropriate neighbourhoods of $\Sigma_0$: see Definition \ref{smooth-ext} and Proposition \ref{proposition_extensions_existence} below. 

The second issue that arises in the calculation that yields (\ref{lemma_formula}) from (\ref{lemma_calculation}) relates to the existence of the limit of the second term on the right hand side of (\ref{lemma_calculation}). This requires a sufficient degree of smoothness of the mapping $\varepsilon\mapsto A_\varepsilon$ at $\varepsilon=0$. In the case of quantities  defined throughout ${\cal{M}}$ (e.g.\ the one-parameter family of metrics, $g_\varepsilon$), this smoothness is an essential part of the definition: without this degree of differentiability, there is no sense of a perturbed metric. However, when calculating perturbations below, we will be working with extensions. Thus our definition of these extensions must take account of the need for this limit to exist. Naturally, this has consequences for Proposition 3.2 in which the existence of appropriately defined extensions is established.

In the case of a general hypersurface, the  application of Lemma \ref{lemma_marc_perturbations} to the objects of interest include 
%The application of Lemma \ref{lemma_formula} to certain objects of interest, including 
in particular $\vec{l}_\varepsilon$, the one-parameter family of riggings, and it requires the construction of an extension of these objects to a neighbourhood ${\cal{U}}$ of $\Sigma_0$ in ${\cal{M}}$. In the rest of this section, we prove the existence of relevant extensions for $\vec{l}_\varepsilon$. We note that a similar method could be applied to construct (for example) the extension of the one-parameter family of normal 1-forms. We need to be precise about what is meant by an extension:

\begin{definition}\label{smooth-ext} Let ${\cal{U}}$ be an open subset of ${\cal{M}}$ such that $\Sigma_\varepsilon\subset {\cal{U}}$ for all $\varepsilon\in I$, where $I$ is an interval containing 0, and for each $\varepsilon \in I$, let $\vec{l}_\varepsilon$ be a rigging on $\Sigma_\varepsilon$ (giving a one-parameter family of riggings). 
	\begin{enumerate}
		\item A \textbf{monotone path} $\mu:I\to {\cal{M}}$ is a $C^1$ curve with the property that $\mu(\varepsilon)\in\Sigma_\varepsilon$ for all $\varepsilon\in I$. 
		\item The one-parameter family of riggings is said to be \textbf{smooth} if in any local coordinate system in which the metric is $C^1$, the functions $\varepsilon\mapsto l^\alpha_\varepsilon|_{\mu(\varepsilon)}$ are $C^1$ on $I$ for all monotone paths $\mu$.
		\item Let ${\cal{U}}_{p}\subset {\cal{U}}$ be a neighbourhood of $p\in\Sigma_0$. The one-parameter family of vector fields $\vec{L}(\varepsilon,x)\in T(\cupo)$, $\varepsilon\in I$ is called an \textbf{extension to $\cupo$} of the smooth one-parameter family of riggings $\vec{l}_\varepsilon \in T(\Sigma_\varepsilon), \varepsilon\in I$ if
		\begin{itemize}
			\item[(a)] $\vec{L}(\varepsilon,x)|_{p_\varepsilon}=\vec{l}_\varepsilon|_{p_\varepsilon}$ for all $\varepsilon\in I$ and $p_\varepsilon\in\Sigma_\varepsilon\cap\cupo$ and:
			\item[(b)] For each $p_0\in\Sigma_0\cap\cupo$, there exists the limit
			\begin{equation}
			\vec{l}_1|_{p_0} = \lim_{\varepsilon\to 0}\frac{1}{\varepsilon}\left(\vec{L}(\varepsilon,x)|_{p_0}-\vec{l}_0|_{p_0}\right), \label{definition_l1}
			\end{equation}
			and this limit defines a continuous vector field on $\Sigma_0\cap\cupo$. 
		\end{itemize}
	\end{enumerate}
\end{definition}

We note that an extension of a smooth one-parameter family of riggings has this property:

\begin{proposition} Let $\cupo$ be a neighbourhood of $p\in\Sigma_0$ and let $\vec{L}(\varepsilon,x)$ be an extension to $\cupo$ of the one-parameter family of riggings $\vec{l}_\varepsilon$. Then for all $p_0\in\Sigma_0\cap\cupo$, 
	\begin{equation}
	\lim_{\varepsilon\to 0} \vec{L}(\varepsilon,x)|_{p_0} = \vec{l}_0|_{p_0}.
	\end{equation}
\end{proposition}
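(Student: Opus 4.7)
The plan is to read the statement as an immediate corollary of property (b) in Definition \ref{smooth-ext}, which asserts the existence of the finite limit $\vec{l}_1|_{p_0} = \lim_{\varepsilon\to 0}\varepsilon^{-1}(\vec{L}(\varepsilon,x)|_{p_0}-\vec{l}_0|_{p_0})$. Intuitively, the mere existence of this difference quotient as a finite limit forces the numerator to vanish in the limit, which is exactly the claim.

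First I would fix $p_0 \in \Sigma_0 \cap \cupo$ and point out that both $\vec{L}(\varepsilon,x)|_{p_0}$ and $\vec{l}_0|_{p_0}$ live in the single tangent space $T_{p_0}({\cal{M}})$, so their difference is unambiguously defined: no parallel transport or identification between tangent spaces at different points is needed. I would then introduce $\vec{D}(\varepsilon):=\vec{L}(\varepsilon,x)|_{p_0}-\vec{l}_0|_{p_0}\in T_{p_0}({\cal{M}})$, which by part (a) of Definition \ref{smooth-ext} applied at $\varepsilon=0$ satisfies $\vec{D}(0)=\vec{0}$.

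Next I would invoke part (b) to write $\vec{D}(\varepsilon) = \varepsilon\cdot\varepsilon^{-1}\vec{D}(\varepsilon)$ and take $\varepsilon\to 0$. By hypothesis, the factor $\varepsilon^{-1}\vec{D}(\varepsilon)$ tends to the finite vector $\vec{l}_1|_{p_0}$, while the prefactor $\varepsilon$ tends to $0$, so the product tends to $\vec{0}$. In components relative to any chart, this is just the standard product rule for limits applied component-wise. Rearranging yields $\lim_{\varepsilon\to 0}\vec{L}(\varepsilon,x)|_{p_0}=\vec{l}_0|_{p_0}$, as required.

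I do not anticipate any genuine obstacle: the content of the argument is the elementary fact that differentiability at a point implies continuity at that point. The only subtlety worth flagging is that the conclusion relies on $p_0$ being held fixed throughout, so that all vectors in play reside in the common tangent space $T_{p_0}({\cal{M}})$; it does \emph{not} require smoothness along a varying monotone path $\varepsilon\mapsto p_\varepsilon$ in the sense of Definition \ref{smooth-ext}(2), even though such smoothness is part of the broader framework.
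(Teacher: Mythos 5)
Your proof is correct and is precisely the argument the paper has in mind: the proposition is stated there without proof because it is the immediate ``existence of the difference quotient limit implies continuity'' consequence of Definition \ref{smooth-ext}(3b), which is exactly what you write via $\vec{D}(\varepsilon)=\varepsilon\cdot\varepsilon^{-1}\vec{D}(\varepsilon)$. Your remark that all vectors live in the single tangent space $T_{p_0}({\cal{M}})$, so no identification of tangent spaces is needed, is the right point to flag and is consistent with the paper's framework.
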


The key outstanding question in relation to extensions is that of their existence, which is settled by the following result. 

\begin{proposition} \label{proposition_extensions_existence}
	Let $\vec{l}_\varepsilon, \varepsilon\in I$ be a smooth one-parameter family of riggings on ${\cal{U}}$. Then for every $p\in\Sigma_0$, there exists an interval $I_p\subset I$ with $0\in I_p$, a neighbourhood $\cupo\subset\cu$ of $p$ and $\vec{L}(\varepsilon,x)\in T(\cupo)$, an extension to $\cupo$ of $\vec{l}_\varepsilon, \varepsilon\in I_p$. 
\end{proposition}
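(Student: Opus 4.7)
The plan is to construct $\vec{L}(\varepsilon,x)$ by introducing a local foliation transverse to $\Sigma_0$ near $p$, projecting any point $x$ in a small neighbourhood onto $\Sigma_\varepsilon$ along the leaves of this foliation, and declaring the coordinate components of $\vec{L}$ at $x$ to equal those of $\vec{l}_\varepsilon$ at the projected point. To set this up, I would choose a chart $(x^a,u)$ centred at $p$ in which $\Sigma_0=\{u=0\}$ locally and $\partial_u$ is transverse to $\Sigma_0$ at $p$. Since transversality is an open condition, after shrinking to a neighbourhood $\cupo$ and restricting $\varepsilon$ to an interval $I_p\ni 0$, $\partial_u$ is transverse to each $\Sigma_\varepsilon\cap\cupo$. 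Writing the embeddings in these coordinates as $\Phi_\varepsilon(y)=(\tilde{x}^a(\varepsilon,y),\tilde{u}(\varepsilon,y))$ and noting that $y\mapsto\tilde{x}^a(0,y)$ is a local diffeomorphism, the inverse function theorem represents each $\Sigma_\varepsilon\cap\cupo$ as a graph $\{u=f_\varepsilon(x^a)\}$ with $f_\varepsilon$ jointly smooth in $(\varepsilon,x^a)$ and $f_0\equiv 0$.

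Given this foliation, for $x=(x^a,u)\in\cupo$ I would define the transverse projection $q(\varepsilon,x):=(x^a,f_\varepsilon(x^a))\in\Sigma_\varepsilon$ and set
\begin{equation*}
L^\alpha(\varepsilon,x):=l^\alpha_\varepsilon|_{q(\varepsilon,x)}.
\end{equation*}
Property (a) of Definition \ref{smooth-ext} holds because $q(\varepsilon,p_\varepsilon)=p_\varepsilon$ whenever $p_\varepsilon\in\Sigma_\varepsilon\cap\cupo$. For property (b), I would fix $p_0=(x^a_0,0)\in\Sigma_0\cap\cupo$ and consider the curve $\mu_{p_0}(\varepsilon):=q(\varepsilon,p_0)=(x^a_0,f_\varepsilon(x^a_0))$, which satisfies $\mu_{p_0}(\varepsilon)\in\Sigma_\varepsilon$, $\mu_{p_0}(0)=p_0$, and is $C^1$ by the joint smoothness of $f_\varepsilon$, so it is a monotone path in the sense of Definition \ref{smooth-ext}(1). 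The smoothness hypothesis on the family $\vec{l}_\varepsilon$ then gives that $\varepsilon\mapsto L^\alpha(\varepsilon,p_0)=l^\alpha_\varepsilon|_{\mu_{p_0}(\varepsilon)}$ is $C^1$ on $I_p$, which is precisely the existence of the limit \eqref{definition_l1} defining $\vec{l}_1|_{p_0}$. The continuity of $p_0\mapsto\vec{l}_1|_{p_0}$ on $\Sigma_0\cap\cupo$ would follow from the joint smoothness of the parameterised family $\{\mu_{p_0}\}_{p_0}$ together with a uniform version of the smoothness hypothesis applied across this family.

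\textbf{Main obstacle.} The delicate step is precisely this last continuity: Definition \ref{smooth-ext}(2) prescribes $C^1$ regularity along each individual monotone path but does not \emph{a priori} include joint regularity across a parameterised family of such paths. Bridging this gap is the essential technical input; I expect it to require combining the smoothness of the embeddings $\Phi_\varepsilon$ (which, via the inverse function theorem, yields the joint smoothness of $f_\varepsilon$ and hence of $\mu_{p_0}$ in $(\varepsilon,p_0)$) with a uniform version of the smoothness hypothesis, so that the derivative $\partial_\varepsilon l^\alpha_\varepsilon|_{\mu_{p_0}(\varepsilon)}|_{\varepsilon=0}$ depends continuously on $p_0$. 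Once this uniformity is in place, the remainder of the argument is a direct verification using the properties of the transverse projection $q(\varepsilon,x)$.
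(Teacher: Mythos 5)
Your construction is correct but takes a genuinely different route from the paper. You foliate a chart neighbourhood of $p$ by the integral curves of a transverse coordinate vector field $\partial_u$, write each $\Sigma_\varepsilon$ locally as a graph $u=f_\varepsilon(x^a)$, and define the extension by copying the chart components of $\vec{l}_\varepsilon$ from the projected point $q(\varepsilon,x)\in\Sigma_\varepsilon$; the curves $\varepsilon\mapsto q(\varepsilon,p_0)$ are then monotone paths, so Definition~\ref{smooth-ext}(2) delivers the limit \eqref{definition_l1} directly. The paper instead uses the congruence of affinely parametrised geodesics emanating from $\Sigma_0$ with initial tangent $\vec{l}_0$, and transports $\vec{l}_\varepsilon|_{p_\varepsilon}$ back to $p_0$ by parallel propagation in a parallel tetrad; the geodesics play the role of your monotone paths, and the existence of the limit is reduced to the smoothness of the rigging family plus the smooth dependence of geodesics and parallel transport on their data. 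Your version is more elementary (no geodesics, tetrads or parallel transport) at the price of being manifestly chart-dependent, which is permissible since the proposition only asserts existence of \emph{an} extension; note though that you still owe a remark analogous to the paper's \eqref{coord-transf}, since your $L^\alpha(\varepsilon,x)$ assigns to the point $x$ components evaluated at the distinct point $q(\varepsilon,x)$, and one should check that re-expressing this in another chart perturbs the difference quotient only at order $\varepsilon$. The paper's choice of congruence has the incidental benefit, exploited later, that the resulting extension of $\vec{l}_0$ is geodesic, so $\vec{a}=\nabla_{\vec{l}}\vec{l}=0$ and $\vec{\zeta}$ reduces to $\vec{l}_1$; your extension will generically have $\vec{a}\neq 0$, which is harmless here. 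Your uniform interval $I_p$ comes from the validity of the graph representation on a fixed chart, where the paper obtains it by minimising over a compact closure $\overline{O}_p$; these are equivalent in effect.

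The obstacle you flag -- that Definition~\ref{smooth-ext}(2) gives $C^1$ regularity only along each individual monotone path, and hence does not by itself yield continuity of $p_0\mapsto\vec{l}_1|_{p_0}$ -- is real, but it is not specific to your construction: the paper's proof establishes the limit \eqref{key-limit} pointwise in $p_0$ and likewise does not explicitly verify that the resulting vector field is continuous on $\Sigma_0\cap\cupo$, relying implicitly on the smooth dependence of all ingredients (geodesics, $\tau_\varepsilon$, the dual tetrad) on the base point together with an unstated uniformity in the rigging family. So you have correctly isolated the one point at which the definition, as stated, needs to be supplemented; your honest identification of it is appropriate rather than a defect of your argument relative to the paper's.
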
 

\proof
The basic idea of the proof is as follows. We consider the congruence of geodesics emanating from $\Sigma_0$, tangent to the rigging $\vec{l}_0$ at $\Sigma_0$. Along individual members of the congruence, we parallel transport $\vec{l}_\varepsilon$ from $\Sigma_\varepsilon$ to $\Sigma_0$. The union (over the geodesic congruence) of these parallel transported vectors yields the extension $\vec{L}(\varepsilon,x)$ on an appropriate neighbourhood of $p$.

Let $p\in\Sigma_0$ and let $O_p\subset \Sigma_0$ be a bounded neighbourhood of $p$ in $\Sigma_0$ with compact closure $\overline{O}_p$.

Let $p_0\in\overline{O}_p$ and let $\gamma_0$ be the unique (affinely parametrised) geodesic with initial point $p_0$ and initial tangent $\vec{l}_0|_{p_0}$. For all $\varepsilon\in I_{p_0}$, a sufficiently small interval containing 0, the geodesic $\gamma_0:I_{p_0}\to {\cal{M}}$ is a monotone path, meeting the hypersurface $\Sigma_\varepsilon$ at a point $p_\varepsilon$ given by $p_\varepsilon=\gamma_0(\tau_\varepsilon)$. By the smoothness properties of the embedding of the $\Sigma_\varepsilon$ and of the solutions of geodesic equations, the mapping $\varepsilon\mapsto \tau_\varepsilon$ is $C^1$ on $I_{p_0}$. 

For $\tau\in[0,\tau_\varepsilon]$, we define the tangent vector $\vec{L}_\varepsilon(\tau;p_0)\in T_{\gamma_0(\tau)}({\cal{M}})$ to be the unique vector parallel transported along $\gamma_0$ and satisfying 
\begin{equation} 
\vec{L}_\varepsilon(\tau_\varepsilon;p_0)=\vec{l}_\varepsilon|_{p_\varepsilon}.
\label{extension-1}
\end{equation}
For notational ease, we will write $\vec{L}_\varepsilon(\tau_\varepsilon;p_0)=\vec{L}_\varepsilon(\tau_\varepsilon)$ for the moment.  
Now let $\vec{\tetrad}_\ta, \ta=0,1,\dots,\dimension$ be an orthonormal set of vectors, parallel transported along $\gamma_0$. Then we can write
\begin{equation}
\vec{L}_\varepsilon(\tau) = L^\ta_\varepsilon \vec{\tetrad}_\ta(\tau), \tau\in[0,\tau_\varepsilon],
\end{equation}
where $L^\ta_\varepsilon$ are $\varepsilon-$dependent constants. These constants are uniquely determined by the smooth rigging: in any local coordinate system, 
\begin{eqnarray*}
	L^\alpha_\varepsilon(\tau_\varepsilon) &=& L^\ta_\varepsilon \tetrad^\alpha_\ta(\tau_\varepsilon) \\
	&=& l^\alpha_\varepsilon|_{p_\varepsilon} \\
	&=& l^\ta_\varepsilon(p_\varepsilon) \tetrad^\alpha_\ta(\tau_\varepsilon),
\end{eqnarray*}
so that $L^\ta_\varepsilon=l^\ta_\varepsilon:= l^\ta_\varepsilon(p_\varepsilon)$, the components of $\vec{l}_\varepsilon$ in the basis $\{\vec{\tetrad}_\ta\}$ at $p_\varepsilon$. Then we can write $L^\alpha_\varepsilon(\tau)=l^\ta_\varepsilon \tetrad^\alpha_\ta(\tau)$, $\tau\in[0,\tau_\varepsilon]$, so that in particular, $L^\alpha_\varepsilon|_{p_0}=L^\alpha_\varepsilon(0)=l^\ta_\varepsilon \tetrad^\alpha_\ta(0)$ and 
\begin{equation}
\lim_{\varepsilon\to 0}\frac{1}{\varepsilon}
\left( L^\alpha_\varepsilon|_{p_0}-l^\alpha_0|_{p_0}\right)
=
\lim_{\varepsilon\to 0}\frac{1}{\varepsilon}
\left((l^\ta_\varepsilon-l^\ta_0)\tetrad^\alpha_\ta(0)\right).\label{limit-tetrad}
\end{equation}

We note that this makes the question of the existence of the limit $\vec{l}_1$ solely a question about the original one-parameter family of riggings. We note further that this \textit{does not} imply that $\vec{l}_1$ on its own is independent of the extension: it says that in the particular extension we are using has this feature \footnote{ This is consistent with the outcome of Lemma \ref{lemma_zeta_independent_extensions} about $\vec{\zeta}\equiv Q\vec{a}+\vec{l}_1$, namely that it is independent of the choice of extension. In the present construction, $\vec{a}=\nabla_{\vec{L}_0}\vec{L}_0=0$ since $\vec{L}_0$ is the initial tangent to the affinely parametrised geodesic $\gamma_0$.}.

To establish the existence of the limit of $(l^\ta_\varepsilon-l^\ta_0)/\varepsilon$, we introduce the 1-forms ${\boldsymbol{\invtetrad}}^\ta, \ta=0,1,\dots,\dimension$ dual to the parallel transported vectors $\vec{\tetrad}_\ta$, satisfying  
\begin{equation}
\tetrad^\alpha_\ta(\tau)\invtetrad^\tb_\alpha(\tau) = \delta^\tb_\ta,\tau\in[0,\tau_\varepsilon].
\end{equation}
Then the tetrad components $l^\ta_\varepsilon$ may be written as 
\begin{eqnarray}
l^\ta_\varepsilon &=& l^\ta_\varepsilon(\tau_\varepsilon) \nonumber\\
&=& \invtetrad^\ta_\alpha(\tau_\varepsilon)l^\alpha_\varepsilon|_{p_\epsilon}.
\label{main-limit}
\end{eqnarray}
Working in a fixed local coordinate system, we then have
\begin{eqnarray}
l^\ta_\varepsilon-l^\ta_0 &=& \invtetrad^\ta_\alpha(\tau_\varepsilon)l^\alpha_\varepsilon|_{p_\epsilon} - \invtetrad^\ta_\alpha(0)l^\alpha_0|_{p_0} \nonumber\\
&=&(\invtetrad^\ta_\alpha(\tau_\varepsilon)-\invtetrad^\ta_\alpha(0))l^\alpha_\varepsilon|_{p_\varepsilon} +
\invtetrad^\ta_\alpha(0)(l^\alpha_\varepsilon|_{p_\varepsilon} -l^\alpha_0|_{p_0}). \label{key-limit}
\end{eqnarray}

By smoothness (i.e.\ differentiability) of the geodesic $\tau\mapsto \gamma_0(\tau)$, of the solutions of the equations of parallel transport along $\gamma_0$ and of the one-parameter family of riggings (in the sense of Definition \ref{smooth-ext}), we see that in this local coordinate system, the limits

\begin{equation}
\lim_{\varepsilon\to0}\frac{1}{\varepsilon}(\invtetrad^\ta_\alpha(\tau_\epsilon)-\invtetrad^\ta_\alpha(0)) 
\end{equation}
and
\begin{equation}
\lim_{\varepsilon\to0} \frac{1}{\varepsilon}\left(l^\alpha_\varepsilon|_{p_\varepsilon}-l^\alpha_0|_{p_0}\right)
\end{equation}
both exist. This is sufficient to guarantee existence of the limit (\ref{key-limit}), proving the result via (\ref{main-limit}), provided that coordinate independence can be verified. This is an issue because the term $\invtetrad^\ta_\alpha(0)l^\alpha_\varepsilon|_{p_\varepsilon}$ involves a contraction of geometric objects defined at different points (without loss of generality, the local coordinate patch we work in covers both points). But under a smooth transformation of these coordinates
\begin{equation}
x^\alpha \to x^{\alpha'}=f^\alpha(x^\beta),
\end{equation}
we have
\begin{eqnarray}
\invtetrad^\ta_{\alpha'}(0)l^{\alpha'}_\varepsilon|_{p_\varepsilon} &=&
\invtetrad^\ta_{\alpha}(0)l^{\beta}_\varepsilon|_{p_\varepsilon}
\frac{\partial x^\alpha}{\partial x^{\alpha'}}|_{p_0}
\frac{\partial x^{\alpha'}}{\partial x^\beta}|_{p_\varepsilon} \nonumber\\
&=& \invtetrad^\ta_{\alpha}(0)l^{\alpha}_\varepsilon|_{p_\varepsilon} + O(\varepsilon),\label{coord-transf}
\end{eqnarray}
by virtue of the smoothness of the coordinate transformation, of the geodesic $\gamma_0$ and of the mapping $\varepsilon\to \tau_\varepsilon$. 
Thus the coordinate transformation introduces (at worse) an order-$\varepsilon$ correction, which does not affect the existence of the limit. With proof of the existence of the limit in hand, the coordinate independent nature of its value is immediate from the definition (\ref{limit-tetrad}).

Reinstituting the relevant notation, we recap and note that for each $p_0\in \overline{O}_p$, we have constructed the vectors $\vec{L}_\varepsilon(\tau;p_0)$ where $\tau\in[0,\tau_\varepsilon]$ and $\varepsilon\in I_{p_0}$. For each $p_0$, these vectors have the appropriate limiting behaviour at $\tau=\tau_\varepsilon$ (corresponding to property 3(a) of an extension in Definition \ref{smooth-ext}) and at $\tau=0$ (corresponding to 3(b)). By minimising over the compact set $\overline{O}_p$, we deduce the existence of an open interval $I_p$ containing the origin such that for all $p_0\in \overline{O}_p$, $\vec{L}_\varepsilon(\tau;p_0)$ is defined for all $\tau\in[0,\tau_\varepsilon]$ and for all $\varepsilon\in I_p$. This provides the interval whose existence is claimed in the statement of the proposition. The open neighbourhood $\cupo$ may be taken to be 
the interior of the set of points swept out by the congruence of geodesics with initial points $p_0\in O_p$ and initial tangents $\vec{l}_0|_{p_0}$, i.e.\ the set $\{\gamma_0(\tau): p_0\in O_p, \tau\in[0,\tau_\varepsilon],\varepsilon\in I_p\}$. Finally, the extension $\vec{L}(\varepsilon,x), x\in \cupo$ is defined by taking the union of the $\vec{L}(\varepsilon;p_0)$ as $p_0$ ranges over $O_p$. This ensures that the extension is defined throughout $\cupo$. 

\hfill $\blacksquare$

Nonetheless, other extensions in the sense of Definition \ref{smooth-ext} might be constructed, especially in particular settings, but we will not impose any particular one throughout our computations. We  work with generic extensions whenever these are needed for the construction of any geometrical object related to the hypersurfaces $\Sigma_\varepsilon$, taking care afterwards that these do not depend on the extensions. The limit (\ref{definition_l1}) defines the perturbation $\vec{l}_1$ as a spacetime vector at points of $\Sigma_0$, so that we consider its decomposition in the tangent basis of $\Sigma_0$, which reads $\vec{l}_1 \equiv \alpha \vec{l} + s^a \vec{e}_a$. Furthermore $\vec L(\varepsilon, x)$ defines in $\mathcal U$ an acceleration vector $\vec A(\varepsilon,x) \equiv \nabla_{\vec{L}(\varepsilon,x)}\vec{L}(\varepsilon,x)$, whose value at points of $\Sigma_0$ defines the acceleration of the rigging vector, i.e. $\vec a \equiv \nabla_{\vec{l}} \vec{l}|_{\Sigma_0} = \vec{A}(\varepsilon \rightarrow 0, x|_{\Sigma_0} )$.

In the following, we make use of Lemma \ref{lemma_marc_perturbations} to compute explicit expressions for the perturbations of the hypersurface metric data, i.e. $\{\delta h, \delta \boldsymbol{l}, {l}^{(2)}, \delta Y\}$.

\section{Perturbations of the \textit{hypersurface data}}
\label{section_perturbations_data}

The formalism introduced so far is sufficient to produce explicit expressions for the first fundamental form and rigging data perturbed to first order.

\begin{proposition} \label{proposition_deltah}
	Let $(\mathcal{M},g)$ be a ($\dimension$+1)-dimensional spacetime equipped with a $C^2$ metric and $\abshyp$ a hypersurface embedded by $\Phi: \abshyp \rightarrow \Sigma_0 \subset \mathcal{M}$, for which a transverse vector $\vec{l}$ has been specified.

	If the metric is perturbed to linear order with $g_1$ and the hypersurface with a vector field $\vec{Z} = Q \vec{l} + \vec{T}$, the first fundamental form is perturbed as
	\begin{eqnarray}
	\delta h_{ab} &=& 2QY_{ab} + 2l_{(a}\vec{e}_{b)}Q + \mathcal{L}_{\vec T{}_\abshyp} h_{ab} + g_1(\vec{e}_a, \vec{e}_b),\label{pert_fff}
	\end{eqnarray}
	%where $\tilde{a}^\alpha := \nabla_{\vec{l}}\; l^\alpha$.
	where $\vec{T}{}_{\abshyp}$ is the vector in $\abshyp$ such that $\vec T \equiv d \Phi (\vec{T}{}_\abshyp)$.
\end{proposition}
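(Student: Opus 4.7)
The plan is a direct application of Lemma \ref{lemma_marc_perturbations} to the one-parameter family $A_\varepsilon = g_\varepsilon$, for which $\mathcal{A}_\varepsilon = \Phi_\varepsilon^*(g_\varepsilon) = h_\varepsilon$. This is the most benign instance of the lemma, because $g_\varepsilon$ is defined throughout $\mathcal{M}$, so none of the extension machinery developed in Section \ref{section_constuction} for objects living only on the $\Sigma_\varepsilon$ is needed. The lemma immediately gives
\begin{equation*}
\delta h_{ab} = \Phi_0^*(\mathcal{L}_{\vec{Z}} g)_{ab} + \Phi_0^*(g_1)_{ab},
\end{equation*}
and the second term is exactly $g_1(\vec{e}_a, \vec{e}_b)$, one of the four contributions in (\ref{pert_fff}).

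For the Lie-derivative term I would split the deformation vector $\vec{Z} = Q\vec{l} + \vec{T}$ and treat the two pieces separately. For the rigged piece, the standard scaling identity for Lie derivatives of symmetric 2-tensors,
\begin{equation*}
(\mathcal{L}_{Q\vec{l}} g)_{\alpha\beta} = Q\,(\mathcal{L}_{\vec{l}} g)_{\alpha\beta} + 2\,l_{(\alpha} \nabla_{\beta)} Q,
\end{equation*}
combined with the embedded-data definition $Y_{ab} = \tfrac{1}{2}\Phi_0^*(\mathcal{L}_{\vec{l}} g)_{ab}$ and the identities $l_a = e_a^\alpha l_\alpha$, $\vec{e}_b(Q) = e_b^\beta \nabla_\beta Q$, yields $2QY_{ab} + 2\,l_{(a}\vec{e}_{b)}Q$ after pullback by $e_a^\alpha e_b^\beta$. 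Although $Q$ is \emph{a priori} only defined on $\Sigma_0$, only the tangential component of its gradient enters, so any smooth extension off $\Sigma_0$ gives the same answer and no ambiguity arises.

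For the tangential piece, $\vec{T}$ is by construction tangent to $\Sigma_0$ and is $\Phi_0$-related to $\vec{T}_{\abshyp}$ on $\abshyp$. The standard naturality of the Lie derivative under related vector fields then yields
\begin{equation*}
\Phi_0^*(\mathcal{L}_{\vec{T}} g) = \mathcal{L}_{\vec{T}_{\abshyp}} \Phi_0^*(g) = \mathcal{L}_{\vec{T}_{\abshyp}} h,
\end{equation*}
which accounts for the remaining term in (\ref{pert_fff}). Summing the three contributions from $\mathcal{L}_{\vec{Z}}g$ with $\Phi_0^*(g_1)$ gives the claimed formula. I do not foresee any substantive obstacle here: the scaling Leibniz rule and the naturality of pullback under related vector fields are elementary, and the genuine difficulties of the formalism — existence of extensions and their potential non-invariance — are deferred to the subsequent propositions on $\delta \boldsymbol{l}$, $\delta l^{(2)}$ and $\delta Y$, whose underlying geometric objects are not defined off $\Sigma_\varepsilon$.
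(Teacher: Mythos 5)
Your proposal is correct and follows essentially the same route as the paper: a direct application of Lemma \ref{lemma_marc_perturbations} to $h_\varepsilon=\Phi_\varepsilon^*(g_\varepsilon)$, followed by the decomposition of $\Phi^*(\mathcal{L}_{\vec{Z}}g)$ along the rigged and tangential parts of $\vec{Z}$. The paper simply packages the two identities you derive inline (the scaling rule for $\mathcal{L}_{Q\vec{l}}g$ and the naturality of the Lie derivative for $\Phi$-related tangent fields) into Corollary \ref{pullbackliemetric}, which it then cites.
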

\begin{proof}
	The $\varepsilon-$family of embeddings and spacetime metrics define a $\varepsilon-$family of first fundamental forms 
	\begin{equation*}
	h_\varepsilon = \Phi_\varepsilon^* \left(\left.g_\varepsilon \right|_{\Sigma_\varepsilon} \right).
	\end{equation*}
	A direct application of Lemma \ref{lemma_marc_perturbations} provides
	\begin{equation*}
	\left.\frac{d h_\varepsilon}{d \varepsilon}\right|_{\varepsilon=0} = \Phi^*(\LL_{\vec{Z}}g) + \Phi^*(g_1). \label{fff_intermediate_1}
	\end{equation*}
	Applying Corollary \ref{pullbackliemetric} to the first term, expression (\ref{pert_fff}) is obtained. 
	
	\hfill $\blacksquare$
\end{proof}

\begin{proposition} \label{proposition_deltal}
	The $\varepsilon$-family of the rigging data is given by the following one-parameter family of scalars and one forms in $\abshyp$ 
	\begin{equation}
	 l_\varepsilon^{(2)} \equiv \Phi_\varepsilon^* \left(g_\varepsilon|_{\Sigma_\varepsilon} (\vec{l}_\varepsilon,\vec{l}_\varepsilon )\right), \quad  \boldsymbol{ l}_\varepsilon \equiv \Phi_\varepsilon^* \left(g_\varepsilon|_{\Sigma_\varepsilon} (\vec{l}_\varepsilon,\cdot )\right), \label{leSigma}
	\end{equation}
	and in terms of the vector field $\vec{\zeta} := \vec{l}_1 + Q \vec{a}$, their first order perturbations read
	\begin{eqnarray}
	\delta l^{(2)} &=& 2 g(\vec{\zeta}, \vec{l})
	%2 \Phi^*(l_\alpha \zeta^\alpha) 
	+ 2 T^a (l^{(2)} \varphi_a + \Psi_a^b l_b) + g_1(\vec{l},\vec{l}),\label{l12_b}\\
	\delta l_a &=& g(\vec{\zeta}, \vec{e}_a) + 
	%\Phi^* (\boldsymbol{\zeta}) +  
	Q (l^{(2)} \varphi_a + \Psi^b_a l_b) + l^{(2)}\vec{e}_a(Q) + \mathcal{L}_{\vec{T}_\abshyp}l_a + g_1 (\vec{l},\vec{e}_a) .\label{l1a_b}
	\end{eqnarray}
\end{proposition}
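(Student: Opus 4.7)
My approach is to apply Lemma \ref{lemma_marc_perturbations} to the one-parameter families of covariant fields
\begin{equation*}
A^{(1)}_\varepsilon := g_\varepsilon(\vec{L}(\varepsilon,x),\vec{L}(\varepsilon,x)), \qquad A^{(2)}_\varepsilon := g_\varepsilon(\vec{L}(\varepsilon,x),\,\cdot\,),
\end{equation*}
defined on a common neighbourhood $\cupo$ of $\Sigma_0$ using an extension $\vec{L}(\varepsilon,x)$ of the family $\vec{l}_\varepsilon$, whose existence is guaranteed by Proposition \ref{proposition_extensions_existence}. Since $\Phi_\varepsilon^\ast A^{(1)}_\varepsilon = l^{(2)}_\varepsilon$ and $\Phi_\varepsilon^\ast A^{(2)}_\varepsilon = \boldsymbol{l}_\varepsilon$, the lemma gives
\begin{equation*}
\delta l^{(2)} = \Phi_0^\ast(\mathcal{L}_{\vec{Z}} A^{(1)}_0) + \Phi_0^\ast \left.\tfrac{d}{d\varepsilon} A^{(1)}_\varepsilon\right|_0,\qquad \delta \boldsymbol{l} = \Phi_0^\ast(\mathcal{L}_{\vec{Z}} A^{(2)}_0) + \Phi_0^\ast\left.\tfrac{d}{d\varepsilon} A^{(2)}_\varepsilon\right|_0.
\end{equation*}

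The \emph{$\varepsilon$-derivative} pieces follow immediately from metric compatibility and the definition (\ref{definition_l1}) of $\vec{l}_1$: they contribute $g_1(\vec{l},\vec{l}) + 2g(\vec{l}_1,\vec{l})$ to $\delta l^{(2)}$ and $g_1(\vec{l},\vec{e}_a) + g(\vec{l}_1,\vec{e}_a)$ to $\delta l_a$. For the Lie-derivative pieces I split $\vec{Z} = Q\vec{l} + \vec{T}$ and use metric compatibility to reduce them to contractions involving $\nabla_{\vec{Z}}\vec{L}_0|_{\Sigma_0}$, together with $\nabla_{\vec{e}_a}\vec{Z}$ in the case of $\delta \boldsymbol{l}$. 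Two observations keep the dependence on the extension under control: firstly, $\nabla_{\vec{l}}\vec{L}_0|_{\Sigma_0} = \vec{a}$ by definition of the acceleration of $\vec{L}(\varepsilon,x)$; secondly, since $\vec{T}$ is tangent to $\Sigma_0$, the derivative $\nabla_{\vec{T}}\vec{L}_0|_{\Sigma_0}$ depends only on the boundary values $\vec{L}_0|_{\Sigma_0}=\vec{l}$, so equation (\ref{nabla_normal_rigging}) can be inserted directly.

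The remaining task is algebraic. The $\vec{a}$-contributions arising from $\nabla_{\vec{l}}\vec{L}_0$ coalesce with $\vec{l}_1$ into $\vec{\zeta} = \vec{l}_1 + Q\vec{a}$, which is precisely the combination anticipated in the statement and is extension-independent. For $\delta l^{(2)}$ this closes the proof of (\ref{l12_b}) immediately, since every contribution contracts with $\vec{l}$ and the $\vec{T}$-terms already read $2T^b(\varphi_b l^{(2)} + \Psi_b^c l_c)$. For $\delta l_a$ it remains to recognise the $\vec{T}$-dependent terms as the intrinsic Lie derivative $\mathcal{L}_{\vec{T}_\abshyp} l_a$. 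I will decompose $\nabla_{\vec{e}_a}\vec{T}$ in the basis $\{\vec{l},\vec{e}_b\}$: its tangent component equals $\overline{\nabla}_a T^b$ by (\ref{nabla_nablabar}) applied to the tangent spacetime vector $\vec{T}$, whereas its rigged component equals $-\kappa_{ab}T^b$ (from $n_\mu \nabla_{\vec{e}_a}T^\mu = -T^b \kappa_{ab}$, obtained by differentiating $n_\mu T^\mu = 0$ and using (\ref{nabla_normal_rigging})). Applying (\ref{nabla_nablabar}) to the one-form $\boldsymbol{L}_0 = g(\vec{L}_0,\cdot)$ yields the identity
\begin{equation*}
\overline{\nabla}_a l_b = \varphi_a l_b + \Psi_a^c h_{cb} - l^{(2)}\kappa_{ab},
\end{equation*}
and with this identity the $\vec{T}$-dependent pieces collapse to $T^b \overline{\nabla}_b l_a + l_b \overline{\nabla}_a T^b = \mathcal{L}_{\vec{T}_\abshyp} l_a$, producing (\ref{l1a_b}).

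The main obstacle is this final algebraic assembly: one has to be methodical in tracking rigged and tangent components so that the extension-dependent spacetime quantities reorganise \emph{exactly} into $\vec{\zeta}$ together with the purely intrinsic Lie derivative $\mathcal{L}_{\vec{T}_\abshyp} l_a$, with no leftover terms. The rest is routine bookkeeping built from (\ref{nabla_normal_rigging}) and (\ref{nabla_nablabar}).
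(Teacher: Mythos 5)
Your proposal is correct and follows essentially the same route as the paper: apply Lemma \ref{lemma_marc_perturbations} to the extended families $g_\varepsilon(\vec{L},\vec{L})$ and $g_\varepsilon(\vec{L},\cdot)$, split $\vec{Z}=Q\vec{l}+\vec{T}$, and recognise that the acceleration contributions combine with $\vec{l}_1$ into $\vec{\zeta}$. The only difference is bookkeeping: the paper channels the $Q\vec{l}$-part of the Lie derivative through the identity $\mathcal{L}_{\vec{L}}\boldsymbol{L}=\boldsymbol{a}+\tfrac{1}{2}d(L_\mu L^\mu)$ and then expands $\vec{e}_a(l^{(2)})$ via (\ref{nabla_normal_rigging}), whereas you expand everything directly with metric compatibility and reassemble $\mathcal{L}_{\vec{T}_\abshyp}l_a$ from the identity for $\overline{\nabla}_a l_b$ — both computations agree.
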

\begin{remark}
	These expressions for $\delta l^{(2)}$ and $\delta l_a$ depend on the vector field $\vec\zeta$, which arises as a new ingredient inherent to the perturbative method. This raises the question whether the expressions (\ref{l12_b}) and (\ref{l1a_b}) depend on the extension of the rigging $\vec{L}(\varepsilon, x)$. The answer is negative, as follows from Lemma \ref{lemma_independent_extensions}, and it stands as a satisfying feature of this method for computing the perturbations of the \textit{hypersurface data}.  Also, the vector field $\vec \zeta$ is closely related to the freedom in the choice of rigging in the perturbative setting (see Proposition \ref{proposition_gauge_inversion} and the subsequent discussion) and therefore the perturbations can actually be computed for particular problems.
\end{remark}
\begin{proof}
	We start noting the following identity
	\begin{equation}
	\mathcal{L}_{\vec{L}} \boldsymbol{L}|_{\{\varepsilon \rightarrow 0,\Sigma_0\}} = a_\alpha + \left.\frac{1}{2} \nabla_\alpha \left(L_\mu L^\mu \right)\right|_{\{\varepsilon \rightarrow 0,\Sigma_0\}} , \label{Lll}
	\end{equation}
	where $\boldsymbol{L} \equiv g(\vec{L}, \cdot)$.
	The objects (\ref{leSigma}) can be linearised in terms of the extension $\vec{L}(\varepsilon, x)$ by direct application of Lemma \ref{lemma_marc_perturbations}. For the one form this provides
\begin{eqnarray*}
	\delta l_a &\equiv& \left.\frac{d \boldsymbol{l}_\varepsilon}{d \varepsilon}  \right|_{\varepsilon=0} = g_1(\vec{l},\vec{e}_a) + g(\vec{l}_1, \vec{e}_a) + 
	%\alpha l_a + h_{ab} s^b + 
	Q\Phi_0^* \left(\mathcal L_{\vec{L}}\mathbf{L}|_{\{\varepsilon \rightarrow 0,\Sigma_0\}} \right) + l^{(2)} \vec{e}_a (Q) + \mathcal L_{\vec{T}_{\abshyp}}l_a.
	\label{l1a}
\end{eqnarray*}
	We use the decomposition $\vec{l}_1 = \alpha \vec{l} + s^a \vec{e}_a$, in terms of its rigged and projected components,  and use expression (\ref{Lll}) for the Lie derivative of the extension of the rigging. This introduces a directional derivative along the tangential directions of the norm of the rigging that we expand using the second expression in (\ref{nabla_normal_rigging}). This leads to (\ref{l1a_b}). 
	
	We deal now in a very similar way with the scalar $\delta l^{(2)}$:
	\begin{eqnarray*}
\delta l^{(2)} &\equiv& \left.\frac{d l_\varepsilon^{(2)}}{d \varepsilon}  \right|_{\varepsilon=0} = g_1(\vec{l},\vec{l}) + 2\alpha l^{(2)} + 2s^al_a + Q \vec{l}( L_\mu L^\mu)|_{\{\varepsilon \rightarrow 0,\Sigma_0\}} + \vec{T}_{\abshyp}(l^{(2)}),\label{l12}
	\end{eqnarray*}
We use again expression (\ref{Lll}) for the derivative along the rigging of the covariant form of the extension $\vec{L}$ and expression (\ref{nabla_normal_rigging}) for the tangential derivative of $l^{(2)}$ and we obtain (\ref{l12_b}).

	\hfill $\blacksquare$
\end{proof}

\begin{remark}
	The perturbations of the hypersurface metric data do not have the properties that define the hypersurface metric data, since the matrix $\{\delta l^{(2)}, \delta l_a, \delta h_{ab} \}$ does not have Lorentzian signature in general. Moreover it depends very strongly on different sources of freedom inherent to the perturbative method, and for a particular choice of the gauges involved, this matrix becomes degenerate. This issue is very similar to the spacetime perturbations, where the tensor $g_1$ does not present the properties of a metric, in general.
\end{remark}

So far we have considered a $\varepsilon$-family of \textit{hypersurface data} $\{h_\varepsilon, \mathbf{l}_\varepsilon, l_\varepsilon^{(2)}\}$, whose linearisation through Lemma \ref{lemma_marc_perturbations} provides its first order perturbations $\{\delta h_{ab}, \delta l_a, \delta l^{(2)}\}$ in terms of the background elements (metric and hypersurface) and the perturbation vector $\vec{Z}$. In analogy with the exact case, we define a set of fields $\{P_\varepsilon, n_\varepsilon, n_\varepsilon^{(2)}\}$ on $\abshyp$ as the solutions of the $\varepsilon-$ version of the system (\ref{constraints_definitions}). These objects are defined on $\abshyp$, and therefore we can take $\varepsilon$-derivatives directly. This operation defines the fields $\{\delta P^{ab}, \delta n^a, \delta n^{(2)}\}$ as the solutions of
\begin{eqnarray}
h_{bc} \delta P^{ac} + l_b \delta n^a = -P^{ac}\delta h_{bc} - n^a \delta l_b \equiv A^a_b,\label{eqid_1}\\
l_b\delta P^{ab} + l^{(2)}\delta n^a = -P^{ab}\delta l_b - n^a \delta l^{(2)} \equiv B^a,\label{eqid_2}\\
l_a \delta n^a + l^{(2)} \delta n^{(2)} = -n^a \delta l_a - n^{(2)} \delta l^{(2)} \equiv C \label{eqid_3},\\
h_{ab} \delta n^b + l_a \delta n^{(2)} = -n^{(2)} \delta l_a - n^b \delta h_{ab} \equiv D_a \label{eqid_4}.
\end{eqnarray}
Note that $C n^a + P^{ab}D_b = n^b A^a_b + n^{(2)}B^a$.

\begin{lemma} \label{lemma_inverse_perturbed_data}
	The first order perturbations of the objects $\{P^{ab}, n^a, n^{(2)} \}$ expressed in terms of the first order objects $\{\delta h_{ab}, \delta l_a, \delta l^{(2)}\}$ and the metric data read
	\begin{eqnarray}
	\delta P^{ab} &=& -P^{bc} P^{ad} \delta h_{cd} - (n^a P^{bc} + n^b P^{ac})\delta l_c - n^a n^b \delta l^{(2)}, \label{deltaP}\\
	\delta n^a &=& - n^b P^{ac} \delta h_{bc} - n^{(2)} n^a \delta l^{(2)} - (n^{(2)}P^{ab}+n^a n^b) \delta l_b,\label{delta_na}\\
	\delta n^{(2)} &=& -n^a n^c \delta h_{ac} - n^{(2)} (n^{(2)} \delta l^{(2)} + 2n^c \delta l_c). \label{delta_n2}
	\end{eqnarray}
\end{lemma}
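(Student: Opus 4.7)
The plan is to recognise that the system \eqref{eqid_1}--\eqref{eqid_4} decouples into two subsystems, each of exactly the form solved by Lemma \ref{lemma_vector_reconstruction}. Equations \eqref{eqid_3} and \eqref{eqid_4} involve only $\delta n^a$ and $\delta n^{(2)}$, while \eqref{eqid_1} and \eqref{eqid_2}, read with the index $a$ held fixed, involve $\delta P^{ab}$ (as a vector in the index $b$) and $\delta n^a$ (as a scalar). I would treat the two subsystems in turn.

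For the first subsystem I would set $V^a := \delta n^a$ and $W := \delta n^{(2)}$, so that \eqref{eqid_4} reads $h_{ab} V^b = D_a - l_a W$ and \eqref{eqid_3} reads $l_a V^a = C - l^{(2)} W$. Lemma \ref{lemma_vector_reconstruction} then determines $V^a$ uniquely and turns its compatibility condition into a single scalar equation for $W$; combining with $n^a l_a + n^{(2)} l^{(2)} = 1$ from \eqref{constraints_definitions} fixes $\delta n^{(2)} = n^b D_b + n^{(2)} C$ and $\delta n^a = P^{ab} D_b + n^a C$. Substituting the expressions for $C$ and $D_a$ from \eqref{eqid_3}--\eqref{eqid_4} and using $P^{ab} l_b + l^{(2)} n^a = 0$ to cancel the mixed terms produces \eqref{delta_na} and \eqref{delta_n2}.

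For the second subsystem I would apply the lemma again, this time with $V^c := \delta P^{ac}$ and $W := \delta n^a$, treating $W$ as already known from the previous step. The lemma yields
\[
\delta P^{ac} = P^{cb}(A^a_b - l_b \delta n^a) + n^c(B^a - l^{(2)} \delta n^a),
\]
and the coefficient of $\delta n^a$, namely $P^{cb} l_b + n^c l^{(2)}$, vanishes by \eqref{constraints_definitions}. Expanding $A^a_b$ and $B^a$ by their definitions then delivers \eqref{deltaP}, whose symmetry in $(a,b)$ is automatic from the symmetry of $\delta h$.

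The one point that requires attention is internal consistency: the second subsystem also furnishes its own determination of $\delta n^a$ through the compatibility condition of Lemma \ref{lemma_vector_reconstruction}, namely $\delta n^a = n^b A^a_b + n^{(2)} B^a$, and this must agree with the value found from the first subsystem. This is exactly the identity $C n^a + P^{ab} D_b = n^b A^a_b + n^{(2)} B^a$ flagged just after \eqref{eqid_4}, which follows algebraically from the background relations \eqref{constraints_definitions}. Once this is noted the rest is routine index bookkeeping and no genuine obstacle arises; the main work consists of the systematic application of Lemma \ref{lemma_vector_reconstruction} followed by elimination of spurious terms via the algebraic constraints on the hypersurface metric data.
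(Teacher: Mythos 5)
Your proposal is correct and follows essentially the same route as the paper: both solve the subsystem (\ref{eqid_3})--(\ref{eqid_4}) for $\delta n^a$ via Lemma \ref{lemma_vector_reconstruction}, obtain $\delta n^{(2)}$ from the compatibility constraint, and then repeat the procedure on (\ref{eqid_1})--(\ref{eqid_2}) for $\delta P^{ab}$, with the second compatibility condition reducing to the identity $Cn^a + P^{ab}D_b = n^b A^a_b + n^{(2)}B^a$. Your write-up is in fact somewhat more explicit than the paper's about the cancellations via $P^{ab}l_b + l^{(2)}n^a = 0$ and $n^a l_a + n^{(2)}l^{(2)} = 1$, but there is no substantive difference.
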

\begin{proof}
	Start with equations (\ref{eqid_3}) and (\ref{eqid_4}) and consider them equations for $\delta n^a$, which is a vector on $\abshyp$. The first terms in both of them are $x \equiv l_a \delta n^a$ and $x_a \equiv h_{ab} \delta n^b$, and according to Lemma \ref{lemma_vector_reconstruction} define a unique vector  $\delta n^a = xn^a + x_b P^{ab}$ on $\abshyp$, provided the constraint $xn^{(2)} + n^a x_a = 0$ holds. The explicit expression for the vector field results in (\ref{delta_na}) and the constraint provides  equation (\ref{delta_n2}).
	
	The same method applied to equations (\ref{eqid_1}) and (\ref{eqid_2}) determine $\delta P^{ab}$, with the constraint being identically satisfied by $\delta n^a$ provided (\ref{delta_na}) holds.
	
	\hfill $\blacksquare$
\end{proof}

\begin{remark}
Although Lemma \ref{lemma_inverse_perturbed_data} defines the objects $\{\delta P_{ab}, \delta n^a, \delta n^{(2)}\}$ in terms of the \textit{allowed ingredients} of the perturbations, we do not claim that these correspond to the linear perturbations of some $\varepsilon$-dependent geometrical spacetime objects, since this statement would require a version of Lemma \ref{lemma_marc_perturbations} suitable for contravariant tensors.
\end{remark}

\begin{proposition} \label{proposition_deltaY}
	The first order perturbation of the rigged fundamental form defined by
	\begin{equation}
	\delta Y := \left.\frac{d}{d \varepsilon} Y_\varepsilon \right|_{\varepsilon=0} = \left.\frac{d}{d \varepsilon} \Phi^*_\varepsilon \mathcal L_{\vec{l}_\varepsilon} g_\varepsilon  \right|_{\varepsilon=0}, \label{definition_deltaY}
	\end{equation}
	in terms of the vector field  $\vec \zeta$ reads
	\begin{eqnarray}
	\delta Y_{ab} &=&\frac{1}{2}\Phi^* \left( \mathcal{L}_{\vec \zeta}  g\right) + \frac{1}{4}\left(\partial_a Q \partial_b l^{(2)} + \partial_bQ \partial_a l^{(2)}\right)+ Q \left( -R_{\alpha \gamma \beta \mu} e_a^\alpha l^\gamma  e_b^\beta l^\mu + g(\nabla_{\vec{e}_a} \vec{l}, \nabla_{\vec{e}_b} \vec{l}) \right) \nonumber\\
	&+&  \mathcal{L}_{\vec{T}_\abshyp} Y_{ab} + \frac{1}{2} \Phi^* \mathcal L_{\vec{l}}g_1  . \label{Yp_1_b}
	\end{eqnarray}

\end{proposition}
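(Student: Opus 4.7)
The strategy is to apply Lemma \ref{lemma_marc_perturbations} to the one-parameter family of spacetime $(0,2)$ tensors $A_\varepsilon := \mathcal{L}_{\vec{L}(\varepsilon,x)}g_\varepsilon$, which is well-defined on a neighbourhood of $\Sigma_0$ by Proposition \ref{proposition_extensions_existence}. Since $\frac{1}{2}\Phi_\varepsilon^*(A_\varepsilon) = Y_\varepsilon$, the lemma yields
\begin{equation*}
\delta Y = \frac{1}{2}\Phi_0^*(\mathcal{L}_{\vec{Z}}\mathcal{L}_{\vec{l}}g) + \frac{1}{2}\Phi_0^*\!\left(\lim_{\varepsilon\to 0}\frac{d A_\varepsilon}{d\varepsilon}\right).
\end{equation*}
A direct component calculation, commuting $\partial_\varepsilon$ with the Lie-derivative operator, shows that the limit equals $\mathcal{L}_{\vec{l}_1}g + \mathcal{L}_{\vec{l}}g_1$, so the $\frac{1}{2}\Phi^*\mathcal{L}_{\vec{l}}g_1$ term in (\ref{Yp_1_b}) appears immediately.

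Next I would split $\vec{Z} = Q\vec{l} + \vec{T}$. The tangential part contributes $\frac{1}{2}\Phi^*(\mathcal{L}_{\vec{T}}\mathcal{L}_{\vec{l}}g) = \mathcal{L}_{\vec{T}_\abshyp}Y$, as pull-back commutes with the Lie derivative along a vector tangent to $\Sigma_0$. For the rigged part, I would apply the symmetric $(0,2)$-tensor identity
\begin{equation*}
(\mathcal{L}_{f\vec{X}}T)_{\alpha\beta} = f(\mathcal{L}_{\vec{X}}T)_{\alpha\beta} + 2(\iota_{\vec{X}}T)_{(\alpha}\nabla_{\beta)}f,
\end{equation*}
with $f=Q$, $\vec{X}=\vec{l}$, $T=\mathcal{L}_{\vec{l}}g$. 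A short algebraic step gives $(\iota_{\vec{l}}\mathcal{L}_{\vec{l}}g)_\alpha = a_\alpha + \frac{1}{2}\nabla_\alpha l^{(2)}$, so the rigged contribution splits into a $Q \,\mathcal{L}_{\vec{l}}\mathcal{L}_{\vec{l}}g$ piece and a $\nabla Q$-weighted piece involving $a_\alpha$ and $\nabla_\alpha l^{(2)}$.

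The key reorganization uses $\vec{l}_1 = \vec{\zeta} - Q\vec{a}$ in $\mathcal{L}_{\vec{l}_1}g$: expanding $\mathcal{L}_{Q\vec{a}}g = Q\mathcal{L}_{\vec{a}}g + 2a_{(\alpha}\nabla_{\beta)}Q$, the acceleration-weighted $\nabla Q$ terms produced here cancel precisely against the $a_{(\alpha}\nabla_{\beta)}Q$ contribution coming from $(\iota_{\vec{l}}\mathcal{L}_{\vec{l}}g)_{(\alpha}\nabla_{\beta)}Q$. What survives from the $\nabla Q$-weighted part is $\nabla_{(\alpha}l^{(2)}\nabla_{\beta)}Q$, which, pulled back and multiplied by the overall factor $\frac{1}{2}$, produces the $\frac{1}{4}(\partial_a Q\partial_b l^{(2)} + \partial_b Q\partial_a l^{(2)})$ term in (\ref{Yp_1_b}). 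The remaining acceleration-free contribution assembles into $\frac{1}{2}\Phi^*\mathcal{L}_{\vec\zeta}g$, plus the residual $\frac{Q}{2}\Phi^*\bigl(\mathcal{L}_{\vec{l}}\mathcal{L}_{\vec{l}}g - \mathcal{L}_{\vec{a}}g\bigr)$.

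To finish, I would evaluate this residual $Q$-multiplied expression explicitly. Writing both Lie derivatives out via $\nabla$ and expanding $\nabla_{(\alpha}a_{\beta)} = \nabla_{(\alpha}(l^\mu\nabla_\mu l_{\beta)})$ with the Leibniz rule, the first-derivative cross-products cancel in pairs, leaving only $2(\nabla_{(\alpha}l_\mu)(\nabla_{\beta)}l^\mu)$, which pulls back to $2g(\nabla_{\vec{e}_a}\vec{l},\nabla_{\vec{e}_b}\vec{l})$. The remaining second-derivative difference assembles into the commutator $l^\mu[\nabla_\mu,\nabla_{(\alpha}]l_{\beta)}$; the Ricci identity of the Introduction then converts it into $(R_{\mu\alpha\beta\nu} + R_{\mu\beta\alpha\nu})l^\mu l^\nu$, which by the first Bianchi identity and the fact that $R_{\mu\nu\alpha\beta}l^\mu l^\nu=0$ reduces to $-2R_{\alpha\gamma\beta\mu}l^\gamma l^\mu$. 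Multiplying by $\frac{Q}{2}$ and pulling back reproduces the Riemann and quadratic terms in (\ref{Yp_1_b}). The main technical obstacle is the bookkeeping of this last step: verifying that the acceleration-dependent pieces really do drop out only after the substitution $\vec{l}_1 \to \vec\zeta - Q\vec{a}$, and that the Riemann contribution, which is \emph{a priori} only symmetric in $ab$ after symmetrization, is in fact automatically symmetric so that the Bianchi reduction closes as written.
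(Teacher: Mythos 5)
Your derivation is correct, and it reaches (\ref{Yp_1_b}) by a route that differs from the paper's main-text proof while essentially coinciding with the paper's own alternative derivation in the appendix. The paper's proof of Proposition \ref{proposition_deltaY} starts from the same master identity $2\delta Y = \Phi^*\mathcal{L}_{\vec{l}_1}g + \Phi^*\mathcal{L}_{\vec{l}}g_1 + \Phi^*\mathcal{L}_{\vec{Z}}\mathcal{L}_{\vec{l}}g$, but then \emph{commutes} the two Lie derivatives, writing $\mathcal{L}_{\vec{Z}}\mathcal{L}_{\vec{l}}g = \mathcal{L}_{\vec{l}}\mathcal{L}_{\vec{Z}}g + \mathcal{L}_{\mathcal{L}_{\vec{Z}}\vec{l}}g$, introducing the auxiliary vector $\vec{W}=\vec{l}_1+\mathcal{L}_{\vec{Z}}\vec{l}$, and expanding $\mathcal{L}_{\vec{l}}\mathcal{L}_{\vec{Z}}g$ into five separate terms whose transverse-derivative pieces are shown to recombine into $\Phi^*\mathcal{L}_{\vec{\zeta}}g$ only at the very end; this keeps the bookkeeping of which terms genuinely require data off $\Sigma_0$ explicit, at the cost of a longer computation. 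You instead split $\vec{Z}=Q\vec{l}+\vec{T}$ inside $\mathcal{L}_{\vec{Z}}\mathcal{L}_{\vec{l}}g$ directly and use the identity $\mathcal{L}_{f\vec{X}}T = f\mathcal{L}_{\vec{X}}T + 2(\iota_{\vec{X}}T)_{(\alpha}\nabla_{\beta)}f$ together with $(\iota_{\vec{l}}\mathcal{L}_{\vec{l}}g)_\alpha = a_\alpha + \tfrac{1}{2}\nabla_\alpha l^{(2)}$ (equation (\ref{Lll})), which makes the cancellation of the $a_{(\alpha}\nabla_{\beta)}Q$ terms against those generated by $\vec{l}_1=\vec{\zeta}-Q\vec{a}$ immediate and isolates the residual $\tfrac{Q}{2}(\mathcal{L}_{\vec{l}}\mathcal{L}_{\vec{l}}g-\mathcal{L}_{\vec{a}}g)$ cleanly; your evaluation of that residual (cross-terms cancelling to leave $2(\nabla_{(\alpha}l_\mu)(\nabla_{\beta)}l^\mu)$, and the commutator giving $2R_{\mu\beta\alpha\nu}l^\mu l^\nu = -2R_{\alpha\gamma\beta\mu}l^\gamma l^\mu$, which is automatically symmetric in $\alpha\beta$ by the pair symmetry) matches the paper's intermediate computation of $\Phi^*\mathcal{L}_{\vec{l}}\mathcal{L}_{\vec{l}}g$ exactly. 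This is in effect a streamlined version of the appendix computation based on Corollary \ref{corollary_doubleLie_metric} with $F_1=Q$, $F_2=1$. The one point you gloss over, as does the main proof, is that the intermediate expressions involve the extension $\vec{L}(\varepsilon,x)$; the extension-independence of the final result is not part of this proposition but is settled separately by Lemma \ref{lemma_zeta_independent_extensions}, so this is not a gap.
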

\begin{proof}
	
		A direct application of Lemma \ref{lemma_marc_perturbations} to \ref{definition_deltaY} provides the expression
	\begin{equation}
	2 \delta Y =  \Phi^* \mathcal{L}_{\vec{l}_1} g + \Phi^* \mathcal{L}_{\vec{l}} g_1 + \Phi^* \mathcal{L}_{\vec{Z}} \mathcal{L}_{\vec{l}} g. \label{mastereq}
	\end{equation}
	Since the calculations below become rather involved, we will abandon the notation for the extensions that we used for Lemma \ref{lemma_zeta_independent_extensions} or Proposition \ref{proposition_deltal} and use the symbol $\vec{l}$ instead (or $\mathbf l$ for the metrically related one form). Nonetheless, the computations should be understood in the sense of the extension $\vec{L} (\varepsilon, x)$ (and $\boldsymbol{L} (\varepsilon, x)$). 
	
	The first and third terms in (\ref{mastereq}) require the use of extensions of the objects involved, so that the next task is to perform a series of manipulations in order to express them in terms of allowed objects, i.e. background objects, the metric perturbations, the perturbations to the rigging and the deformation vectors.
	We start by commuting the Lie derivatives in the third term of (\ref{mastereq})
	\begin{equation*}
	\Phi^* \mathcal{L}_{\vec{Z}} \mathcal{L}_{\vec{l}} g = \Phi^* \mathcal{L}_{\vec{l}} \mathcal{L}_{\vec{Z}} g + \Phi^* \mathcal{L}_{\mathcal{L}_{\vec{Z}} \vec{l}}  g,
	\end{equation*}
	so that we are able to write (\ref{mastereq}) in a slightly different way, more convenient in order to manipulate the terms that depend on extensions:
	\begin{equation}
	2 \delta Y =  \Phi^* \mathcal{L}_{\vec{l}_1 + \mathcal{L}_{\vec{Z}} \vec{l}} g + \Phi^* \mathcal{L}_{\vec l} \mathcal{L}_{\vec{Z}} g +  \Phi^* \mathcal{L}_{\vec{l}} g_1. \label{mastereq2}
	\end{equation}
	We define the vector $\vec{W} := \vec{l}_1 + \mathcal{L}_{\vec{Z}} \vec{l}$, whose decomposition into the transverse and tangential components to $\Sigma_0$ is $\vec W = W l^\alpha + W^a e_a^\alpha$ with
	\begin{eqnarray}
	W &:=& n_\alpha W^\alpha = \alpha + n_\alpha \mathcal{L}_{\vec{Z}} l^\alpha = \alpha -\vec{l}(Q) + n_\alpha \mathcal{L}_{\vec{T}} l^\alpha, \nonumber\\
	W^a &:=& \omega^a_\alpha W^\alpha = s^a + \omega^a_\alpha \mathcal{L}_{\vec{Z}} l^\alpha = s^a + \omega^a_\alpha \mathcal{L}_{\vec{T}} l^\alpha. \label{Wdecomposition}
	\end{eqnarray}
	This splitting is useful to write the first term in (\ref{mastereq2}), using (\ref{Llmetric}) and (\ref{LTmetric}), as
	\begin{equation*}
	\Phi^* \mathcal{L}_{\vec W} g = 2 W Y_{ab} + l_a \partial_b W + l_b \partial_a W + \mathcal{L}_{\vec W{}_\abshyp}h_{ab}.
	\end{equation*}
	The second term in (\ref{mastereq2}) involves a second derivative, which can be expanded as 
	\begin{equation}
	\Phi^* \mathcal{L}_{\vec l} \mathcal{L}_{\vec{Z}} g = \Phi^* \lbrace \vec{l}(Q) \mathcal L_{\vec{l}}g + Q \mathcal{L}_{\vec l} \mathcal{L}_{\vec l} g + ( \mathcal{ L}_{\vec l} dQ \otimes \mathbf{l} + \mathbf{l} \otimes \mathcal{L}_{\vec l} dQ) + (dQ \otimes \mathcal{L}_{\vec{l}} \mathbf{l} + \mathcal{L}_{\vec{l}} \mathbf{l}  \otimes dQ) + \mathcal{L}_{\vec l} \mathcal{L}_{\vec T}g\rbrace. \label{LlLZg}
	\end{equation}
	We need to analyse every single term in the expression above, since all of them contain transverse derivatives.
	The first one is straightforward
	\begin{equation*}
	\Phi^*  \vec{l}(Q) \mathcal L_{\vec{l}}g  = 2\vec{l}(Q) Y_{ab},
	\end{equation*}
	whereas for the second one we expand the first Lie derivative as a covariant derivative and commute it with the second Lie derivative, using formula (\ref{commute_Lie_cov}), which results into
	\begin{eqnarray*}
	\Phi^* \mathcal{L}_{\vec l} \mathcal{L}_{\vec l} g &=& \Phi^* \mathcal{L}_{\vec l} (\nabla_\alpha l_\beta + symm \; \; \alpha \leftrightarrow \beta)\\
	&=& e_a^\alpha e_b^\beta (\nabla_\alpha \mathcal{L}_{\vec l} l_\beta + \nabla_\beta \mathcal{L}_{\vec l} l_\alpha) + e_a^\alpha e_b^\beta (R_{\mu \alpha \beta}^{\phantom{\mu \alpha \beta}\rho} +R_{\mu \beta \alpha }^{\phantom{\mu \beta \alpha}\rho} ) l^\mu - e_a^\alpha e_b^\beta (\nabla_\alpha \nabla_\beta l^\rho + \nabla_\beta\nabla_\alpha  l^\rho)l_\rho \nonumber\\
	&=&( \Phi^* \mathcal L_{\nabla_{\vec{l}} \vec{l}} g + e_a^\alpha e_b^\beta \nabla_\alpha \nabla_\beta (l_\mu l^\mu)) + 2R_{\mu \beta \alpha \rho}l^\rho l^\mu e_a^\alpha e_b^\beta - e_a^\alpha e_b^\beta(\nabla_\alpha \nabla_\beta (l_\mu l^\mu) - 2 \nabla_\alpha l^\rho \nabla_\beta l_\rho)\nonumber\\
	&=&  \Phi^* \mathcal L_{\nabla_{\vec{l}} \vec{l}} g  + 2R_{\mu \beta \alpha \rho}l^\rho l^\mu e_a^\alpha e_b^\beta + 2 g(\nabla_{\vec{e}_a} \vec{l}, \nabla_{\vec{e}_b} \vec{l}).\nonumber
	\end{eqnarray*}
	The third term in (\ref{LlLZg}) can be arranged taking into account that the Lie derivative and the exterior derivative commute. Hence we have that
	\begin{eqnarray*}
	&& \Phi^* ( \mathcal{ L}_{\vec l} dQ \otimes \mathbf{l} + \mathbf{l} \otimes \mathcal{L}_{\vec l} dQ) = \Phi^* ( d\mathcal{ L}_{\vec l} Q \otimes \mathbf{l} + \mathbf{l} \otimes d\mathcal{L}_{\vec l} Q) = l_a \partial_b \vec{l}(Q) + l_b \partial_a \vec{l}(Q).
	\end{eqnarray*}
	It is convenient to write the fourth term in (\ref{LlLZg}) in a way that makes the acceleration of the rigging explicit, which we do using relation (\ref{Lll})
	\begin{equation*}
	\Phi^* (dQ \otimes \mathcal{L}_{\vec{l}} \mathbf{l} + \mathcal{L}_{\vec{l}} \mathbf{l}  \otimes dQ) = a_b\partial_a Q + a_a\partial_b Q + \frac{1}{2} (\partial_a l^{(2)} \partial_b Q + \partial_b l^{(2)} \partial_a Q ).
	\end{equation*}
	Finally, we commute the derivatives in the fifth term 
	\begin{equation*}
	\Phi^* \mathcal L_{\vec l} \mathcal L_{\vec T}g = \Phi^* \mathcal L_{\vec T} \mathcal L_{\vec l}g + \Phi^* \mathcal L_{[\vec l, \vec T]} g,
	\end{equation*}
	and use the decomposition into the rigged and tangent parts of the vector $[\vec{l}, \vec{T}] \equiv B \vec{l} + B^a \vec{e}_a$ 
	\begin{eqnarray}
	B: = n_\alpha [\vec{l}, \vec{T}]^\alpha = -n_\alpha [\vec{T}, \vec{l}]^\alpha = -n_\alpha \mathcal L_{\vec{T}}l^\alpha, \quad B^a := \omega^a_\alpha  [\vec{l}, \vec{T}]^\alpha = -\omega^a_\alpha [\vec{T}, \vec{l}]^\alpha = - \omega^a_\alpha  \mathcal L_{\vec{T}}l^\alpha, \label{Bdecomposition}
	\end{eqnarray}
	which renders this term as 
	\begin{eqnarray*}
	\Phi^* \mathcal L_{\vec l} \mathcal L_{\vec T}g =  2\mathcal L_{\vec T{}_\abshyp} Y_{ab} + 2B Y_{ab} + l_a \partial_b B + l_b \partial_a B+  \mathcal L_{\vec B{}_\abshyp} h_{ab}  .
	\end{eqnarray*}
	Finally, we gather all these expressions and use them to write (\ref{LlLZg}) as
	\begin{eqnarray*}
	\Phi^* \mathcal{L}_{\vec l} \mathcal{L}_{\vec{Z}} g &=& 2(\vec{l}(Q) + B) Y_{ab} + Q\Phi^* \mathcal L_{\nabla_{\vec{l}} \vec{l}} g  + 2QR_{\mu \beta \alpha \rho}l^\rho l^\mu e_a^\alpha e_b^\beta + 2Q g(\nabla_{\vec{e}_a} \vec{l}, \nabla_{\vec{e}_b} \vec{l})\nonumber\\
	&+& l_a \partial_b (B + \vec{l}(Q)) + l_b \partial_b (B+\vec{l}(Q)) + a_b\partial_a Q + a_a\partial_b Q + \frac{1}{2} (\partial_a l^{(2)} \partial_b Q + \partial_b l^{(2)} \partial_a Q ) \nonumber\\
	&+&  2\mathcal L_{\vec T{}_\abshyp} Y_{ab} + \mathcal L_{\vec B{}_\abshyp} h_{ab},
	\end{eqnarray*}
	which taken into (\ref{mastereq2}) provides the expression
	\begin{eqnarray}
	2 \delta Y_{ab} &=&  2QR_{\mu \beta \alpha \rho}l^\rho l^\mu e_a^\alpha e_b^\beta + 2 Qg(\nabla_{\vec{e}_a} \vec{l}, \nabla_{\vec{e}_b} \vec{l}) + \frac{1}{2} (\partial_a l^{(2)} \partial_b Q + \partial_b l^{(2)} \partial_a Q ) +  2\mathcal L_{\vec T{}_\abshyp} Y_{ab}  +   \Phi^* \mathcal{L}_{\vec{l}} g_{1}\nonumber\\
	&+&\left \lbrace 2(W+\vec{l}(Q) + B) Y_{ab} + Q\Phi^* \mathcal L_{\nabla_{\vec{l}} \vec{l}} g +  l_a \partial_b (W+ B + \vec{l}(Q)) + l_b \partial_a (W+ B+\vec{l}(Q)) \right.\nonumber\\
	&+&\left. a_b\partial_a Q + a_a\partial_b Q +  \mathcal L_{\vec W_{\abshyp} + \vec B{}_\abshyp} h_{ab}\right \rbrace. \label{mastereq3}
	\end{eqnarray}
	We have grouped together inside the braces the terms that require transverse derivatives to be computed. Two relevant combinations show up
	\begin{eqnarray*}
	&& W + \vec{l}(Q) + B = (\alpha -\vec{l}(Q) + n_\mu \mathcal{L}_{\vec{T}} l^\mu) + \vec{l}(Q) - n_\mu \mathcal L_{\vec{T}}l^\mu  = \alpha,\\
	&&\vec W_{\abshyp} + \vec B{}_\abshyp = ( s^a + \omega^a_\mu \mathcal{L}_{\vec{T}} l^\mu)  - \omega^a_\mu  \mathcal L_{\vec{T}}l^\mu = \vec{s}{}_\abshyp,
	\end{eqnarray*}
	where we used the explicit expressions for $\vec{W}$ and $\vec{B}$ given in (\ref{Wdecomposition}) and (\ref{Bdecomposition}). The term in braces in (\ref{mastereq3}) is thus
	\begin{eqnarray*}
	&&\left \lbrace 2(W+\vec{l}(Q) + B) Y_{ab} + Q\Phi^* \mathcal L_{\nabla_{\vec{l}} \vec{l}} g +  l_a \partial_b (W+ B + \vec{l}(Q)) + l_b \partial_a (W+ B+\vec{l}(Q)) \right.\nonumber\\
	&+&\left. a_b\partial_a Q + a_a\partial_b Q +  \mathcal L_{\vec W_{\abshyp} + \vec B{}_\abshyp} h_{ab}\right \rbrace	= 2 \alpha Y_{ab} +l_a \partial_b \alpha + l_b \partial_a \alpha +  \mathcal L_{\vec{s}{}_\abshyp} h_{ab}
	+ Q \Phi^* \mathcal L_{\vec{a}}g + a_b\partial_a Q + a_a\partial_b Q\nonumber\\
	&=& \Phi^* \mathcal L_{\vec{l}_1} g + \Phi^* \mathcal L_{Q\vec{a}} g =  \Phi^* \mathcal L_{\vec{\zeta}} g.
	\end{eqnarray*}
	
	\hfill $\blacksquare$
\end{proof}

The only dependence of $\delta Y$ with the extension of the rigging is encoded in the first term of expression (\ref{Yp_1_b}) which involves $\vec \zeta$.

\section{Freedom in the method}
\label{section_freedom}

There are four sources of freedom inherent to the method of perturbing general hypersurfaces, namely (i) the spacetime gauge (ii) the hypersurface gauge (iii) a local extension of the vector $\vec{l}_\varepsilon$ to an open neighbourhood $\mathcal U$ (iv) the nonuniqueness of the rigging vectors $\vec{l}_\varepsilon$ at $\Sigma_\varepsilon$. 

\subsection{Inherent freedom in perturbation theory}
\label{subsection_inherent_gauge_freedom}
The inherent degrees of freedom corresponding to points (i) and (ii) were already discussed in \cite{Mars2005}, and our treatment for general hypersurfaces is completely analogous. We include a brief summary next for completeness.

The spacetime gauge freedom arises from the non-uniqueness of the diffeomorphism $\psi_\varepsilon$ used to identify the different $\mathcal M_\varepsilon$ among themselves. In fact, a different choice, say $\psi_\varepsilon^{(g)}$, can be seen as the composition of the old identification with a $\varepsilon$-dependent diffeomorphism in the background, i.e. $\Omega_\varepsilon: \mathcal M \rightarrow \mathcal M$. Thus the new identification is $\psi_\varepsilon^{(g)} = \psi_\varepsilon \circ \Omega_\varepsilon$ and it induces a new family of metrics $g_\varepsilon^{(g)} = \psi^{(g)}_\varepsilon{}^*(\hat{g}_\varepsilon) = \Omega_\varepsilon^*(g_\varepsilon)$. In terms of the spacetime gauge vector field $\vec{s}_1 \equiv \partial_\varepsilon \Omega_\varepsilon|_{\varepsilon=0}$, the metric perturbations are related by $g_1 ' = g_1 + \LL_{\vec{s}_1} g$.
%\begin{equation}
%g_1 ' = g_1 + \LL_{\vec{s}_1} g.
%\end{equation}
However, the geometrical objects that characterize the geometry of the hypersurfaces $\Sigma_\varepsilon$ are independent of the spacetime gauge by construction. 

The hypersurface gauge arises from taking a $\varepsilon$-dependent diffeomorphism $\chi_\varepsilon$ in $\abshyp$ previous to the identification with the $\hat \Sigma_\varepsilon$. This generates a new diffeomorphism $\phi_\varepsilon^{(h)} = \phi_\varepsilon \circ \chi_\varepsilon$, with $\chi_0$ being the identity transformation in $\abshyp$. In turn, the embeddings change as follows $\Phi_\varepsilon^{(h)} = \psi_\varepsilon \circ \phi_\varepsilon^{(h)} = \psi_\varepsilon \circ \phi_\varepsilon \circ \chi_\varepsilon = \Phi_\varepsilon \circ \chi_\varepsilon$
The new set of one-parameter fields in $\abshyp$ is given thus by
\begin{eqnarray*}
h^{(h)}_\varepsilon &\equiv& \Phi^{(h)}_\varepsilon{}^* g_\varepsilon = \chi_\varepsilon^* ( \Phi_\varepsilon^* g_\varepsilon) = \chi_\varepsilon^* h_\varepsilon, \\
{\boldsymbol{l}}_\varepsilon^{(h)}  &\equiv& \Phi^{(h)}_\varepsilon{}^* g_\varepsilon(\vec{l}_\varepsilon, \cdot) = \chi_\varepsilon^*  {\boldsymbol{l}}_\varepsilon, \\
{l}_\varepsilon^{(2)}{}^{(h)}  &\equiv& \Phi^{(h)}_\varepsilon{}^* g_\varepsilon(\vec{l}_\varepsilon, \vec{l}_\varepsilon) = \chi_\varepsilon^*  {l}_\varepsilon^{(2)},\\
Y^{(h)}_\varepsilon &=& \chi_\varepsilon^* Y_\varepsilon.
\end{eqnarray*}

We apply Lemma 1 from \cite{Mars2005}, which is just a general version of Lemma  \ref{lemma_marc_perturbations} that rather than starting from a hypersurface $\abshyp$ and an ambient spacetime $\mathcal M$, it applies to a pair of differentiable manifolds $\mathcal N$ and $\mathcal M$ endowed with a differential map $\chi_\varepsilon: \mathcal N \rightarrow M$. We set $\mathcal N = \mathcal M = \abshyp$, so that the diffeomorphism that makes the diagram (6) from \cite{Mars2005} commutative is $\Psi_h^\varepsilon = \chi_{\varepsilon +h} \circ \chi_{\varepsilon}^{-1}$. As long as the diffeomorphism $\chi_0= \mathbb{I}$ in $\abshyp$, we have that
\begin{equation*}
\left. \frac{\partial \Psi^\varepsilon_h}{\partial h}\right|_{\varepsilon=h=0} = \left. \frac{\partial \chi_\varepsilon}{\partial \varepsilon}\right|_{\varepsilon=0} \equiv \vec{u}_\abshyp.
\end{equation*}
This vector field in $\abshyp$ is called the hypersurface gauge vector. The direct application of Lemma 1 from \cite{Mars2005} with $\mathcal M = \mathcal N$ and $\vec{u}_\abshyp \equiv \left.\frac{\partial \chi_\varepsilon}{\partial \varepsilon}\right|_{\varepsilon = 0}$ provides
\begin{eqnarray}
\delta h_{ab}{}^{(h)} = \delta h_{ab} + \LL_{\vec{u}_\abshyp} h_{ab}, \quad \delta l_a^{(h)} = \delta l_a + \LL_{\vec{u}_\abshyp}l_a, \quad \delta l^{(2)}{}^{(h)} = \delta l^{(2)} + \LL_{\vec{u}_\abshyp}l^{(2)}, \quad \delta Y_{ab}^{(h)} = \delta Y_{ab} + \LL_{\vec{u}_\abshyp} Y_{ab}. \nonumber\\ \label{hypersurface_gauge_transformations}
\end{eqnarray}
Moreover, the perturbation vector $\vec{Z}$ is affected by this gauge freedom.
\begin{proposition}{\bf (Mars 2005 \cite{Mars2005})}
	Under a hypersurface gauge transformation on $\abshyp$ defined by a gauge vector $\vec{u}_\abshyp$ to first order, the deformation vector $\vec{Z}$ 
	%at any point $p\in \Sigma_0$ 
	transforms as $Q^{(h)} \rightarrow Q$ and $\vec{T}_\abshyp^{(h)} \rightarrow \vec{T}_\abshyp + \vec{u}_\abshyp$
%	\vec{Z}{}^{(h)} = \vec{Z} + \vec{u}$. 
\end{proposition}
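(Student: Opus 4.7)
The plan is to compute $\vec{Z}^{(h)} := \partial_\varepsilon \Phi_\varepsilon^{(h)}|_{\varepsilon=0}$ directly from the definition $\Phi_\varepsilon^{(h)} = \Phi_\varepsilon \circ \chi_\varepsilon$ given in the preceding discussion, and then to read off the decomposition $\vec{Z}^{(h)} = Q^{(h)} \vec{l} + \vec{T}^{(h)}$. Since $\chi_0 = \mathbb{I}_\abshyp$, we have $\Phi_0^{(h)} = \Phi_0$, so the background embedding is unchanged and the splitting into transverse and tangential parts is being carried out with respect to the same $\Sigma_0$ and the same rigging $\vec{l}$ as in the original gauge.

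The key step is just the chain rule. Writing $\varepsilon \mapsto \Phi_\varepsilon \circ \chi_\varepsilon$ as a composition and differentiating at $\varepsilon = 0$, I would obtain
\begin{equation*}
\vec{Z}^{(h)} \;=\; \left.\frac{d}{d\varepsilon}\Phi_\varepsilon\right|_{\varepsilon=0} \,+\, d\Phi_0\!\left(\left.\frac{d \chi_\varepsilon}{d\varepsilon}\right|_{\varepsilon=0}\right) \;=\; \vec{Z} + d\Phi_0(\vec{u}_\abshyp),
\end{equation*}
using that $\chi_0$ is the identity (so the derivative of $\Phi_\varepsilon\circ\chi_0$ is just $\partial_\varepsilon\Phi_\varepsilon|_0$) and the definition of $\vec{u}_\abshyp$ given just above the statement.

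The last step is to decompose this identity against the basis $\{\vec{l},\vec{e}_a\}$ at $\Sigma_0$. By construction $d\Phi_0(\vec{u}_\abshyp)$ lies entirely in the tangential subspace of $T(\Sigma_0)$, so its rigged component vanishes and it contributes only to the tangential part. Comparing with $\vec{Z} = Q \vec{l} + \vec{T}$ and $\vec{Z}^{(h)} = Q^{(h)}\vec{l} + \vec{T}^{(h)}$, I would therefore conclude $Q^{(h)} = Q$ and $\vec{T}^{(h)} = \vec{T} + d\Phi_0(\vec{u}_\abshyp)$, which after pulling back through $d\Phi_0^{-1}$ becomes $\vec{T}^{(h)}_\abshyp = \vec{T}_\abshyp + \vec{u}_\abshyp$, as claimed.

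There is essentially no obstacle: the only potentially subtle point is making sure that the rigging family $\vec{l}_\varepsilon$ and the identifications $\psi_\varepsilon$ play no role, i.e.\ that the hypersurface gauge really acts only on the $\abshyp$-side. This is immediate from $\Phi_\varepsilon^{(h)} = \Phi_\varepsilon \circ \chi_\varepsilon$, since $\psi_\varepsilon$ and $\vec{l}_\varepsilon$ are unchanged by the transformation, and $\Phi_0$ being unchanged means we are free to use the background rigging $\vec{l}$ to define the rigged/tangential decomposition both before and after.
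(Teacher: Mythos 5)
Your argument is correct: the paper itself states this proposition without proof (deferring to Mars 2005), and the chain-rule computation you give --- differentiating $\Phi_\varepsilon^{(h)}=\Phi_\varepsilon\circ\chi_\varepsilon$ at $\varepsilon=0$ using $\chi_0=\mathbb{I}$ to get $\vec{Z}^{(h)}=\vec{Z}+d\Phi_0(\vec{u}_\abshyp)$, then noting that $d\Phi_0(\vec{u}_\abshyp)$ is purely tangential so only $\vec{T}$ is shifted --- is precisely the standard argument underlying the cited result. You also correctly flag the one point worth flagging, namely that $\Phi_0^{(h)}=\Phi_0$ guarantees the same background hypersurface and rigging are used for the transverse/tangential split in both gauges.
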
 
Note that in the first order perturbations, a hypersurface gauge transformation affects only the tangential component $\vec{T}$ of $\vec{Z}$.

\subsection{Freedom associated to the rigging vector}

We start with the issue of the extensions of $\vec{l}_\varepsilon$ to $\mathcal U$. These appear manifestly through the vector field $\vec{\zeta}$ in the expressions for $\{\delta l^{(2)}, \delta l_a, \delta Y_{ab} \}$ in Propositions \ref{proposition_deltal} and \ref{proposition_deltaY}. Hence, the task here is to characterize the dependence of this vector field on the extensions. A different extension shall be denoted by a tilde, i.e. $\vec{\widetilde{L}}$, and the same rule applies for other quantities that depend on the extensions. 
\begin{lemma}
	The vector field $\vec \zeta := Q \vec{a} + \vec{l}_1$ is independent of the extension of the rigging. \label{lemma_zeta_independent_extensions}
\end{lemma}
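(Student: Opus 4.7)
The plan is to isolate the extension-dependent and extension-independent contributions to $\vec{l}_1$ by comparing the evaluation of a generic extension $\vec{L}(\varepsilon,x)$ at the background point $p_0\in\Sigma_0$ with its value at the nearby point $p_\varepsilon:=\Phi_\varepsilon\circ\Phi_0^{-1}(p_0)\in\Sigma_\varepsilon$. Property 3(a) of Definition \ref{smooth-ext} forces every extension to coincide with the intrinsic rigging at $p_\varepsilon$, namely
\[
\vec{L}(\varepsilon,x)|_{p_\varepsilon} \;=\; \vec{l}_\varepsilon|_{p_\varepsilon} \;=\; \vec{\widetilde{L}}(\varepsilon,x)|_{p_\varepsilon}
\]
for any two admissible extensions $\vec{L}$ and $\vec{\widetilde{L}}$. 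Moreover, by the very definition of the deformation vector, $p_\varepsilon = p_0 + \varepsilon\vec{Z}|_{p_0}+o(\varepsilon)$ with $\vec{Z}=Q\vec{l}+\vec{T}$.

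First I would perform a covariant Taylor expansion in a local chart along the monotone path $\varepsilon\mapsto p_\varepsilon$ to obtain
\[
\vec{L}(\varepsilon,x)|_{p_0} \;=\; \vec{l}_\varepsilon|_{p_\varepsilon}\; -\; \varepsilon\,\nabla_{\vec{Z}}\vec{L}(0,x)|_{p_0}\; +\; o(\varepsilon).
\]
I would then split the directional derivative as $\nabla_{\vec{Z}}\vec{L}(0,x) = Q\,\nabla_{\vec{l}_0}\vec{L}(0,x) + \nabla_{\vec{T}}\vec{L}(0,x)$, observing that $\vec{T}$ is tangent to $\Sigma_0$ and every extension satisfies $\vec{L}(0,x)|_{\Sigma_0}=\vec{l}_0$, so the tangential piece reduces to $\nabla_{\vec{T}}\vec{l}_0|_{p_0}$ and is common to all extensions. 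The transverse piece is precisely the acceleration $\vec{a}=\nabla_{\vec{l}_0}\vec{L}(0,x)|_{\Sigma_0}$, which is the only place the choice of extension enters. Substituting into (\ref{definition_l1}) and using $\vec{L}(0,x)|_{p_0}=\vec{l}_0|_{p_0}$ yields
\[
\vec{l}_1|_{p_0} + Q\,\vec{a}|_{p_0} \;=\; \lim_{\varepsilon\to 0}\frac{1}{\varepsilon}\bigl(\vec{l}_\varepsilon|_{p_\varepsilon} - \vec{l}_0|_{p_0}\bigr)\; -\; \nabla_{\vec{T}}\vec{l}_0|_{p_0}.
\]

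The right-hand side refers only to the intrinsic one-parameter family $\vec{l}_\varepsilon$ on the embedded $\Sigma_\varepsilon$, the background rigging $\vec{l}_0$, the embeddings $\Phi_\varepsilon$ (which single out $p_\varepsilon$) and the tangential part of the deformation vector; none of these ingredients depends on which extension has been fixed. The same identity therefore holds verbatim with $(\vec{l}_1,\vec{a})$ replaced by $(\vec{\widetilde{l}}_1,\vec{\widetilde{a}})$ coming from any other extension $\vec{\widetilde{L}}$, and equating the two expressions gives $\vec{\zeta}=Q\vec{a}+\vec{l}_1 = Q\vec{\widetilde{a}}+\vec{\widetilde{l}}_1$.

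The main technical subtlety will be justifying the covariant Taylor expansion with a genuine $o(\varepsilon)$ remainder; this is where I would invoke the $C^1$ smoothness of the extension encoded in Definition \ref{smooth-ext} together with the $C^1$ regularity of $\varepsilon\mapsto \tau_\varepsilon$ established in the proof of Proposition \ref{proposition_extensions_existence}. An equivalent and perhaps cleaner route is to work directly with the difference field $\vec{D}(\varepsilon,x):=\vec{L}(\varepsilon,x)-\vec{\widetilde{L}}(\varepsilon,x)$, which vanishes identically on each $\Sigma_\varepsilon$; a first order expansion of $\vec{D}$ about $p_0$ then reads $\vec{\widetilde{l}}_1-\vec{l}_1 = Q(\vec{\widetilde{a}}-\vec{a})$, which is the content of the lemma in its sharpest form.
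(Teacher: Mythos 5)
Your argument is correct in substance and takes a genuinely different route from the paper's. The paper's proof works with the difference field $\vec{k}=\vec{\widetilde{L}}-\vec{L}$ of two extensions, uses the fact that it vanishes on every $\Sigma_\varepsilon$ to conclude that the one-parameter families $\Phi_\varepsilon^* g_\varepsilon(\vec k_\varepsilon,\cdot)$ and $\Phi_\varepsilon^* g_\varepsilon(\vec k_\varepsilon,\vec l_\varepsilon)$ are identically zero, differentiates them via Lemma \ref{lemma_marc_perturbations} to show that the putative change $\vec v=Q(\nabla_{\vec l}\vec k)+\vec k_1$ of $\vec\zeta$ is $g$-orthogonal to all of $\vec e_a$ and $\vec l$, and then invokes the non-degeneracy (Lorentzian signature) of the hypersurface metric data matrix to force $\vec v=0$. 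You instead Taylor-expand a single extension along the monotone path $\varepsilon\mapsto p_\varepsilon$ and land on an explicit identity expressing $\vec\zeta=Q\vec a+\vec l_1$ purely through the intrinsic family $\vec l_\varepsilon$, the embeddings and $\nabla_{\vec T}\vec l_0$, so that extension-independence is manifest; this is more elementary (no appeal to Lemma \ref{lemma_marc_perturbations} or to the signature of the data matrix) and also explains \emph{why} the particular combination $Q\vec a+\vec l_1$ is the distinguished one, whereas the paper's route stays inside its already-established machinery and avoids the chart-dependent bookkeeping of comparing vectors at $p_0$ and $p_\varepsilon$ (in your version the passage from the coordinate to the covariant expansion introduces Christoffel corrections, but these involve only $\vec l_0|_{p_0}$ and so do not spoil extension-independence). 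The one genuine caveat is the one you flag yourself: the $o(\varepsilon)$ remainder requires the spatial derivative of $\vec L(\varepsilon,\cdot)$ to be continuous in $(\varepsilon,x)$ at $(0,p_0)$, which Definition \ref{smooth-ext} does not literally guarantee --- it only constrains $\vec L$ on the $\Sigma_\varepsilon$ and demands the limit (\ref{definition_l1}); however, the paper's own proof makes comparable implicit regularity assumptions when it differentiates $\vec k(\varepsilon,x)$ in both arguments, so this is a shared gap rather than a defect of your approach. Your closing reformulation via the difference field $\vec D$ vanishing on each $\Sigma_\varepsilon$, giving $\vec{\widetilde l}_1-\vec l_1=Q(\vec a-\vec{\widetilde a})$, is in fact the bridge between the two proofs: it is the same object the paper calls $-\vec k$, handled by expansion rather than by pullback.
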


\begin{proof} {

		Keeping the riggings $\vec{l}_\varepsilon$ fixed at each $\Sigma_\varepsilon$, let us  consider any other extension $\vec {\widetilde{L}}(\varepsilon,x)$ to $\mathcal U$, which also satisfies the properties from Definition \ref{smooth-ext}. These two extensions differ at most by a vector field $\vec k$ on $\mathcal U$ that necessarily has the properties
		\begin{equation*}
		\text{(i)} \lim_{\varepsilon \rightarrow 0} \frac{\vec k(\varepsilon, x|_{\Sigma_0})}{\varepsilon} = \vec{\widetilde l}_1 - \vec{l}_1 \equiv \vec{k}_1, \quad %\text{(ii)} \; \text{$\vec{k}$ is $C^2$  in $\mathcal U$} , \quad 
		\text{(ii)}\; \vec{k}(\varepsilon, x|_{\Sigma_\varepsilon}) = 0.
		\end{equation*}
		Moreover, the accelerations $\vec {\widetilde A}$  and $\vec{A}$ associated to the two different extensions are related as follows
		\begin{equation*}
		\vec {\widetilde{A}} (\varepsilon, x) \equiv \nabla_{\vec{\widetilde L}} \vec{\widetilde L} = \vec A (\varepsilon, x) + \nabla_{\vec{L} (\varepsilon, x)} \vec k (\varepsilon, x) + \nabla_{\vec k (\varepsilon, x)} (\vec L(\varepsilon, x) + \vec k(\varepsilon, x)).
		\end{equation*}
		A consequence of this relation is that at points of $\Sigma_0$, $\vec{\widetilde a} = \vec{\widetilde A}(\varepsilon \rightarrow 0, x|_{\Sigma_0}) = \vec{a} + (\nabla_{\vec{l}} \vec{k})|_{\{\varepsilon \rightarrow 0, \Sigma_0 \}}$, which immediately leads to $\vec{\widetilde \zeta} = \vec{\zeta} + Q (\nabla_{\vec{l}} \vec{k})|_{\{\varepsilon \rightarrow 0, x|_{\Sigma_0} \}} + \vec{k}_1$. We claim that the vector $\vec{v} \equiv Q(\nabla_{\vec{l}} \vec{k})|_{\{\varepsilon \rightarrow 0, x|_{\Sigma_0} \}} + \vec{k}_1$ is identically zero, so that $\vec \zeta$ is indeed independent of the choice of the extension of $\vec{l}_\varepsilon$. To see this we consider the following objects 
		\begin{eqnarray*}
		\frac{d}{d\varepsilon} \left.\left( \Phi_\varepsilon^* g_\varepsilon(\vec k_\varepsilon, \cdot)\right) \right|_{\varepsilon = 0} &=& \Phi_0^* g_1 \left( \vec{k}(0, x|_{\Sigma_0}), \cdot \right) + \Phi_0^* g\left( \left.\frac{\partial k(\varepsilon, x)}{\partial \varepsilon}\right|_{\Sigma_0}, \cdot\right) \nonumber\\
		&+&\Phi_0^*  \mathcal L_{Q\vec{l}} g(\vec{k}(0, x), \cdot )|_{\Sigma_0} + \Phi_0^*  \mathcal L_{\vec{T}} g(\vec{k}(0, x|_{\Sigma_0}), \cdot ) \nonumber\\
&=& g(\vec{k}_1 , \vec{e}_a)|_{\Sigma_0} + g\left(\mathcal L_{Q\vec{l}}\vec{k}(x, 0)|_{\Sigma_0} , \vec{e}_a\right)  = 	g(\vec v, \vec{e}_a ), \label{geax}
		\end{eqnarray*}
		where the right hand side is found by application of Lemma \ref{lemma_marc_perturbations} and the defining properties of the extensions (and their corresponding difference vector) listed above. Moreover, the left hand side vanishes because of property (ii) of $\vec{k}_\varepsilon$. Following a similar strategy we find that $0=\partial_\varepsilon \left.\left( \Phi_\varepsilon^* g_\varepsilon(\vec k_\varepsilon, \vec{l}_\varepsilon)\right) \right|_{\varepsilon = 0} = \Phi_0^*(g(\vec{l},\vec{v})) $.
		The fact that these inner products between $\vec{v}$ and $\vec{l}$ and $\vec{e}_a$  vanish simultaneously implies that $\vec{v} = 0$. In fact if we write them in terms of the decomposition $\vec v \equiv \mathcal X \vec{l} + \mathcal X^a \vec{e}_a$ we see that
		\begin{equation*}
		\begin{bmatrix}
		h_{ab}   & l_a \\
		l_b & l^{(2)}
		\end{bmatrix}
		\begin{bmatrix}
		\mathcal X^b \\
		\mathcal X
		\end{bmatrix}
		= 0,
		\end{equation*}
		and since the coefficient matrix is non-degenerate, we conclude that the vector $\vec v$ is zero.
		
		\hfill $\blacksquare$
	}
\end{proof}

\begin{corollary}
	The scalar $\delta l^{(2)}$, the one form $\delta l_a$ and the two covariant symmetric tensor $\delta Y_{ab}$ on $\abshyp$ are independent of the extension $\vec{L}_\varepsilon$ chosen to perform the linearisation. \label{lemma_independent_extensions}
\end{corollary}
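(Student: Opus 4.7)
The proof is essentially an immediate reading of the explicit formulas in Propositions \ref{proposition_deltal} and \ref{proposition_deltaY} in light of Lemma \ref{lemma_zeta_independent_extensions}. My plan is to inspect each of the three expressions in turn and isolate the extension-dependent pieces.

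First I would look at the formula $\delta l^{(2)} = 2 g(\vec{\zeta}, \vec{l}) + 2 T^a (l^{(2)} \varphi_a + \Psi_a^b l_b) + g_1(\vec{l},\vec{l})$. The second and third terms are built only from the background data on $\Sigma_0$ (namely $\vec{l}|_{\Sigma_0}$, $Y$, the connection coefficients $\varphi_a, \Psi_a^b$, and $l_a$), the metric perturbation $g_1$, and the deformation components $Q, \vec{T}$; none of these objects involve the extension $\vec{L}(\varepsilon,x)$ away from $\Sigma_0$. The only piece that could depend on the extension is $\vec{\zeta}$, but Lemma \ref{lemma_zeta_independent_extensions} shows that $\vec{\zeta}$ does not. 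The same inspection applied to the formula for $\delta l_a$ shows that every term other than $g(\vec{\zeta}, \vec{e}_a)$ depends only on background data, $g_1$, and $Q, \vec T$, so again the conclusion follows from Lemma \ref{lemma_zeta_independent_extensions}.

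Next I turn to $\delta Y_{ab}$, given by (\ref{Yp_1_b}). Here I need to check each term: the Riemann-curvature term $QR_{\alpha\gamma\beta\mu}e_a^\alpha l^\gamma e_b^\beta l^\mu$, the quadratic rigging-derivative term $Q g(\nabla_{\vec{e}_a}\vec{l}, \nabla_{\vec{e}_b}\vec{l})$, the term $\tfrac{1}{4}(\partial_a Q\,\partial_b l^{(2)} + \partial_b Q\,\partial_a l^{(2)})$, the Lie-derivative term $\mathcal{L}_{\vec{T}_\abshyp} Y_{ab}$, and the term $\tfrac{1}{2}\Phi^*\mathcal{L}_{\vec{l}}g_1$. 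Each of these is constructed from quantities evaluated at $\Sigma_0$ alone — indeed, $\nabla_{\vec{e}_a}\vec{l}$, $R_{\alpha\gamma\beta\mu}l^\gamma l^\mu$, and $\mathcal{L}_{\vec{l}}g_1$ at $\Sigma_0$ only require $\vec l$ at $\Sigma_0$ and the covariant derivatives of $g$ and $g_1$ there, neither of which sees the extension off $\Sigma_0$. The one remaining piece is $\tfrac{1}{2}\Phi^*(\mathcal{L}_{\vec{\zeta}} g)$, and here once more Lemma \ref{lemma_zeta_independent_extensions} applies.

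The main (minor) point to verify carefully is that $g(\nabla_{\vec{e}_a}\vec{l}, \nabla_{\vec{e}_b}\vec{l})$ and $\Phi^*\mathcal{L}_{\vec{l}}g_1$ are genuinely well defined from data on $\Sigma_0$: both involve tangential derivatives of $\vec{l}$, which are fully determined by the restriction $\vec{l}|_{\Sigma_0}$. Once this is noted, the corollary is immediate: all dependence on the extension has been funnelled through $\vec{\zeta}$, which Lemma \ref{lemma_zeta_independent_extensions} proves to be extension-independent.
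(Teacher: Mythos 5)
Your argument is exactly the paper's: the corollary is stated as an immediate consequence of Lemma \ref{lemma_zeta_independent_extensions}, the point being (as the paper notes after Proposition \ref{proposition_deltaY}) that the only dependence of the expressions (\ref{l12_b}), (\ref{l1a_b}) and (\ref{Yp_1_b}) on the extension is channelled through $\vec{\zeta}$, which that lemma shows to be extension-independent. Your term-by-term verification, including the observation that $\Phi^*\mathcal{L}_{\vec{l}}g_1$ and $g(\nabla_{\vec{e}_a}\vec{l},\nabla_{\vec{e}_b}\vec{l})$ involve only tangential derivatives of $\vec{l}$ together with the globally defined $g$ and $g_1$, is correct and simply makes explicit what the paper leaves implicit.
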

There is still an explicit dependence on the vector field $\vec{\zeta}$ in the relevant formulas for $\delta l^{(2)}$, $\delta l_a$ and $\delta Y_{ab}$, which is in essence a manifestation of the freedom in the choice of the rigging in the perturbative setting, as we see next.
In the perturbation scheme, we consider  the setting for the riggings developed in Section \ref{section_constuction}. A transformation of the riggings at each $\Sigma_\varepsilon$, including the background, leads to a new family of riggings $\vec{l}_\varepsilon{}'$ so that 
\begin{eqnarray}
&&\vec{l}_\varepsilon' = \lambda_\varepsilon (\vec{l}_\varepsilon + \vec{v}_\varepsilon),\quad \lambda_\varepsilon|_{\Sigma_\varepsilon} \neq 0,\quad \ndown (\vec v_\varepsilon)|_{\Sigma_\varepsilon} = 0. \label{rigging_vector_transformation_family}
\end{eqnarray}
We assume that both riggings $\vec{l}_\varepsilon$ and $\vec{l}_\varepsilon{}'$ are smooth in the sense of point 2 in Definition \ref{smooth-ext}. We single out the path generated by the embeddings $\Phi_\varepsilon$ by fixing $p \in \abshyp$ and letting $\varepsilon$ run, and the decomposition (\ref{rigging_vector_transformation_family}) necessarily requires that the functions $\varepsilon \mapsto \lambda_\varepsilon|_{\Phi_\varepsilon(p)}$ and $\varepsilon \mapsto \vec{v}_\varepsilon|_{\Phi_\varepsilon(p)}$ are $C^1$. Since the embeddings are also $C^1$ the  objects $\overline{\lambda}_\varepsilon := \lambda_\varepsilon \circ \Phi_\varepsilon$ and $ \overline{v}_\varepsilon := d\Phi_\varepsilon^{-1} (\vec{v}_\varepsilon|_{\Phi_\varepsilon(p)})$ are $C^1$ and their limits as $\varepsilon \rightarrow 0$ are $\lambda$ and $\overline{v}$ respectively, i.e. the gauge fields encoding the rigging transformation in the background. 
The property that $\vec{v}_\varepsilon = d\Phi_\varepsilon (\overline{v}_\varepsilon)$, for some $\overline{v}_\varepsilon$ in $T(\abshyp)$, allows us to relate $\{\boldsymbol{l}_\varepsilon{}', l_\varepsilon^{(2)}{}'\}$ with the non transformed rigging data in the following way:
	\begin{eqnarray}
\boldsymbol{l}_\varepsilon{}' &=& \lambda_\varepsilon (\boldsymbol{l}_\varepsilon + h_\varepsilon(\overline{v}_\varepsilon,\cdot)), \label{l_epsilon_form_prime}\\
l_\varepsilon^{(2)}{}' &=& \lambda_\varepsilon^2 (l_\varepsilon^{(2)} + 2 \boldsymbol{l}_\varepsilon (\overline{v}_\varepsilon) + h_\varepsilon(\overline{v}_\varepsilon, \overline{v}_\varepsilon)). \label{l_epsilon_scalar_prime}
\end{eqnarray}
It is natural to decompose this vector in the basis $\{n^a , P^{ab}\}$, which we write as
\begin{equation*}
\overline{v}^a_\varepsilon = W(\varepsilon) n^a + Z_b(\varepsilon)P^{ab}, \quad \text{with } \quad n^{(2)}W(\varepsilon) + n^a Z_a(\varepsilon) = 0.
\end{equation*} 
\begin{lemma} \label{lemma_rigging_transformations_fo}
	Consider hypersurface metric data $\{\abshyp, h_{ab}, l_a, l^{(2)}\}$ and rigging transformation gauge fields $\{\lambda, W, Z_a\}$ that satisfy $n^{(2)}W + n^a Z_a = 0$, so that these uniquely define $\overline v^a \equiv P^{ab}Z_b + n^a W$.
	
	The freedom related to rigging transformations to first order is encoded in the fields $\{\delta \lambda, \delta W, \delta Z_a\}$ satisfying the constraint
	%	\begin{equation}
	%	n^a \delta Z_a + n^{(2)} \delta W = \delta h_{ab} n^a v^b + n^{(2)} v^a \delta l_a \label{linearized_constraint_tangent_vector}
	%	\end{equation}
	\begin{equation}
	n^a \delta Z_a + n^{(2)} \delta W = 0, \label{linearized_constraint_tangent_vector}
	\end{equation}
	so that these uniquely define the vector field
	\begin{eqnarray*}
	\delta \overline v^a \equiv P^{ab}\delta Z_b + n^a \delta W .
	\end{eqnarray*}
	
	The gauge transformed first order perturbations of the rigging data in terms of the background gauge fields $\{\lambda, W, Z\}$ and the first order gauge fields $\{\delta \lambda, \delta W, \delta Z \}$ reads
	\begin{eqnarray}
	h_{ab}' &=& h_{ab},\nonumber\\
	\delta l_a' &=& \delta \lambda (l_a + Z_a) + \lambda (\delta l_a + \delta Z_a + \delta h_{ab} \overline v^b), \label{rigging_transformed_deltala}\\
	\delta l^{(2)}{}' &=& 2 \lambda \delta \lambda (l^{(2)} + 2W + Z_a \overline v^a) + \lambda^2 (\delta l^{(2)} + 2 \delta W + 2 \overline v^a \delta Z_a +  \overline v^a (2\delta l_a + \overline v^b \delta h_{ab})), \label{rigging_transformed_deltal2}\\
	\delta Y_{ab}' &=& \lambda \delta Y_{ab} + \delta \lambda Y_{ab} +\frac{1}{2} \left( \delta l_a \partial_b \lambda + \delta l_b \partial_a \lambda + l_a \partial_b \delta \lambda + l_b \partial_a \delta \lambda\right) + \frac{1}{2} \mathcal L_{\delta \lambda \overline v + \lambda \delta \overline v} h. \label{rigging_transformed_deltalY}
	\end{eqnarray}
\end{lemma}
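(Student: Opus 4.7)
\emph{Proof strategy.} The plan is to differentiate, at $\varepsilon = 0$, the three $\varepsilon$-family identities
\begin{align*}
\boldsymbol{l}_\varepsilon' &= \lambda_\varepsilon\bigl(\boldsymbol{l}_\varepsilon + h_\varepsilon(\overline v_\varepsilon,\cdot)\bigr),\\
l_\varepsilon^{(2)}{}' &= \lambda_\varepsilon^2\bigl(l_\varepsilon^{(2)} + 2\boldsymbol{l}_\varepsilon(\overline v_\varepsilon) + h_\varepsilon(\overline v_\varepsilon,\overline v_\varepsilon)\bigr),\\
Y_\varepsilon' &= \lambda_\varepsilon Y_\varepsilon + \tfrac{1}{2}\bigl(d\lambda_\varepsilon\otimes\boldsymbol{l}_\varepsilon + \boldsymbol{l}_\varepsilon\otimes d\lambda_\varepsilon\bigr) + \tfrac{1}{2}\mathcal{L}_{\lambda_\varepsilon\overline v_\varepsilon}h_\varepsilon,
\end{align*}
of which the first two are (\ref{l_epsilon_form_prime})--(\ref{l_epsilon_scalar_prime}) and the third is the $\varepsilon$-family counterpart of the exact law (\ref{transformation_Y}). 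These are geometric relations holding pointwise on each $\Sigma_\varepsilon$, so their $\varepsilon$-families are automatically valid, and Leibniz then delivers the three claimed formulas provided the first-order gauge parameters are correctly identified.

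To set up this identification, I would introduce $\delta\overline v^a := \partial_\varepsilon\overline v_\varepsilon^a|_{\varepsilon=0}$, which is a vector field on $\abshyp$, and use the background hypersurface metric data to define $\delta W := l_a\delta\overline v^a$ and $\delta Z_a := h_{ab}\delta\overline v^b$. Applying Lemma \ref{lemma_vector_reconstruction} to $\delta\overline v$ simultaneously yields the compatibility constraint (\ref{linearized_constraint_tangent_vector}) and the reconstruction $\delta\overline v^a = P^{ab}\delta Z_b + n^a\delta W$ stated in the lemma. The bijectivity of this correspondence is precisely what justifies parametrizing admissible first-order rigging transformations by the constrained triple $\{\delta\lambda,\delta W,\delta Z_a\}$.

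For $\delta l_a'$ and $\delta l^{(2)}{}'$, applying $\partial_\varepsilon|_{\varepsilon=0}$ to the first two $\varepsilon$-family identities and substituting the background relations $h_{ab}\overline v^b = Z_a$, $l_a\overline v^a = W$ together with their linearized counterparts $h_{ab}\delta\overline v^b = \delta Z_a$ and $l_a\delta\overline v^a = \delta W$ produces (\ref{rigging_transformed_deltala}) and (\ref{rigging_transformed_deltal2}) after collection of the Leibniz terms. The main delicate point lies in $\delta Y_{ab}'$: differentiating the Lie-derivative term $\mathcal{L}_{\lambda_\varepsilon\overline v_\varepsilon}h_\varepsilon$ requires the linearity of $\mathcal L$ in its vector argument to produce $\mathcal{L}_{\delta\lambda\,\overline v + \lambda\,\delta\overline v}h$, while the symmetrized contribution from $d\lambda_\varepsilon\otimes\boldsymbol{l}_\varepsilon$ splits by Leibniz into the four terms $\delta l_a\partial_b\lambda$, $l_a\partial_b\delta\lambda$ and their symmetrizations in $(ab)$; combined with the $\delta\lambda\,Y_{ab} + \lambda\,\delta Y_{ab}$ contribution from the first summand, this assembles into the stated expression (\ref{rigging_transformed_deltalY}). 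The principal book-keeping obstacle is keeping track of these Leibniz contributions, but no geometric input beyond the background basis identities and Lemma \ref{lemma_vector_reconstruction} is needed.
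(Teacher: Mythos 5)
Your proposal is correct and follows essentially the same route as the paper: differentiate the $\varepsilon$-family identities (\ref{l_epsilon_form_prime})--(\ref{l_epsilon_scalar_prime}) and the $\varepsilon$-family counterpart of (\ref{transformation_Y}), with the Leibniz terms $\tfrac{d}{d\varepsilon}h_\varepsilon(\overline v_\varepsilon,\cdot)$, $\tfrac{d}{d\varepsilon}\boldsymbol{l}_\varepsilon(\overline v_\varepsilon)$, $\tfrac{d}{d\varepsilon}h_\varepsilon(\overline v_\varepsilon,\overline v_\varepsilon)$ evaluated via the decomposition of $\delta\overline v$. Your identification $\delta W = l_a\delta\overline v^a$, $\delta Z_a = h_{ab}\delta\overline v^b$ is equivalent (via Lemma \ref{lemma_vector_reconstruction} and the constraint) to the paper's convention of decomposing $\overline v^a_\varepsilon$ in the \emph{background} basis $\{n^a, P^{ab}\}$, consistent with the Remark following the lemma.
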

\begin{proof}
	The first fundamental form does not depend on the rigging, and therefore it does not change under rigging transformations.
	
	The first $\varepsilon$-derivative of $\overline{v}_\varepsilon$ at $\varepsilon = 0$ defines the perturbed fields
	\begin{equation}
	\delta \overline v^a = \delta W n^a + \delta Z_b P^{ab}, \quad \text{with } \quad \delta W n^{(2)} + \delta Z_a n^a = 0, \label{deltav_gaugefields}
	\end{equation}
and by taking $\varepsilon$ derivatives, this time in (\ref{l_epsilon_form_prime}) and (\ref{l_epsilon_scalar_prime}) we obtain the expressions
\begin{eqnarray}
\delta l_a {}' &\equiv& \left.\frac{d}{d \varepsilon}	\boldsymbol{l}_\varepsilon{}'\right|_{\varepsilon=0} = \delta \lambda (l_a + Z_a) + \lambda \left(\delta l_a + \left. \frac{d}{d \varepsilon} h_\varepsilon(\overline{v}_\varepsilon,\cdot)\right|_{\varepsilon=0}\right),\label{l_epsilon_form_prime_1}\\
\delta l^{(2)}{}' &\equiv& \left.\frac{d}{d \varepsilon}	l_\varepsilon^{(2)}{}'\right|_{\varepsilon=0} = 2 \lambda \delta \lambda (l^{(2)} + 2W + Z_a \overline v^a) + \lambda \left( \delta l^{(2)} + 2 \left.  \frac{d}{d \varepsilon}\boldsymbol{l}_\varepsilon (\overline{v}_\varepsilon)   \right|_{\varepsilon=0} + \left.  \frac{d}{d \varepsilon}h_\varepsilon(\overline{v}_\varepsilon,\overline{v}_\varepsilon)   \right|_{\varepsilon=0}\right) \nonumber\\
\label{l_epsilon_scalar_prime_1}
\end{eqnarray}
The three terms involving the vector $\overline{v}_\varepsilon$ can be written in terms of the gauge fields as follows
\begin{eqnarray*}
\left. \frac{d}{d \varepsilon} h_\varepsilon(\overline{v}_\varepsilon,\cdot)\right|_{\varepsilon=0} &=& \delta h_{ab} \overline v^b + h_{ab} \delta \overline v^b = \delta h_{ab} \overline v^b + \delta Z_a,\\
%%%%%%%%%%%%%%%%%%%%%%%%%%%%%%%%
\left.  \frac{d}{d \varepsilon}\boldsymbol{l}_\varepsilon (\overline{v}_\varepsilon)   \right|_{\varepsilon=0} &=& \delta l_a \overline v^a + l_a \delta \overline v^a = \overline v^a \delta l_a + \delta W, \\
%%%%%%%%%%%%%%%%%%%%%%%%%%%%%%%
\left.  \frac{d}{d \varepsilon}h_\varepsilon(\overline{v}_\varepsilon,\overline{v}_\varepsilon)   \right|_{\varepsilon=0} &=& \delta h_{ab} \overline v^a \overline v^b + 2 h_{ab} \overline v^a \delta \overline v^b = \delta h_{ab} \overline v^a \overline v^b + 2 \overline v^a \delta Z_a,
\end{eqnarray*}
where we have made use of the decomposition of $\delta \overline v$ (\ref{deltav_gaugefields}) and the relations for general hypersurfaces (\ref{constraints_definitions}). Taking these expressions back to (\ref{l_epsilon_form_prime_1}) and (\ref{l_epsilon_scalar_prime_1}) we find  (\ref{rigging_transformed_deltala}) and (\ref{rigging_transformed_deltal2}).	Finally we explore the effect of these transformations in the tensor $Y_\varepsilon$, that can be studied considering the tensor $Y_\varepsilon'$ relative to the rigging $\vec{l}_\varepsilon{}'$, related to $\vec{l}_\varepsilon$ through (\ref{rigging_vector_transformation_family}). This consideration allows us to relate $Y_\varepsilon$ and $Y_\varepsilon'$ as follows
	\begin{eqnarray*}
	Y_\varepsilon{}' = \left.\frac{1}{2} \Phi_\varepsilon^* \left( \mathcal L_{\vec{l}_\varepsilon{}'}g_\varepsilon\right)\right|_{\varepsilon=0} = \lambda_\varepsilon Y_\varepsilon + \frac{1}{2} \left(  {\boldsymbol{l}}_\varepsilon \otimes d\lambda_\varepsilon + d \lambda_\varepsilon \otimes  {\boldsymbol{l}}_\varepsilon \right)+ \frac{1}{2}\mathcal L_{\overline{\lambda}_\varepsilon \overline{v}_\varepsilon}h_\varepsilon,
	\end{eqnarray*}
	where we have used the properties $\Phi_\varepsilon^* d\lambda_\varepsilon = d\Phi_\varepsilon^* \lambda_\varepsilon$ and $\Phi_\varepsilon^* \mathcal L_{d\Phi_\varepsilon(\overline \lambda_\varepsilon \overline v_\varepsilon)} g_\varepsilon = \mathcal L_{\overline{\lambda}_\varepsilon \overline{v}_\varepsilon} \Phi_\varepsilon^* g_\varepsilon$.
	
	The first derivative with respect to $\varepsilon$ of this relation provides the expression (\ref{rigging_transformed_deltalY}) for $\delta Y'$ .
	
	\hfill $\blacksquare$
\end{proof}

\begin{remark}
	There is a slightly different, but equivalent, procedure to study the gauge freedom. Recall transformations (\ref{rigging_transformed_la}) and (\ref{rigging_transformed_l2}). Since they involve functions and one forms in $\abshyp$, one could consider their generalization to a one-parameter family of scalars and one forms and take  $\varepsilon$ derivatives directly from there. It can be shown that this method leads to the same expressions from Lemma \ref{lemma_rigging_transformations_fo} after the substitutions $\delta W \rightarrow \delta W + \overline{v}^c \delta l_c$ and $\delta Z_a \rightarrow \delta Z_a + \overline v^c \delta h_{ac}$. In other words, this alternative method would correspond to a decomposition  $\delta \overline{v}^a = \delta P^{ab}Z_b + P^{ab} \delta Z_b + n^a \delta W + \delta n^a W$. 

\end{remark}

\begin{lemma}
	\label{lemma_rigging_transformation_perturbed_inverse_data}
	Let $\{\lambda, W, Z_a\}$ and $\{\delta \lambda, \delta W, \delta Z_a\}$ be background and first order gauge fields respectively, that generate the vector field $\delta v^a \equiv \delta W n^a + \delta Z_b P^{ab}$. Under their action the objects $\{\delta n^{(2)}, \delta n^a, \delta P^{ab}\}$ change according to
	\begin{eqnarray}
\delta n^{(2)}{}' &=& -\frac{2}{\lambda^3} n^{(2)} \delta \lambda + \frac{\delta n^{(2)}}{\lambda^2} = -2\frac{\delta \lambda}{\lambda} n^{(2)}{}' + \frac{\delta n^{(2)}}{\lambda^2}, \label{delta_n2_gauged}\\
\delta n^a{}' &=& -\frac{\delta \lambda}{\lambda} n^a{}' + \frac{1}{\lambda} (\delta n^a - n^{(2)} \delta \overline{v}^a - \overline{v}^a \delta n^{(2)}), \nonumber\\
\delta P^{ab}{}' &=& \delta P^{ab} + 2 n^{(2)} \delta \overline{v}^{(a} \overline{v}^{b)} - 2 n^{(a} \delta \overline{v}^{b)} + \overline{v}^a \overline{v}^b \delta n^{(2)} - 2 \overline{v}^{(a} \delta n^{b)}. \nonumber
	\end{eqnarray}
\end{lemma}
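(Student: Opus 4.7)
The plan is to mimic, at the level of one-parameter families, the derivation of the background rigging transformations for the ``inverse'' data in (\ref{inverse_data_transformations}), and then simply differentiate the resulting expressions at $\varepsilon=0$. Concretely, for each value of $\varepsilon$ the set $\{n_\varepsilon^{(2)}, n_\varepsilon^a, P_\varepsilon^{ab}\}$ is defined as the unique solution of the $\varepsilon$-version of the algebraic system (\ref{constraints_definitions}), built from $\{h_\varepsilon, \boldsymbol{l}_\varepsilon, l_\varepsilon^{(2)}\}$. Applied to the transformed data $\{h_\varepsilon, \boldsymbol{l}_\varepsilon', l_\varepsilon^{(2)}{}'\}$, this system has a unique solution $\{n_\varepsilon^{(2)}{}', n_\varepsilon^a{}', P_\varepsilon^{ab}{}'\}$, and since at each fixed $\varepsilon$ we are merely running the exact computation already carried out in Section \ref{section_introduction_hypersurfaces}, it obeys
\begin{eqnarray*}
n_\varepsilon^{(2)}{}' &=& \frac{n_\varepsilon^{(2)}}{\lambda_\varepsilon^2}, \\
n_\varepsilon^a{}' &=& \frac{1}{\lambda_\varepsilon}\left(n_\varepsilon^a - n_\varepsilon^{(2)} \overline{v}_\varepsilon^a\right), \\
P_\varepsilon^{ab}{}' &=& P_\varepsilon^{ab} + n_\varepsilon^{(2)} \overline{v}_\varepsilon^a \overline{v}_\varepsilon^b - 2 \overline{v}_\varepsilon^{(a} n_\varepsilon^{b)}.
\end{eqnarray*}

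Next, I would differentiate each identity at $\varepsilon=0$ using the product and quotient rules, relying on the $C^1$-regularity of $\varepsilon\mapsto\lambda_\varepsilon$ and $\varepsilon\mapsto\overline{v}_\varepsilon$ established in the discussion preceding Lemma \ref{lemma_rigging_transformations_fo}. For the scalar this gives $\delta n^{(2)}{}' = -2\lambda^{-3} n^{(2)}\delta\lambda + \lambda^{-2}\delta n^{(2)}$, which is also the claimed form $-2(\delta\lambda/\lambda)n^{(2)}{}' + \delta n^{(2)}/\lambda^2$ after using $n^{(2)}{}' = n^{(2)}/\lambda^2$. Differentiating the expression for $n^a$ produces
\[
\delta n^a{}' = -\frac{\delta\lambda}{\lambda^2}\left(n^a - n^{(2)}\overline{v}^a\right) + \frac{1}{\lambda}\left(\delta n^a - \delta n^{(2)}\overline{v}^a - n^{(2)}\delta\overline{v}^a\right),
\]
and recognising $n^a{}' = \lambda^{-1}(n^a - n^{(2)}\overline{v}^a)$ in the first bracket yields the stated formula. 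The same mechanical step applied to $P^{ab}$ produces exactly the four terms in the last equation of the lemma, after writing the two naturally symmetric contributions in the $(\,)$-symmetrised form.

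There is no genuine obstacle here: the only care-demanding point is justifying that $\delta\overline{v}^a$ coincides with the $P^{ab}\delta Z_b + n^a \delta W$ that appears throughout the paper. This is already achieved in Lemma \ref{lemma_rigging_transformations_fo} via the constraint (\ref{linearized_constraint_tangent_vector}), so I would simply cite that result. Finally, I would remark that the constraint $n^{(2)}{}' \delta W' + n^a{}' \delta Z_a' = 0$ for the primed gauge fields follows automatically because at each $\varepsilon$ the transformed data also satisfy the $\varepsilon$-version of (\ref{constraints_definitions}); no separate verification is needed.
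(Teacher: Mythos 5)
Your proposal is correct and follows exactly the route the paper itself indicates: the paper's (one-line) proof states that the transformations "are trivially found by taking derivatives in the background transformations (\ref{inverse_data_transformations})", which is precisely your promotion of those identities to $\varepsilon$-families followed by differentiation at $\varepsilon=0$. The algebra checks out term by term, and your identification of $\delta\overline{v}^a$ with $P^{ab}\delta Z_b + n^a\delta W$ via Lemma \ref{lemma_rigging_transformations_fo} is the right justification.
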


The transformations in this lemma are trivially found by taking derivatives in the background transformations (\ref{inverse_data_transformations}), and another way leading to the same result consist of plugging the gauge transformed expressions from Lemma \ref{lemma_rigging_transformations_fo} into the first order expressions from Lemma \ref{lemma_inverse_perturbed_data}. 

\begin{corollary}
	The sign of $\delta n^{(2)}$ at null points is independent of the rigging transformations.
\end{corollary}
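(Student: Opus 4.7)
My plan is to apply the transformation formula for $\delta n^{(2)}$ established in Lemma \ref{lemma_rigging_transformation_perturbed_inverse_data} and evaluate it at null points. Recall that a point $p\in\Sigma_0$ is null precisely when $\boldsymbol{n}(\vec n)|_p=0$, which in the hypersurface data language translates to $n^{(2)}|_p=0$. This single observation is what drives the whole argument, so the proof is essentially a one-line calculation.

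Concretely, I would start from equation (\ref{delta_n2_gauged}),
\begin{equation*}
\delta n^{(2)}{}' = -\frac{2}{\lambda^3} n^{(2)} \delta \lambda + \frac{\delta n^{(2)}}{\lambda^2},
\end{equation*}
and specialise it to a point $p$ where $n^{(2)}|_p=0$. The first term on the right hand side vanishes identically at such a point (independently of the value of $\delta\lambda$, which by construction is an arbitrary $C^1$ function on $\abshyp$), so the transformation law collapses to
\begin{equation*}
\delta n^{(2)}{}'\bigr|_p = \frac{1}{\lambda^2}\,\delta n^{(2)}\bigr|_p.
\end{equation*}
Since $\lambda\neq 0$ everywhere on $\abshyp$ by the definition of a rigging transformation, the factor $\lambda^{-2}$ is strictly positive, and hence $\delta n^{(2)}{}'|_p$ and $\delta n^{(2)}|_p$ have the same sign (or are simultaneously zero). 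This is exactly the claim of the corollary.

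There is no serious obstacle: the only subtlety worth flagging is that this invariance is genuinely a feature of null points, because away from such points the term $-2\lambda^{-3}n^{(2)}\delta\lambda$ is nonzero in general and can be adjusted arbitrarily through the first order gauge field $\delta\lambda$, which may flip the sign of $\delta n^{(2)}{}'$. I would include a short remark to this effect after the proof, emphasising that it is precisely the degeneracy of the first fundamental form at null points that makes the sign of $\delta n^{(2)}$ a gauge-invariant quantity there, a fact that could be useful when characterising which perturbations preserve or destroy the null character of a hypersurface.
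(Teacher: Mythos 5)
Your proof is correct and follows exactly the route the paper intends: the corollary is stated as an immediate consequence of Lemma \ref{lemma_rigging_transformation_perturbed_inverse_data}, and setting $n^{(2)}|_p=0$ in equation (\ref{delta_n2_gauged}) to obtain $\delta n^{(2)}{}'=\lambda^{-2}\,\delta n^{(2)}$ with $\lambda^{-2}>0$ is precisely the one-line argument required. Your additional remark about why this fails at non-null points is a sensible observation consistent with the paper's later discussion of $\delta n^{(2)}$ on open sets of null points.
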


In analogy with the exact theory, we might ask ourselves whether the object $n^{(2)}_\varepsilon$ encodes information about the causal character of the hypersurfaces $\Sigma_\varepsilon$, at least to first order in perturbation theory. Recall that we have endowed a causal character to $\abshyp$ by the identification with the background hypersurface $\Sigma_0$, given the embedding $\Phi_0$.

The function $n_\varepsilon^{(2)}$ is a function defined on $\abshyp$ in terms of the $\varepsilon$-family of \textit{hypersurface metric data} (see the linearized version (\ref{delta_n2})) and therefore it depends on the embeddings $\Phi_\varepsilon$. The sign of $n_\varepsilon^{(2)}$ would store information about the signature of $\Sigma_\varepsilon$ as follows: given a fixed point $p\in \abshyp$ we let it propagate in $\varepsilon$ through $\Phi_\varepsilon$. This produces a collection of points $\{p_\varepsilon \equiv \Phi_\varepsilon (p) \in \Sigma_\varepsilon\}$ whose corresponding causal characters, with respect to the metrics $g_\varepsilon$, are compared. Note that a hypersurface gauge transformation understood as an internal $\varepsilon$-dependent diffeomorphism $\chi_\varepsilon$ in $\abshyp$, changes the former identification and it results into a different embedding $\Phi^{(h)}_\varepsilon = \Phi_\varepsilon \circ \chi_\varepsilon$. The (background) causal character of a point $p\in \abshyp$ is unaffected by these transformations, because we have restricted hypersurface gauge transformations to $\chi_0 = \mathbb{I}$, thus keeping the background embedding $\Phi_0$ fixed. But apart from this constraint, the identification between $\abshyp$ and the $\Sigma_\varepsilon \subset \mathcal M$ is an inherent freedom of the method, thus completely arbitrary. Because $n_\varepsilon^{(2)}$ is sensitive to the identification, the signature change between the different elements in $\{\Sigma_\varepsilon\}$ tracked in this way is relative to the hypersurface gauge, i.e. to the pair $\{\abshyp, \Phi_\varepsilon \}$.

At non-null points in $\abshyp$ the causal character of the hypersurfaces in the sense above will not change as a result of the perturbation. This claim is supported by the fact that at $p \in \abshyp$, $n_\varepsilon^{(2)}$ depends continuously on $\varepsilon$, with limiting value $n^{(2)}$. Thus if $\abshyp$ is timelike (respectively, spacelike) at $p$, then $n^{(2)}|_p >0$ and so $n^{(2)}_\varepsilon|_p >0$ for all sufficiently small $\varepsilon$. 

At null points, in turn,  $n_\varepsilon^{(2)} (p, \varepsilon =0) = 0$ and this function could depart from zero as $\varepsilon$ varies, attaining either positive or negative values. This would be understood as a signature change in the scheme $\{\abshyp, \Phi_\varepsilon\}$. If the null points are isolated then the function $n_\varepsilon^{(2)}$ does not provide any information due to the gauge dependence of the method, but for an open set of null points  $\mathcal O \subset \abshyp$ the perturbation of $n^{(2)}$ is indeed meaningful: if we consider a point $p \in \mathcal O \subset \abshyp$, then the point $\chi_\varepsilon(p)$ will also belong to $\mathcal O$ for $\varepsilon$ small enough. In fact the hypersurface gauge transformation of $\delta n^{(2)}$ at null points results to be
	\begin{equation}
\delta n^{(2)}{}^{(h)} = \delta n^{(2)} + \mathcal L_{\vec u_\abshyp} n^{(2)}, \label{delta_n2_hypersurfacegaugetransformation}
\end{equation} 
which in this case simplifies just to $\delta n^{(2)}{}^{(h)} = \delta n^{(2)}$. Therefore the sign of $\delta n^{(2)}$ is completely gauge independent (under rigging and hypersurface gauge transformations) inside open sets of null points in $\abshyp$ (if any), providing a notion signature change for this case. This observation may result interesting in order to study first order perturbations of null hypersurfaces, as we shall see in Appendix \ref{appendix_constant_signature}.

Still, we are left with expressions of the perturbations of the \textit{hypersurface data} that depend on the vector $\vec{\zeta}$, which does not have a clear interpretation in practical problems.
By this, we mean that there are situations where we know, or assume, a particular behaviour of the family $\Sigma_\varepsilon$. For instance a common approach in the literature consist of exploiting the fact that their causal character is constant (see the discussion in Appendix \ref{appendix_constant_signature}).
 This piece of information that we may have a priori is difficult to incorporate directly in $\vec \zeta$, and instead, it might be easier to deal with the data. As we show next, it is possible to fully encode $\vec \zeta$ in the objects $\delta l^{(2)}$ and $\delta l_a$. In fact, expressions (\ref{l12_b}) and (\ref{l1a_b}) contain the decomposition of the vector field $\vec{s}_\abshyp$ in terms of $\{n^a, P^{ab}\}$ through the combinations $\mathcal S := s^al_a$ and $\mathcal S_a := h_{ab}s^b$. These read explicitly
 \begin{eqnarray}
 \mathcal S &=& \frac{\delta l^{(2)}}{2} - (\alpha l^{(2)} + Q l_\mu a^\mu) - \frac{1}{2} g_1 (\vec l, \vec l) - \frac{1}{2}\vec{T}_\abshyp (l^{(2)}), \nonumber\\
 \mathcal S_a &=& \delta l_a - (\alpha l_a + Q e_a^\mu a_\mu) - g_1 (\vec l, \vec e_a) - \frac{Q}{2}\partial_a l^{(2)} - l^{(2)} \partial_a Q - \LL_{\vec T_{\abshyp}}l_a.\nonumber
 \end{eqnarray}
 It is a matter of applying Lemma \ref{lemma_vector_reconstruction} in order to reconstruct the vector $\vec{s}_\abshyp$. On the one hand, from the constraint $n^{(2)} \mathcal S + n^a \mathcal S_a = 0$ we obtain 
 \begin{eqnarray}
&&n^a \delta l_a + \frac{n^{(2)}}{2} \delta l^{(2)} - \alpha  -Q a_\mu n^\mu   - \frac{n^{(2)}}{2} g_1 (\vec{l},\vec{l}) -n^a g_1(\vec{l},\vec{e}_a) - l^{(2)}n^a \partial_a Q - n^a \LL_{\vec{T}_\abshyp}l_a \nonumber\\
 && - \frac{n^{(2)}}{2} \LL_{\vec{T}_\abshyp} l^{(2)} -\frac{Q}{2} n^a \vec{e}_a (l^{(2)}) =0,\nonumber
 \end{eqnarray}
 where we used the decomposition of the normal vector $n^\mu = n^{(2)} l^\mu + n^a e_a^\mu$ and definitions (\ref{constraints_definitions}). This result leads to
\begin{eqnarray}
\boldsymbol{n}(\vec{\zeta}) = \alpha + Q (n_\mu a^\mu)&=& n^a \delta l_a + \frac{n^{(2)}}{2} \delta l^{(2)} - \frac{n^{(2)}}{2} g_1 (\vec{l},\vec{l}) -n^a g_1(\vec{l},\vec{e}_a) - l^{(2)}n^a \partial_a Q - n^a \LL_{\vec{T}_\abshyp}l_a \nonumber\\
& -& \frac{n^{(2)}}{2} \LL_{\vec{T}_\abshyp} l^{(2)} -\frac{Q}{2} n^a \vec{e}_a (l^{(2)})  \label{L_deltas2}. \label{zeta_normal}
\end{eqnarray}
On the other hand,  the vector itself is found to be
\begin{eqnarray}
s_\abshyp^a = P^{ab} \mathcal S_b + n^a \mathcal S &=& P^{ab} \delta l_b + \frac{n^a}{2} \delta l^{(2)} -\alpha (n^a l^{(2)} + P^{ab} l_b) - Q(n^a l_\mu a^\mu +P^{ab}a_\mu e_b^\mu) -\frac{n^a}{2} g_1 (\vec{l},\vec{l}) \nonumber\\
&& - P^{ab} g_1(\vec{l},\vec{e}_b) - \frac{n^a}{2} \vec{T}{}_\abshyp (l^{(2)}) - P^{ab} \LL_{\vec{T}_\abshyp}l_b - l^{(2)} P^{ab}\partial_b Q - \frac{Q}{2}P^{ab}\vec{e}_b (l^{(2)}).\nonumber
\end{eqnarray}
We use the identity $P^{ab}e_b^\mu = g^{\rho \mu}|_{\Sigma_0} \omega^a_\rho - n^a l^\mu$ and definitions (\ref{constraints_definitions}) to find the remaining projection of $\vec \zeta$ as follows
\begin{eqnarray}
\boldsymbol{\omega}^a (\vec{\zeta}) = \vec{s}{}_\abshyp + Q(\omega^a_\mu a^\mu) &=& P^{ab} \mathcal{S}_b + \mathcal{S}n^a + Q(\omega^a_\mu a^\mu) = P^{ab} \delta l_b + \frac{n^a}{2} \delta l^{(2)} - P^{ab} g_1(\vec{l},\vec{e}_b) \nonumber \\
&-& \frac{n^a}{2} g_1 (\vec{l},\vec{l}) - l^{(2)} P^{ab}\partial_b Q - P^{ab} \LL_{\vec{T}_\abshyp}l_b - \frac{n^a}{2} \vec{T}{}_\abshyp (l^{(2)}) - \frac{Q}{2}P^{ab}\vec{e}_b (l^{(2)}). \nonumber \\\label{s_deltas}
\end{eqnarray}

These two expressions above relate algebraically the rigged and tangential components of $\vec{\zeta}$ with the first order perturbations of the rigging data $\{\delta l_a, \delta l^{(2)} \}$. Therefore the problem of characterizing the vector field $\vec{\zeta}$ has been put into the perturbations of the rigging data, whose dependence on the gauge freedom is well understood.  
\begin{proposition} \label{proposition_gauge_inversion}
	Consider a rigging transformation driven by the fields $\{\lambda=1, W=Z_a = 0, \delta \lambda, \delta W, \delta Z_a\}$. The equations for the transformed first order rigging data $\{\delta l_a{}', \delta l^{(2)}{}'\}$ in terms of $\{\delta \lambda, \delta W, \delta Z_a \}$ are invertible, so that
	\begin{eqnarray*}
	\delta \lambda &=& \frac{n^{(2)}}{2} \Delta \delta l^{(2)} + n^a \Delta \delta l_a, \label{delta_lambda_gauge_removal} \\
	\delta W &=& \frac{1}{2}(1-n^{(2)}l^{(2)}) \Delta \delta l^{(2)} - l^{(2)} n^a \Delta \delta l_a, \label{delta_W_gauge_removal}\\
	\delta Z_a &=& (\delta_a^b-n^bl_a)\Delta \delta l_b - \frac{n^{(2)}}{2}l_a \Delta \delta l^{(2)}, \label{delta_Za_gauge_removal}
	\end{eqnarray*}
	where $\Delta \delta l_a \equiv \delta l_a' - \delta l_a$ and $\Delta \delta l^{(2)} \equiv \delta l^{(2)}{}' -  \delta l^{(2)}$.
\end{proposition}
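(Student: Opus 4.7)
The plan is to specialise the transformation laws from Lemma \ref{lemma_rigging_transformations_fo} to the choice of background gauge fields prescribed in the proposition, obtain an affine linear system relating the first-order gauge fields $\{\delta\lambda,\delta W,\delta Z_a\}$ to the differences $\{\Delta\delta l_a,\Delta\delta l^{(2)}\}$, and then invert this system using the normalisation identity $n^a l_a+n^{(2)}l^{(2)}=1$ from (\ref{constraints_definitions}) together with the constraint (\ref{linearized_constraint_tangent_vector}).

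First I would substitute $\lambda=1$, $W=0$, $Z_a=0$ into (\ref{rigging_transformed_deltala}) and (\ref{rigging_transformed_deltal2}). Since these values give $\overline v^a=P^{ab}Z_b+n^a W=0$, every term containing $\overline v^a$ drops out and the formulas collapse to the two simple relations
\begin{equation*}
\Delta\delta l_a=\delta\lambda\,l_a+\delta Z_a,\qquad \Delta\delta l^{(2)}=2\delta\lambda\,l^{(2)}+2\delta W,
\end{equation*}
supplemented by the linearised constraint $n^{(2)}\delta W+n^a\delta Z_a=0$.

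Next I would solve this system. Contracting the first relation with $n^a$ and half of the second with $n^{(2)}$, and adding them, the terms $\delta Z_a$ and $\delta W$ are eliminated by the constraint, leaving
\begin{equation*}
n^a\Delta\delta l_a+\tfrac{n^{(2)}}{2}\Delta\delta l^{(2)}=\delta\lambda\bigl(n^a l_a+n^{(2)}l^{(2)}\bigr)=\delta\lambda,
\end{equation*}
where the last equality uses the normalisation identity from (\ref{constraints_definitions}). This yields the stated expression for $\delta\lambda$. Back-substitution into the two original relations gives $\delta W=\tfrac12\Delta\delta l^{(2)}-l^{(2)}\delta\lambda$ and $\delta Z_a=\Delta\delta l_a-l_a\delta\lambda$, which, after inserting the formula for $\delta\lambda$, reproduce the claimed expressions for $\delta W$ and $\delta Z_a$.

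There is no substantive obstacle here: the whole content is linear algebraic. The one point worth flagging is consistency, namely that the $\delta W$ and $\delta Z_a$ produced by the inversion automatically satisfy $n^{(2)}\delta W+n^a\delta Z_a=0$; this follows immediately by substituting the explicit formulas and invoking $n^a l_a+n^{(2)}l^{(2)}=1$ once more, so the constraint is not lost in passing from $\{\delta\lambda,\delta W,\delta Z_a\}$ to $\{\Delta\delta l_a,\Delta\delta l^{(2)}\}$ and back. Thus the map between the first-order gauge fields and the transformed first-order rigging data is a bijection, as stated.
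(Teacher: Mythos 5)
Your proof is correct and follows essentially the same route as the paper: specialise Lemma \ref{lemma_rigging_transformations_fo} to $\lambda=1$, $W=Z_a=0$ to get $\Delta\delta l_a=\delta\lambda\,l_a+\delta Z_a$ and $\Delta\delta l^{(2)}=2l^{(2)}\delta\lambda+2\delta W$, then use the constraint $n^{(2)}\delta W+n^a\delta Z_a=0$ together with $n^al_a+n^{(2)}l^{(2)}=1$ to solve for $\delta\lambda$ and back-substitute. The closing consistency check is a nice addition but does not change the argument.
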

\begin{proof}
	
	Starting from Lemma \ref{lemma_rigging_transformations_fo} we apply a pure first order rigging transformation to find the expressions
	\begin{equation*}
	\Delta \delta l^{(2)}  = 2 l^{(2)} \delta \lambda + 2 \delta W, \quad \Delta \delta l_a = l_a \delta \lambda + \delta Z_a.
	\end{equation*}
	
	The constraint $n^{(2)} \delta W + n^a \delta Z_a = 0$ solves for $\delta \lambda$ and direct substitution of the result in the equations above provides the expressions for $\delta W$ and $\delta Z_a$.
	
	\hfill $\blacksquare$
\end{proof}
The consequence of this result is that there is a direct correspondence between the first order perturbations of the rigging data $\{\delta l^{(2)}, \delta l_a\}$ and the gauge fields $\{\delta \lambda, \delta W, \delta Z_a \}$ constrained by (\ref{linearized_constraint_tangent_vector}). Therefore, it is possible to specify these perturbations conveniently. 
\begin{corollary}
		There exists a first order rigging transformation such that $\delta l^{(2)} = \delta l_a = 0$, regardless of the causal character of $\abshyp$.
\end{corollary}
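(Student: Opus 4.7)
The plan is to apply Proposition \ref{proposition_gauge_inversion} directly. We seek a pure first order rigging transformation, i.e.\ one with trivial background part $\lambda = 1$, $W = 0$, $Z_a = 0$, such that the transformed perturbations of the rigging data vanish. Setting $\delta l^{(2)}{}' = 0$ and $\delta l_a{}' = 0$ is equivalent to prescribing $\Delta \delta l^{(2)} = -\delta l^{(2)}$ and $\Delta \delta l_a = -\delta l_a$, so the inversion formulas of Proposition \ref{proposition_gauge_inversion} immediately deliver explicit candidates
\begin{equation*}
\delta \lambda = -\tfrac{n^{(2)}}{2}\delta l^{(2)} - n^a \delta l_a, \quad
\delta W = -\tfrac{1}{2}(1-n^{(2)}l^{(2)})\delta l^{(2)} + l^{(2)} n^a \delta l_a, \quad
\delta Z_a = -(\delta_a^b - n^b l_a)\delta l_b + \tfrac{n^{(2)}}{2} l_a \delta l^{(2)},
\end{equation*}
expressed solely in terms of the background data and the given first order perturbations.

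The only point that requires checking is that the pair $(\delta W, \delta Z_a)$ so produced satisfies the linearised constraint (\ref{linearized_constraint_tangent_vector}), $n^a \delta Z_a + n^{(2)} \delta W = 0$, so that it defines an admissible gauge field (equivalently, a tangential vector $\delta \overline v^a$ on $\Sigma_0$). This is built in by construction: in the proof of Proposition \ref{proposition_gauge_inversion} the constraint was used as one of the three scalar relations to solve for $\delta \lambda$ before substitution, so any triple arising from those formulas satisfies it automatically. A direct verification using the background identity $n^a l_a + n^{(2)} l^{(2)} = 1$ from (\ref{constraints_definitions}) confirms the cancellation.

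Finally, the inversion formulas involve $n^{(2)}, n^a, l^{(2)}, l_a$ only as multiplicative coefficients, with no denominator that could vanish at null points of $\abshyp$. Consequently the construction yields well-defined gauge fields at every point of $\abshyp$, irrespective of its causal character, and the corresponding first order rigging transformation $\{\delta \lambda, \delta W, \delta Z_a\}$ annihilates both $\delta l^{(2)}$ and $\delta l_a$. The only potential obstacle was the constraint verification, which dissolves once one recognises that the inversion of Proposition \ref{proposition_gauge_inversion} already builds the constraint into its solution.
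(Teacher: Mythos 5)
Your proof is correct and follows the same route the paper intends: the corollary is an immediate consequence of Proposition \ref{proposition_gauge_inversion}, obtained by choosing $\Delta\delta l^{(2)} = -\delta l^{(2)}$ and $\Delta\delta l_a = -\delta l_a$ in the inversion formulas, which are polynomial in the background data and hence valid at null points. Your explicit verification of the constraint $n^a\delta Z_a + n^{(2)}\delta W = 0$ (using $n^a l_a + n^{(2)} l^{(2)} = 1$) is a worthwhile check that the paper leaves implicit.
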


This is a very convenient gauge in order to perform calculations in specific problems. Note however that in this gauge the vector $\vec{\zeta}$ is nonzero, and it can be understood as a first order perturbation of the rigging vector for a suitable choice of the extension (for instance a geodesic one). Nevertheless if we sit on the gauge $\delta l_a = \delta l^{(2)} = 0$, it is a matter of applying the formulas of Proposition \ref{proposition_gauge_inversion} to identify the gauge fields that would take us to a gauge where $\delta l_a{}'$ and $\delta l^{(2)}{}'$ are nonvanishing but $\vec{\zeta} = 0$. These are found to be $\delta \lambda = - n_\alpha \zeta^\alpha$, $\delta W = -l_a \omega^a_\alpha \zeta^\alpha$ and $\delta Z_a = -h_{ab} \omega^b_\alpha \zeta^\alpha$, and plugged into relation  \ref{rigging_transformed_deltalY} we find the  the relation $\delta Y' = \delta Y - \frac{1}{2}\Phi^*\mathcal L_{\vec{\zeta}} g$.

\section{Matching conditions}
There are many works covering this topic in the literature, see for instance the classical references \cite{Israel1966} for a presentation of timelike shells in the matching context or \cite{ClarkeDray} for null hypersurfaces. However we follow the formalism \cite{Mars1993}, \cite{Mars_constraints} where the theory for the matching conditions across hypersurfaces of arbitrary causal character was developed.
 
\subsection{Review of the exact matching conditions} 
 
 We consider two spacetimes with $C^3$ oriented boundary $(\mathcal M^+, g^+, \partial \mathcal M^+)$ and $(\mathcal M^-, g^-, \partial \mathcal M^-)$. The matching procedure allows for the generation of a matched spacetime $(\mathcal M, g)$, defined as the disjoint union of the spacetimes $\pm$. It contains a hypersurface $\Sigma_0$ which separates it into the two regions $\pm$ and has a metric $g$ which is well defined everywhere, is continuous at points of $\Sigma_0$ and agrees with $g^\pm$ in the respective regions.
 
 This procedure requires, first of all, that the boundaries $\partial \mathcal M^+$ and $\partial \mathcal M^-$ must be diffeomorphic to each other, and in particular to an abstract hypersurface $\abshyp$, so that if we consider the pair of embeddings $\Phi^\pm:\abshyp \rightarrow \partial \mathcal M^\pm$, this requisite becomes $\Phi^+ (\abshyp) = \Phi^- (\abshyp)$. If this condition is fulfilled we will refer simply to the embedding $\Phi: \abshyp \rightarrow \Sigma_0 \subset \mathcal M$. The boundaries have been identified pointwise, but this is not enough in order to have a well defined geometry at points of $\Sigma_0 \subset \mathcal M$. This is achieved by identifying the tangent spaces at points of $\Phi^\pm (\abshyp)$, and it requires two steps. Consider the two, a priori different, inherited first fundamental forms $h^\pm := \Phi^\pm{}^* (g^\pm)$. The preliminary matching conditions demand that
 \begin{equation*}
 h^+ = h^- \equiv h, \label{matching_conditions_fff}
 \end{equation*}
 and they ensure that the subspace of vectors which are tangent to $\Sigma_0 \subset \mathcal M$ is well defined. 
In addition, the matching procedure is completed by selecting vectors $\vec l{}^\pm$ transverse to $\Phi^\pm (\abshyp)\subset \mathcal M^\pm$ at every point and identifying them, which translates into the following equations on $\abshyp$
\begin{equation}
\Phi^+{}^* (g^+ (\vec{l}{}^+, \cdot)) = \Phi^-{}^* (g^- (\vec{l}{}^-, \cdot)), \qquad \Phi^+{}^* (g^+ (\vec{l}{}^+, \vec{l}{}^+)) = \Phi^-{}^* (g^- (\vec{l}{}^-, \vec{l}{}^-)). \label{matching_conditions_riggings}
\end{equation}
 Also, the relative orientation of the rigging vectors must be appropriate so that these can be completely identified. Hence, the subspaces of vectors transverse to $\Phi^\pm (\abshyp)$ have been identified, so that the full tangent space at points of $\Sigma_0 \subset \mathcal M$ is well defined.  This spacetime approach for the exact matching conditions suffices for our purpose of formulating them in perturbation theory, and it is the point of view we take in Section \ref{subsection_perturbed_matching}.
 
 The inherent freedom in the method to characterise the hypersurface due to the arbitrariness in the choice of the rigging vector is relevant in the context of matching of spacetimes, and it has been studied extensively in \cite{lorentzian2007}. Nonetheless, the whole discussion therein is based on the spacetime perspective, which is not convenient for the particular issue of describing the freedom in the rigging to first order in perturbation theory.  In the first part of the section we revisit this matter in exact matchings and reproduce some conclusions from  \cite{lorentzian2007} from a point of view which is closer to the data approach. The matching conditions formulated within this approach can be found in Theorem 3 in \cite{Mars2005}, where the main underlying idea is that there is \textit{hypersurface metric data} $\{\abshyp, h_{ab}, l^{(2)}, l_a\}$ that can be embedded into the two different spacetimes $(\mathcal M^\pm, g^\pm, \partial \mathcal M^\pm)$ through some embeddings $\Phi^\pm$ and riggings $\vec{l}{}^\pm$ satisfying that $\Phi^\pm (\abshyp) = \partial M^\pm$ and that the orientation of such riggings is compatible.

For clarity we will consider that the preliminary matching equations are $h_{ab}^+ = h_{ab}^-$, we refer to $\{l_a^+, l^{(2)}{}^+\}= \{l_a^-, l^{(2)}{}^-\}$ as the rigging compatibility conditions, and the (full) matching conditions are $\{h_{ab}^+, l_a^+, l^{(2)}{}^+\}= \{ h_{ab}^-, l_a^-, l^{(2)}{}^-\}$. If the matching conditions are satisfied, we will refer to these \textit{hypersurface metric data} sets simply by $  \{\abshyp, h_{ab}, l_a, l^{(2)}\} $.
We also note that the matching conditions immediately imply that $n^{(2)}{}^+ = n^{(2)}{}^-$, $n^a{}^+ = n^a{}^-$ and $P^{ab}{}^+ = P^{ab}{}^-$.

The matching conditions introduced so far are purely geometric, in the sense that they are necessary in order to have a well defined geometry at points of $\Sigma_0$ and in particular, they ensure that the metric is continuous across $\Sigma_0$. For this reason, tensor fields which are constructed taking derivatives of the metric tensor must be defined in this context as tensor distributions \cite{Mars1993}. As a consequence of the matching conditions explained above, the Riemann, Ricci and Einstein tensor distributions acquire a singular part with support in the matching hypersurface. Therefore, the Einstein field equations in the distributional sense induce a singular part in the energy momentum tensor distribution with support in the matching hypersurface. This is known as a shell in the literature. 
These singular parts are related to the extrinsic properties of $\Sigma_0$ and a detailed analysis provides the following result:
\begin{theorem}{{\bf (adapted from Theorems 6 and 7 in Mars and Senovilla 1993 \cite{Mars1993})}}
	\footnote{The original version of the theorems in \cite{Mars1993} are stated in terms of the two covariant tensor field $\mathcal H_{ab} \equiv e_a^\alpha e_b^\beta \nabla_\alpha l_\beta $, which is not symmetric by itself but the object $[\mathcal H_{ab}]$ can be shown to be symmetric. Taking into account that by definition $Y_{ab} = (1/2)(\mathcal H_{ab} + \mathcal H_{ba})$, we see that $[Y_{ab}] = [\mathcal H_{ab}]$.}
		\begin{enumerate}
			\item 	At a point $p \in \Sigma_0$, the singular part of the {\bf Riemann} tensor distribution vanishes if and only if $[Y_{ab}] = 0$. 
			
			\item At a point $p \in \Sigma_0$ where the hypersurface is not null, the singular part of the {\bf Ricci} tensor distribution vanishes if and only if $[Y_{ab}] = 0$. 
			
			At a point $p \in \Sigma_0$ where the hypersurface is null, the singular part of the {\bf Ricci} tensor distribution vanishes if and only if $n^a[Y_{ab}] = 0$ and $P^{ab}[Y_{ab}] = 0$. 
			
			\item The singular part of the {\bf energy momentum} tensor distribution vanishes if and only if so does the singular part of the Ricci tensor distribution.
			
		\end{enumerate}
\label{theorem_singular_distributions_background}
\end{theorem}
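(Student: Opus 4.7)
The plan is to compute the singular parts of the Riemann, Ricci and energy-momentum tensor distributions explicitly and identify, in each case, the algebraic condition on $[Y_{ab}]$ under which they vanish. Since the matching conditions $[h_{ab}]=[l_a]=[l^{(2)}]=0$ already make the metric continuous at $\Sigma_0$, no $\delta_{\Sigma_0}$-term arises from differentiating $g$ once; the singular contributions enter only at second order, through the jump $[\partial_\mu g_{\alpha\beta}]$ which feeds $\partial_\rho\Gamma^\mu_{\nu\sigma}$ and hence the distributional Riemann tensor.

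The first step is to characterise this jump. Continuity of $g_{\alpha\beta}$ along tangential directions forces $e_c^\mu[\partial_\mu g_{\alpha\beta}]=0$, so the jump has the Hadamard form $[\partial_\mu g_{\alpha\beta}] = n_\mu\,\gamma_{\alpha\beta}$ for a symmetric tensor $\gamma_{\alpha\beta}$. Using the continuity of $\{l_a,l^{(2)}\}$ together with the identification $[l^\alpha]=0$ of the riggings across $\Sigma_0$ (which forces the rigging-direction components of $\gamma_{\alpha\beta}$ to vanish), one finds that $\gamma_{\alpha\beta}$ is determined by its tangential projections, which coincide with $2[Y_{ab}]$ by the defining relation $Y_{ab}=\tfrac12 e_a^\alpha e_b^\beta\mathcal{L}_{\vec l}g_{\alpha\beta}$. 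Explicitly,
\[
[\partial_\mu g_{\alpha\beta}] \;=\; 2\,[Y_{ab}]\,\omega^a_\alpha\,\omega^b_\beta\,n_\mu,
\]
and independence of the right-hand side from the choice of rigging can be verified using (\ref{transformation_Y}).

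The second step is a direct distributional computation. Substituting the above jump into the Riemann tensor distribution produces a singular part proportional to $[Y_{ab}]$ contracted with antisymmetrised combinations of $\boldsymbol{n}$ and $\boldsymbol{\omega}^a$; the linear independence of these covectors, via the completeness relation (\ref{basis_completeness}), delivers item 1. For item 2, one contracts with $g^{\beta\delta}$ expanded through (\ref{basis_completeness}) and the duality relations (\ref{constraints_definitions}); the only combinations of $[Y_{ab}]$ that survive in the Ricci singular part are $n^a[Y_{ab}]$ and $P^{ab}[Y_{ab}]$. At a non-null point, $n^{(2)}\neq 0$ and $h_{ab}$ is non-degenerate, so these two contractions, read against the duality relations, recover all components of $[Y_{ab}]$ and yield the equivalence with $[Y_{ab}]=0$. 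At a null point $n^{(2)}=0$ and $n^a$ lies in the kernel of $h_{ab}$, so the two contractions are genuinely weaker and give the stated null condition. Item 3 is then immediate: the distributional Einstein equation $G^{\text{sing}}_{\alpha\beta}=\chi T^{\text{sing}}_{\alpha\beta}$, together with continuity of $g_{\alpha\beta}$ (which makes the trace term regular), ensures $R^{\text{sing}}_{\alpha\beta}$ and $G^{\text{sing}}_{\alpha\beta}$ vanish simultaneously.

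The main obstacle I anticipate is the null case: producing a rigging-invariant distributional calculation (the rigging $\vec l$ is highly non-unique) and pinning down precisely which two contractions of $[Y_{ab}]$ survive in the Ricci singular part when $h_{ab}$ becomes degenerate. This will require careful use of (\ref{constraints_definitions}) and of the rigging transformations (\ref{rigging_transformation_l2})--(\ref{transformation_Y}), together with the fact that at null points $n^a$ spans the kernel of $h_{ab}$. Everything else is a transcription into the rigged-data language of the argument of \cite{Mars1993}.
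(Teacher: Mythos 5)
First, a point of comparison: the paper does not prove this theorem. It is quoted (``adapted from Theorems 6 and 7 in Mars and Senovilla 1993'') and used as an external input, so there is no internal proof to measure yours against. Your reconstruction follows the standard route of the cited reference: Hadamard form of the jump $[\partial_\mu g_{\alpha\beta}]=n_\mu\gamma_{\alpha\beta}$, the identification $e_a^\alpha e_b^\beta\gamma_{\alpha\beta}=2[Y_{ab}]$, explicit singular parts of the Riemann and Ricci distributions, and contraction against the rigged basis $\{\vec l,\vec e_a\}$, $\{\boldsymbol n,\boldsymbol\omega^a\}$. In outline this is the right argument, and items 1 and 3 go through as you describe: the singular Riemann part is antisymmetric in each index pair and carries a factor of $\boldsymbol n$ in each pair, so all of its components are determined by the tangential projection $\gamma_{ab}=2[Y_{ab}]$ alone, and item 3 reduces to the invertibility of the trace-reversal for $\dimension>1$.

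Two concrete points need repair. First, your claim that the identification $[l^\alpha]=0$ of the riggings ``forces the rigging-direction components of $\gamma_{\alpha\beta}$ to vanish'' is not correct: continuity of $l_a$ and $l^{(2)}$ is a zeroth-order condition on $g$ at $\Sigma_0$ and constrains nothing about $\gamma_{\alpha\beta}l^\beta$. Those components are pure gauge (removable by a change of the $C^1$ atlas) and, more to the point, they drop out of the singular Riemann part entirely; the argument should be one of irrelevance, not of vanishing. Second, and more substantively, item 2 as you argue it fails at non-null points: it is not true that ``the only combinations of $[Y_{ab}]$ that survive in the Ricci singular part are $n^a[Y_{ab}]$ and $P^{ab}[Y_{ab}]$''. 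The singular Ricci part is proportional to $P^{ab}[Y_{ab}]\,n_\alpha n_\beta-2\,n^a[Y_{ab}]\,\omega^b_{(\alpha}n_{\beta)}+n^{(2)}[Y_{ab}]\,\omega^a_\alpha\omega^b_\beta$, and the third term is essential: at a non-null point it is precisely $n^{(2)}[Y_{ab}]=0$ with $n^{(2)}\neq0$ that yields $[Y_{ab}]=0$. The two contractions you retain amount to only $\dimension+1$ conditions and cannot recover all $\dimension(\dimension+1)/2$ components of $[Y_{ab}]$; indeed, with the natural choice $\vec l=\norm\vec n$ at a non-null point one has $n^a=0$, so $n^a[Y_{ab}]=0$ is vacuous and $P^{ab}[Y_{ab}]=0$ is a single trace condition. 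At null points $n^{(2)}=0$ kills the third term and your stated pair of conditions is exactly what remains, so that half of item 2 is fine.
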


It is out of the scope of this paper to extend a distributional approach to perturbation theory, but the approach \cite{Mars_constraints} puts forward an alternative method to construct the shell that depends exclusively on the \textit{hypersurface data} (or derived objects), without invoking a distributional approach. It is shown then that this energy momentum tensor not only satisfies plausible properties expected of a shell, but also that agrees with the well known cases of (everywhere) timelike/spacelike and null shells previously studied in the literature. Its expression in terms of the \textit{hypersurface data} (or derived objects) is \cite{Mars_constraints}
\begin{equation}
\tau^{ab} := \left \lbrace (n^a P^{bc} + n^b P^{ac}) n^d -(n^{(2)}P^{ac}P^{bd} + P^{ab} n^c n^d) + (n^{(2)}P^{ab} -n^a n^b)P^{cd} \right \rbrace [Y_{cd}]. \label{definition_shell}
\end{equation}
In order to avoid a trivial $\tau^{ab}$ we assume from now on that the dimension of $\abshyp$ satisfies $\dimension > 1$.
It is often required that the matching procedure removes this shell, which is accomplished if and only if \cite{Mars_constraints}
\begin{equation}
n^{(2)} P^{ab} [Y_{ac}] = 0, \quad P^{ab}[Y_{ab}] = 0, \quad n^a [Y_{ab}] = 0. \label{eq_junction_exact}
\end{equation}
These conditions hold for any causal character, but note that when particularised to null and non-null points they become $(2)$ in Theorem \ref{theorem_singular_distributions_background}.
This set of conditions is usually known as \textit{junction conditions}, and this type of matchings are called \textit{proper matchings}. 

We revisit a result from \cite{lorentzian2007} about the behaviour of the matching conditions under rigging transformations, providing a different proof based in the \textit{ hypersurface data} approach and better adapted to the study of the perturbations in the next part of the section.

\begin{lemma} {\bf{(Mars, Senovilla, Vera 2007 \cite{lorentzian2007})}}\label{lemma_rigging_compatibility}
	Consider a pair of hypersurface metric data sets $\{\abshyp, h_{ab}^\pm, l_a^\pm, l^{(2)}{}^\pm\}$ satisfying the matching conditions, and a corresponding pair of  rigging transformations driven by $\{\lambda^\pm, \overline v^a{}^\pm\}$. Then
	\begin{itemize}
	\item At null points, the rigging compatibility conditions hold for $\{l_a{}'{}^\pm, l^{(2)}{}'{}^\pm\}$ if and only if $\lambda^+ = \lambda^-$ and $\overline v^a{}^+ = \overline v^a{}^-$. 
	
	\item At non-null points the rigging compatibility conditions for $\{l_a{}'{}^\pm, l^{(2)}{}'{}^\pm\}$, supplemented with a condition ensuring compatibility in the orientations of the riggings, holds if and only if $\lambda^+ = \lambda^-$ and $\overline v^a{}^+ = \overline v^a{}^-$.
	\end{itemize}
\end{lemma}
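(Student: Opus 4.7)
The ``if'' implication is immediate: with $\lambda^+=\lambda^-$ and $\overline{v}^a{}^+=\overline{v}^a{}^-$, the transformation formulas (\ref{rigging_transformed_la})--(\ref{rigging_transformed_l2}) give identical outputs on the two sides whenever the inputs $\{h,l_a,l^{(2)}\}$ already coincide. I therefore concentrate on the converse.

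The overall strategy is to exploit the fact that if the rigging compatibility conditions $\boldsymbol{l}'{}^+=\boldsymbol{l}'{}^-$ and $l^{(2)}{}'{}^+=l^{(2)}{}'{}^-$ hold, then the full transformed set $\{h,l'_a,l^{(2)}{}'\}$ is hypersurface metric data satisfying the matching conditions on both sides. By the uniqueness statement for the solutions of the algebraic system (\ref{constraints_definitions}) at the end of Section~\ref{section_introduction_hypersurfaces}, the derived fields necessarily coincide: $n^{(2)}{}'{}^+=n^{(2)}{}'{}^-$, $n^a{}'{}^+=n^a{}'{}^-$ and $P^{ab}{}'{}^+=P^{ab}{}'{}^-$. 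I can now feed these equalities into the transformation formulas (\ref{inverse_data_transformations}).

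At a non-null point the identity $n^{(2)}{}'{}^\pm=n^{(2)}/(\lambda^\pm)^2$ together with $n^{(2)}\neq 0$ forces $(\lambda^+)^2=(\lambda^-)^2$; the ``compatibility of orientations'' hypothesis singles out $\lambda^+=\lambda^-$. Substituting this into $n^a{}'{}^\pm=(n^a-n^{(2)}\overline{v}^a{}^\pm)/\lambda^\pm$ yields $n^{(2)}(\overline{v}^a{}^+-\overline{v}^a{}^-)=0$, and once more $n^{(2)}\neq 0$ gives $\overline{v}^a{}^+=\overline{v}^a{}^-$, finishing this case.

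At a null point, $n^{(2)}=0$, so the previous chain collapses and I instead work directly with (\ref{rigging_transformed_la})--(\ref{rigging_transformed_l2}). Contracting the linear relation $\lambda^+(l_a+Z_a^+)=\lambda^-(l_a+Z_a^-)$ with $n^a$ and using the constraint $n^a Z_a^\pm=-n^{(2)}W^\pm=0$ together with $n^a l_a=1-n^{(2)}l^{(2)}=1$ immediately gives $\lambda^+=\lambda^-$ (so no orientation hypothesis is needed here). Feeding this back into the linear relation gives $Z_a^+=Z_a^-$, and the scalar relation then reduces, at null points, to $2W^+=2W^-$; combining with the decomposition $\overline{v}^a=Wn^a+P^{ab}Z_b$ yields $\overline{v}^a{}^+=\overline{v}^a{}^-$.

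The main subtlety of the argument is the sign ambiguity $\lambda^+=\pm\lambda^-$ that the inverse-data route inevitably produces at non-null points; this is exactly the reason an orientation assumption must appear in that case, and its absence at null points is what makes the second bullet of the lemma slightly stronger than the first. Everything else is a direct algebraic unpacking of the transformation laws already established in Section~\ref{section_introduction_hypersurfaces}.
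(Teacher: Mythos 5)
Your proof is correct, and the two branches deserve separate comment. At null points your argument is essentially the paper's: both contract the one-form relation with $n^a$, use $n^a l_a=1-n^{(2)}l^{(2)}=1$ and $n^aZ_a^\pm=-n^{(2)}W^\pm=0$ to get $\lambda^+=\lambda^-$ immediately, and then peel off $Z_a$ and $W$ from the remaining relations. At non-null points, however, you take a genuinely different route. The paper stays entirely with the transformed metric data \eqref{rigging_transformed_la}--\eqref{rigging_transformed_l2}: it solves the $n^a$-contraction for $[\lambda W]$ and $[\lambda Z_a]$ in terms of $[\lambda]$ and then substitutes into the scalar relation, which after a ``somewhat long calculation'' collapses to $[\lambda](\lambda^++\lambda^-)=0$. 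You instead pass to the derived objects: since the transformed data sets coincide and the system \eqref{constraints_definitions} has a unique solution (the data matrix being Lorentzian, hence non-degenerate), the primed $\{n^{(2)},n^a,P^{ab}\}$ must agree across the matching, and the simple transformation laws \eqref{inverse_data_transformations} then give $(\lambda^+)^2=(\lambda^-)^2$ from $n^{(2)}{}'$ and $n^{(2)}(\overline{v}^{a+}-\overline{v}^{a-})=0$ from $n^a{}'$, with $n^{(2)}\neq0$ doing the rest. This buys you a shorter algebraic path at the cost of invoking the uniqueness of the inverse data, which the paper itself licenses (it notes that the matching conditions immediately imply $n^{(2)}{}^+=n^{(2)}{}^-$, etc.). Both routes hit the identical sign ambiguity $\lambda^+=\pm\lambda^-$, so your observation that the orientation hypothesis is forced at non-null points and superfluous at null points matches the paper exactly.
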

\begin{proof}
	Let us define $\tilde v^a := \lambda \overline v^a$ and consider the difference of the gauge transformed hypersurface metric data given in  (\ref{rigging_transformed_la}) and (\ref{rigging_transformed_l2})
	\begin{eqnarray}
	%\left[h_{ab}'\right] &=& [h_{ab}], \\
	\left[l_a' \right] &=& [\lambda] l_a + [\lambda Z_a], \label{riggings_1}\\
	\left[l^{(2)}{}'\right] &=& [\lambda^2] l^{(2)} + 2[\lambda^2 W] + P^{ab}[\lambda^2 Z_a Z_b] - n^{(2)} [\lambda^2 W^2], \label{riggings_2}
	\end{eqnarray}
	as well as the constraint $n^{(2)} W^\pm + n^a Z_a^\pm = 0$, which is valid for both sets of data.
	We use the identities for the difference of products between any two objects in $\abshyp$, $a$ and $b$
	\begin{equation}
	[ab] \equiv a^+ b^+ - a^- b^- = [a][b] + a^-[b] + b^-[a], \quad [a^2] = [a]([a]+2a^-). \label{brackts_products}
	\end{equation}
	First of all consider that $[\lambda] = [W] = [Z_a]=0$. It is clear from the two expressions above that $\left[l_a' \right] = \left[l^{(2)}{}'\right] = 0$.
	
	For the reverse implication, we impose $\left[l_a' \right] = \left[l^{(2)}{}'\right] = 0$, contract (\ref{riggings_1}) with $n^a$
	\begin{equation}
	[\lambda](1-n^{(2)}l^{(2)}) - n^{(2)}[\lambda W]= 0. \label{riggings1_1}
	\end{equation}
	We distinguish between null and non-null points, and start with the first possibility. 
	It is clear from the relation above that $[\lambda] = 0$. Then (\ref{riggings_1}) leads to $[\lambda Z_a] = \lambda[Z_a] = 0$, and since $\lambda \neq 0$ we conclude that $[Z_a] = 0$. The remaining equation (\ref{riggings_2}) yields $[W] = 0$. 
	
	Next we consider non-null points, and (\ref{riggings1_1}) and (\ref{riggings_1}) provide 
	\begin{equation}
	[\lambda W] = \frac{1}{n^{(2)}} \left(1-n^{(2)}l^{(2)} \right) [\lambda], \quad [\lambda Z_a] = -l_a [\lambda]. \label{riggings_1_2}
	\end{equation}
	We write the equation (\ref{riggings_2})$=0$ in terms of $[\lambda]$, $[\lambda W]$ and $[\lambda Z_a] $  using the identities (\ref{brackts_products}), and after a somewhat long calculation making use the two relations above we obtain that $[\lambda] (\lambda^+ + \lambda^-) = 0$. We rule out the second solution $\lambda^+ = -\lambda^-$, because it would change the relative orientation between the riggings $\vec{l}{}^+$ and $\vec{l}{}^-$ and therefore we are left with $[\lambda] = 0$. Equations (\ref{riggings_1_2}) lead directly to $[W] = [Z_a] = 0$.

		\hfill $\blacksquare$
\end{proof}

The result that we have obtained recovers Lemmas 2 and 3 from \cite{lorentzian2007}. In fact, given one of the riggings, say $\vec{l}^+$, the rigging compatibility conditions have a unique solution at null points, and a unique solution with the adequate orientation at non-null points.  This uniqueness property of the riggings has implications on the rigged fundamental form. Its change under a rigging transformation, addressed in (\ref{transformation_Y}),  directly leads to the following theorem.   
\begin{theorem} {\bf(Mars and Senovilla \cite{Mars1993})} \label{theorem_junction_Y_riggings}
	If the matching conditions are satisfied, the junction condition $[Y_{ab}]=0$ does not depend on the freedom in the choice of $\{l_a, l^{(2)}\}$.
\end{theorem}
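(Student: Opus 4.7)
The plan is to realise this theorem as an immediate corollary of Lemma \ref{lemma_rigging_compatibility} combined with the background transformation law (\ref{transformation_Y}): the lemma constrains the allowed rigging freedom once a matching is in place, and (\ref{transformation_Y}) then shows that the resulting change in $Y_{ab}$ is purely multiplicative in $[Y_{ab}]$.

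First, I would fix a matching so that $[h_{ab}] = [l_a] = [l^{(2)}] = 0$ and consider any second choice $\{\vec{l}'{}^\pm\}$ of compatible riggings, related to $\{\vec{l}^\pm\}$ by gauge data $\{\lambda^\pm, \overline{v}^{a\pm}\}$. Compatibility of the new riggings (i.e.\ preservation of the matching conditions, together with the orientation condition at non-null points) forces, by Lemma \ref{lemma_rigging_compatibility}, $[\lambda] = 0$ and $[\overline{v}^a] = 0$. Thus $\lambda$ and $\overline{v}^a$ descend to well-defined fields on $\abshyp$; the same applies to their tangential derivatives, e.g.\ $[\lambda_{,a}] = 0$, since these are intrinsic derivatives along $\abshyp$ of functions whose jumps across $\Sigma_0$ vanish identically.

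Second, I would apply (\ref{transformation_Y}) separately on each side and subtract to obtain
\begin{equation*}
[Y'_{ab}] = [\lambda Y_{ab}] + \frac{1}{2}\left([\lambda_{,a} l_b] + [\lambda_{,b} l_a]\right) + \frac{1}{2}[\mathcal{L}_{\lambda \overline{v}} h_{ab}].
\end{equation*}
Using the product rule (\ref{brackts_products}) together with the jump-free statements collected above: the first term collapses to $\lambda[Y_{ab}]$ because $[\lambda] = 0$; the second vanishes because both factors $\lambda_{,a}$ and $l_b$ have zero jump; and the third vanishes because $\mathcal{L}_{\lambda \overline{v}} h_{ab}$ is built algebraically from $\lambda$, $\overline{v}^c$, $h_{ab}$ and their tangential derivatives, all of which are continuous across $\Sigma_0$. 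The net result is $[Y'_{ab}] = \lambda\, [Y_{ab}]$, and since $\lambda$ is nowhere vanishing, the junction condition $[Y_{ab}] = 0$ is equivalent to $[Y'_{ab}] = 0$.

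The main work is carried out in Lemma \ref{lemma_rigging_compatibility}; once that is in hand, the only delicate point is the observation that continuity of a scalar or tensor across $\Sigma_0$ passes to its tangential derivatives, so I expect no serious obstacle in the calculation itself.
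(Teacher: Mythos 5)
Your proposal is correct and follows essentially the same route as the paper: the paper presents the theorem as a direct consequence of Lemma \ref{lemma_rigging_compatibility} (which forces $[\lambda]=[\overline{v}^a]=0$ for compatible riggings) together with the transformation law (\ref{transformation_Y}), which is exactly the argument you spell out. Your explicit computation of $[Y'_{ab}]=\lambda[Y_{ab}]$, including the observation that continuity on $\abshyp$ passes to tangential derivatives, correctly fills in the details the paper leaves implicit.
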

This statement about the independence of the junction conditions under rigging transformations is valid also for null points, so that if $n^a[Y_{ab}] = 0$ and $P^{ab}[Y_{ab}] = 0$ hold at a null point, then $n^a{}'[Y_{ab}{}']=0$ and $P^{ab}{}'[Y_{ab}{}'] = 0$, as a consequence of the expressions (\ref{transformation_Y}) and (\ref{inverse_data_transformations}) that relate the primed and non-primed objects, the matching conditions and the compatibility conditions for the rigging transformations.

\subsection{Matching conditions to first order in perturbation theory}
\label{subsection_perturbed_matching}
The general argument to formulate the perturbed matching conditions works as follows. We assume that there is a matching scheme in the  pair of $\varepsilon$-families of spacetimes, satisfying the matching conditions at each $\varepsilon$ (we will refer to this as a matching \textit{upstairs}). Using the different mappings involved in the problem, this information is encoded in some objects at the hypersurfaces $\Sigma_\varepsilon^\pm$ in the corresponding background spacetimes. At this point, the formalism of perturbing hypersurfaces can be invoked and produces a set of perturbed matching conditions in the background matching hypersurface. We develop this procedure hereafter, and for clarity we include a diagram (Figure \ref{figure:per_picture_4}) displaying the setting.

We assume two families of spacetimes with boundary 
$(\mathcal{M}_\varepsilon^\pm,\hat{g}_\varepsilon^\pm,\hat{\Sigma}_\varepsilon^\pm)$, 
 matched across their respective boundaries $\hat{\Sigma}_\varepsilon^\pm$ for each $\varepsilon$,
so that the mappings $\hat{\Phi}_\varepsilon^\pm: \hat{\abshyp}_\varepsilon \rightarrow \hat{\Sigma}_\varepsilon^\pm$  diffeomorphically relate the boundaries with the abstract hypersurfaces $\hat{\abshyp}_\varepsilon$ where the corresponding first fundamental forms and rigging data  from each side are equated, $\hat{h}_\varepsilon^+=\hat{h}_\varepsilon^-$, $\vec{\hat l}_\varepsilon{}^+ = \vec{\hat l}_\varepsilon^-$.  As usual, $\varepsilon = 0$ singles out the background configuration, so that $\abshyp$ is the matching hypersurface of the background where $\hat{\abshyp}_0\equiv \abshyp$ and $h= \hat{h}_0^+=\hat{h}_0^-$.  This is the most natural construction for the discussion, although it is not absolutely necessary since the hypersurfaces $\abshyp$ and $\hat \abshyp_\varepsilon$ are diffeomorphic.

The  following step is then to construct the objects $h_\varepsilon^\pm$, $\boldsymbol{l}_\varepsilon^\pm$ and $l^{(2)}_\varepsilon{}^\pm$ on $\abshyp$,  from those objects that have been defined  on $\hat{\abshyp}_\varepsilon$. 
Because there is a matching \textit{upstairs}, we consider the embeddings $\hat \Phi_\varepsilon^+: \hat \abshyp_\varepsilon \rightarrow \hat \Sigma_\varepsilon^+$ and, and the perturbative setting involves the diffeomorphisms $\psi_\varepsilon^+: \mathcal M^+ \rightarrow \mathcal M_\varepsilon^+$ and $\phi_\varepsilon: \abshyp \rightarrow \hat{\abshyp}_\varepsilon$. For convenience we also define the two following mappings $\phi_\varepsilon^+ := \hat \Phi_\varepsilon^+ \circ \phi_\varepsilon: \abshyp \rightarrow \hat{\Sigma}_\varepsilon^+$ and $\Phi_\varepsilon^+ := \psi_\varepsilon^+{}^{-1} \circ (\hat{\Phi}_\varepsilon^+ \circ \phi_\varepsilon): \abshyp \rightarrow \Sigma_\varepsilon^+$. Take, for example, the tensorial objects corresponding to the different first fundamental forms, $\hat{h}_\varepsilon^+ \equiv \hat{\Phi}_\varepsilon^+ (\hat{g}_\varepsilon^+)$ and $h_\varepsilon^+ \equiv \Phi_\varepsilon^+ (g_\varepsilon^+)$. These are related  by the diffeomorphism $\phi_\varepsilon$ as follows
\begin{eqnarray*}
	\phi_\varepsilon^* (\hat{h}_\varepsilon^+) &=& \phi_\varepsilon^* \circ \hat{\Phi}^+_\varepsilon{}^* \left(\hat{g}_\varepsilon^+ \right) = (\phi_\varepsilon^* \circ \hat{\Phi}^+_\varepsilon{}^* \circ \psi_\varepsilon^{-1}{}^*)\left(g_\varepsilon^+ \right) \nonumber\\
	&=& (\psi_\varepsilon^{-1} \circ \hat{\Phi}^+_\varepsilon \circ \phi_\varepsilon)^*(g_\varepsilon^+)= \Phi_\varepsilon^+{}^* (g_\varepsilon^+) = h_\varepsilon^+.
\end{eqnarray*}
This same procedure applied to the ``$-$'' spacetime yields $\phi_\varepsilon^* (\hat{h}_\varepsilon^-)= h_\varepsilon^-$. Therefore the matching condition \textit{upstairs} $\hat{h}_\varepsilon^+ = \hat{h}_\varepsilon^-$ implies the condition $h_\varepsilon^+ = h_\varepsilon^-$ in the abstract hypersurface $\abshyp$ by construction. This argument extends analogously for the remaining $\varepsilon-$family of scalars and tensor fields involved in the matching. 
Hence, the matching conditions for each $\varepsilon$ consist of imposing 
\begin{equation}
h^+_\varepsilon = h^-_\varepsilon, \qquad \boldsymbol{l}_\varepsilon^+ = \boldsymbol{l}_\varepsilon^-, \qquad l^{(2)}_\varepsilon{}^+ = l^{(2)}_\varepsilon{}^-.
%\qquad  Y^+_\varepsilon = Y^-_\varepsilon. 
\label{matchingconditions_epsilon}
\end{equation}
The first $\varepsilon$ derivatives of (\ref{matchingconditions_epsilon}), evaluated at $\varepsilon = 0$, provide the perturbed matching conditions.

\begin{figure}[h]

	\centering
	%%%%%%%%%%%%%%%%%%%%%%%%%%%%%%
	%%%%%%%%%%%%%%%%%%%%%%%%%%%%%%
	\includegraphics[%scale=0.20, 
	width=1.0\textwidth]{./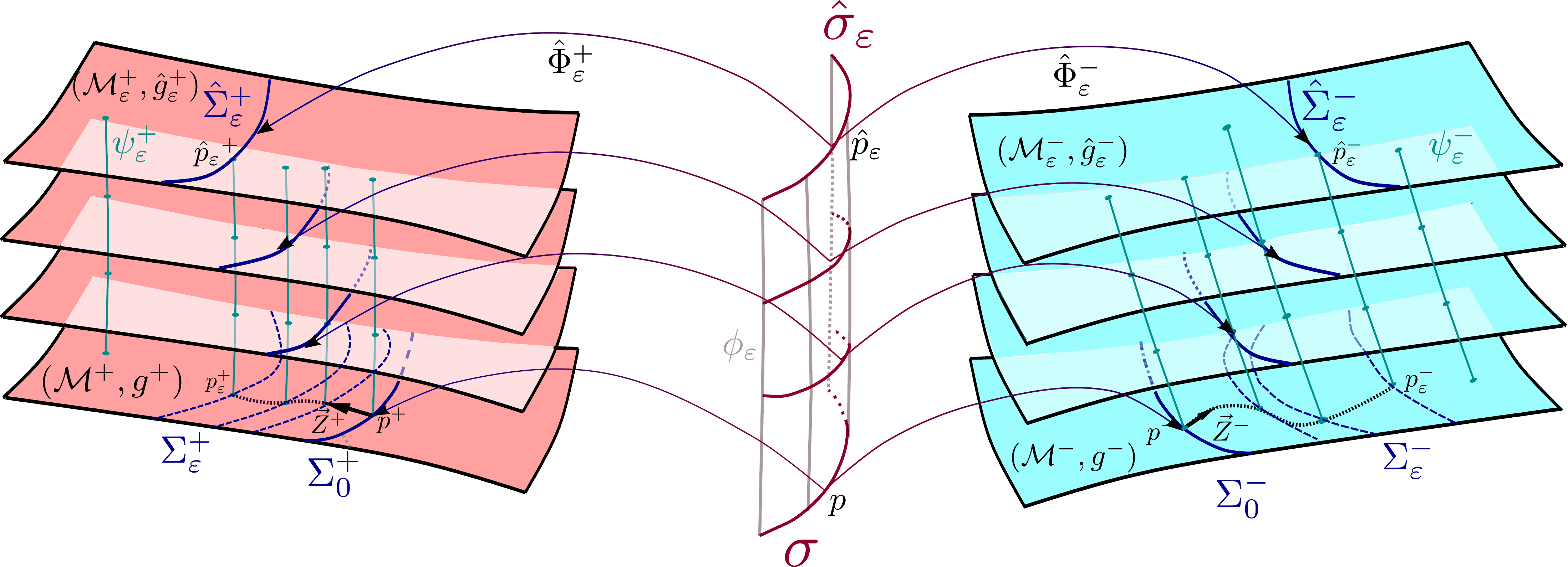}
	%%%%%%%%%%%%%%%%%%%%%%%%%%%%%%%%%
	%%%%%%%%%%%%%%%%%%%%%%%%%%%%%%%%%
	\caption{This diagram illustrates the setting for the perturbed matching of spacetimes. the basic mappings $\hat{\Phi}_\varepsilon^\pm$, $\psi_\varepsilon^\pm$ and $\phi_\varepsilon$ are depicted. The perturbed matching conditions (\ref{matchingconditions_epsilon}) are formulated in $\abshyp$.
	}
\label{figure:per_picture_4}
\end{figure}

\begin{theorem}{\bf(Matching conditions to first order}, generalization of Theorem 1 in \cite{Mars2005} {\bf )}
	%{Generalization of  \cite{Mars2005}.}
	\label{theorem:perturbed_matching}

	Let $(\mathcal{M},g)$ be the background spacetime arising from the matching of two spacetimes $(\mathcal{M}^+,g^+, \Sigma_0^+)$ and $(\mathcal{M}^-,g^-, \Sigma_0^-)$ across their boundaries $\Sigma_0^\pm$, diffeomorphic among themselves and also diffeomorphic to an abstract hypersurface $\abshyp$ so that there exist embeddings  $\Phi_0^\pm:\abshyp \rightarrow \Sigma_0^\pm \subset \mathcal{M}^\pm$. The riggings $\vec{l}^\pm$ for $\Sigma_0^\pm$ have been identified so that $\vec{l}\equiv \vec{l}^\pm$.
	
	Let $g_1^\pm$ be first order metric perturbations in $\mathcal{M}^\pm$, and $\vec{Z}{}^\pm = Q^\pm \vec{l} + \vec{T}{}^\pm$ deformation vectors for the hypersurfaces $\Sigma_0^\pm$. These induce first order perturbations of the hypersurface metric data $\{\delta h_{ab}^\pm, \delta l_a^\pm, \delta l^{(2)}{}^\pm \}$ as described in Propositions \ref{proposition_deltah} and \ref{proposition_deltal}, after the particularisations   $Q \rightarrow Q^\pm$, $\vec{T}_\abshyp \rightarrow \vec{T}_\abshyp^\pm$, $g_1 \rightarrow {g_1}^\pm$, $\vec{\zeta} \rightarrow \vec{\zeta}^\pm$, $g \rightarrow g^\pm$ therein.
	
	Then the first order perturbed matching conditions are fulfilled if and only if there exist two scalars $Q^\pm$ and two vectors $\vec{T} ^\pm_\abshyp$ on $\abshyp$ for which 
	\begin{equation}
	\delta h_{ab}^+ = \delta h_{ab}^-, \qquad \delta l_a^+ = \delta l_a^-, \qquad \delta l^{(2)}{}^+ = \delta l^{(2)}{}^-.
	%,\qquad \delta Y_{ab}^+= \delta Y_{ab}^-, 
	\label{matching_fo}
	\end{equation}
\end{theorem}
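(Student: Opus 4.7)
I would proceed in two parts, corresponding to the two directions of the ``if and only if''.

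For the forward implication, the strategy is to exploit the fact that the upstairs matching conditions $\hat{h}^+_\varepsilon=\hat{h}^-_\varepsilon$, $\hat{\boldsymbol{l}}^+_\varepsilon=\hat{\boldsymbol{l}}^-_\varepsilon$, $\hat{l}^{(2)+}_\varepsilon=\hat{l}^{(2)-}_\varepsilon$ on $\hat{\abshyp}_\varepsilon$ can be transported to the fixed abstract manifold $\abshyp$ via pullback by $\phi_\varepsilon$, yielding the equalities (\ref{matchingconditions_epsilon}) of tensorial objects defined on one and the same manifold. Differentiation at $\varepsilon=0$ is then legitimate term-by-term. The deformation vectors are extracted as $\vec{Z}^\pm := \partial_\varepsilon \Phi_\varepsilon^\pm|_{\varepsilon=0}$, where $\Phi_\varepsilon^\pm = (\psi_\varepsilon^\pm)^{-1}\circ\hat\Phi_\varepsilon^\pm\circ\phi_\varepsilon$. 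Because the background embeddings agree, $\Phi^+_0=\Phi^-_0$, and the background rigging is shared by construction, each $\vec{Z}^\pm$ decomposes as $Q^\pm\vec{l}+\vec{T}^\pm$ at points of $\Sigma_0$.

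Propositions \ref{proposition_deltah} and \ref{proposition_deltal} then give the explicit formulas for $\delta h^\pm_{ab}$, $\delta l^\pm_a$ and $\delta l^{(2)\pm}$ with the particularisations listed in the statement of the theorem. Differentiating (\ref{matchingconditions_epsilon}) at $\varepsilon=0$ is exactly the statement (\ref{matching_fo}), so the $(Q^\pm,\vec{T}^\pm)$ obtained from the background deformation vectors satisfy the claimed equalities.

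For the converse, I would start from scalars $Q^\pm$ and tangent vectors $\vec{T}^\pm_\abshyp$ on $\abshyp$ for which (\ref{matching_fo}) holds, and then reconstruct a one-parameter matching scheme whose linearisation reproduces this data. Concretely, one builds $\varepsilon$-families of embeddings $\Phi^\pm_\varepsilon$ with $\partial_\varepsilon\Phi^\pm_\varepsilon|_{\varepsilon=0}=Q^\pm\vec{l}+\vec{T}^\pm$, together with families of riggings $\vec{l}_\varepsilon^\pm$ realising the prescribed $\vec{\zeta}^\pm$ (whose existence is secured by Proposition \ref{proposition_extensions_existence} and the gauge analysis of Lemma \ref{lemma_rigging_transformations_fo}). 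Defining $\phi_\varepsilon$ and $\psi^\pm_\varepsilon$ so that the resulting diagram commutes, the identities (\ref{matching_fo}) are precisely the first-order Taylor coefficients of (\ref{matchingconditions_epsilon}), so the perturbative matching conditions are satisfied to the order in question.

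The main obstacle I anticipate is controlling the auxiliary vectors $\vec{\zeta}^\pm$ that appear in the expressions for $\delta l^\pm_a$ and $\delta l^{(2)\pm}$. These encode the first-order behaviour of the riggings themselves and not just of the location of the hypersurfaces, so one must verify that, given the upstairs rigging identification $\vec{\hat l}^+_\varepsilon=\vec{\hat l}^-_\varepsilon$, the construction of Section 3 produces $\vec{\zeta}^\pm$ whose difference is consistent with the matching and does not depend on the arbitrary choice of extensions used to compute the perturbations. This is precisely where Corollary \ref{lemma_independent_extensions} (extension-independence) and the inversion formulas of Proposition \ref{proposition_gauge_inversion} come in, guaranteeing that (\ref{matching_fo}) captures the genuine content of the linearised matching without spurious extension dependence.
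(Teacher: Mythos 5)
Your proposal is correct and follows essentially the same route as the paper: the paper's own argument is precisely the pullback identity $\phi_\varepsilon^*(\hat{h}_\varepsilon^\pm)=h_\varepsilon^\pm$ (and its analogues for $\boldsymbol{l}_\varepsilon^\pm$, $l^{(2)\pm}_\varepsilon$) transporting the upstairs matching conditions to the fixed manifold $\abshyp$, followed by differentiation of (\ref{matchingconditions_epsilon}) at $\varepsilon=0$ and substitution of the formulas of Propositions \ref{proposition_deltah} and \ref{proposition_deltal}. Your explicit treatment of the converse (reconstructing an $\varepsilon$-family realising prescribed $Q^\pm$, $\vec{T}^\pm_\abshyp$, $\vec{\zeta}^\pm$) and your flagging of the extension-independence of $\vec{\zeta}^\pm$ are consistent with, and slightly more detailed than, what the paper leaves implicit.
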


The matching conditions to first order imply that $\delta n^{(2)}{}^+ = \delta n^{(2)}{}^-$, $\delta n^a{}^+ = \delta n^a{}^-$ and $\delta P^{ab}{}^+ = \delta P^{ab}{}^-$.

We extend to first order in perturbation theory the result on the uniqueness on the riggings arising from the rigging compatibility conditions and stated in Lemma \ref{lemma_rigging_compatibility}.
 
\begin{proposition}{\bf(Rigging independence of the matching conditions to first order)}
	Consider a pair of sets of \textit{hypersurface metric data} satisfying the matching conditions and 
	gauge fields $\{\lambda, W, Z_a \}$ that preserve the rigging compatibility conditions.

	Consider first order perturbations of the rigging data satisfying $\{\delta l_a^+ = \delta l_a^-, \delta l^{(2)}{}^+ = \delta  l^{(2)}{}^- \}$ and gauge fields $\{\delta \lambda^\pm, \delta W^\pm, \delta Z_a^\pm \}$. The transformed first order perturbations of the rigging data are compatible, i.e. $\{\delta l^{(2)}{}'{}^+ = \delta l^{(2)}{}'{}^-, \delta l_a'{}^+ = \delta l_a'{}^- \}$, if and only if $ \delta \lambda^+ = \delta \lambda^-$, $\delta W^+ = \delta W^-$ and $\delta Z_a^+ = \delta Z_a^-$.
\end{proposition}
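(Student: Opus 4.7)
The plan is to mimic, at first order, the argument behind Lemma \ref{lemma_rigging_compatibility}. The background matching together with Lemma \ref{lemma_rigging_compatibility} already forces $\lambda^+=\lambda^-$, $W^+=W^-$, $Z_a^+=Z_a^-$, and hence $\overline{v}^{a+}=\overline{v}^{a-}$; combined with the continuity of $\{h_{ab},l_a,l^{(2)},n^{(2)},n^a,P^{ab}\}$ across $\abshyp$ and the rigging-independence of $h$ (so that the upstairs matching forces $[\delta h_{ab}]=0$), every ``background'' ingredient appearing in (\ref{rigging_transformed_deltala})--(\ref{rigging_transformed_deltal2}) is continuous across $\abshyp$. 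The sufficient direction is then immediate: substituting $[\delta\lambda]=[\delta W]=[\delta Z_a]=0$ together with the hypotheses $[\delta l_a]=0$ and $[\delta l^{(2)}]=0$ into the jumps of the transformation formulas from Lemma \ref{lemma_rigging_transformations_fo} yields $[\delta l_a']=0$ and $[\delta l^{(2)}{}']=0$.

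For the necessary direction I would bracket (\ref{rigging_transformed_deltala}) and (\ref{rigging_transformed_deltal2}) and drop every term that vanishes by the continuity noted above, obtaining the reduced system
\begin{equation*}
(\mathrm{I}):\quad [\delta\lambda](l_a+Z_a)+\lambda[\delta Z_a]=0,
\end{equation*}
\begin{equation*}
(\mathrm{II}):\quad \lambda[\delta\lambda](l^{(2)}+2W+Z_a\overline{v}^a)+\lambda^2([\delta W]+\overline{v}^a[\delta Z_a])=0,
\end{equation*}
supplemented with the linearised tangency constraint $n^a[\delta Z_a]+n^{(2)}[\delta W]=0$. Contracting $(\mathrm{I})$ with $n^a$ and using $n^a l_a=1-n^{(2)}l^{(2)}$, $n^a Z_a=-n^{(2)}W$ together with the constraint gives the scalar relation
\begin{equation*}
(\mathrm{I}'):\quad [\delta\lambda]\bigl(1-n^{(2)}(l^{(2)}+W)\bigr)=\lambda n^{(2)}[\delta W].
\end{equation*}

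At null points $n^{(2)}=0$, so $(\mathrm{I}')$ immediately delivers $[\delta\lambda]=0$; then $(\mathrm{I})$ forces $[\delta Z_a]=0$, and $(\mathrm{II})$ collapses to $\lambda^2[\delta W]=0$, giving the claim. At non-null points the work is algebraic: decompose $\overline{v}^a=P^{ab}Z_b+n^a W$ in $(\mathrm{II})$, eliminate the tangential piece $\lambda P^{ab}Z_b[\delta Z_a]$ by contracting $(\mathrm{I})$ with $P^{ab}Z_b$, and then simplify the remaining scalar coefficients using $Z_a\overline{v}^a=P^{ab}Z_aZ_b-n^{(2)}W^2$ and $P^{ab}l_a=-l^{(2)}n^b$. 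Substituting $[\delta W]$ from $(\mathrm{I}')$ reduces $(\mathrm{II})$ to a single relation of the shape $(2\lambda[\delta\lambda]/n^{(2)})\cdot K=0$. Once $[\delta\lambda]=0$ is established, $(\mathrm{I}')$ yields $[\delta W]=0$ and $(\mathrm{I})$ then yields $[\delta Z_a]=0$.

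The main obstacle is thus the non-null case, specifically the collapse of the scalar $K$ to a nonzero constant. The telescoping is parallel to, and driven by the same structure equations (\ref{constraints_definitions}) as, the background computation in Lemma \ref{lemma_rigging_compatibility}; I expect $K$ to reduce to $1$ after multiplication by $n^{(2)}$ and systematic cancellation, exactly as the background calculation produces the non-vanishing factor $\lambda^++\lambda^-$ that selected the orientation-preserving branch. No new analytic ingredient is required beyond this algebraic identity.
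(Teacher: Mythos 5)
Your proposal follows essentially the same route as the paper's proof: bracket the transformed formulas (after the background compatibility forces $[\lambda]=[W]=[Z_a]=0$), contract the $\delta l_a'$ jump with $n^a$, dispose of null points immediately, and at non-null points solve for $[\delta W]$ and $[\delta Z_a]$ in terms of $[\delta\lambda]$ and substitute back into the $\delta l^{(2)}{}'$ jump. The one step you leave as an expectation does close: using $\overline{v}^a l_a=W$ and $\overline{v}^aZ_a=Z_a\overline{v}^a=P^{ab}Z_aZ_b-n^{(2)}W^2$, the bracketed scalar collapses exactly to $1/n^{(2)}\neq 0$, forcing $[\delta\lambda]=0$ as in the paper.
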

\begin{proof}
	
	As concluded in Lemma \ref{lemma_rigging_compatibility}, the compatibility in the transformed background riggings results into $[\lambda] = [W] = [Z_a] = 0$. We thus consider the three following equations which arise from taking differences in (\ref{linearized_constraint_tangent_vector}), (\ref{rigging_transformed_deltala}) and (\ref{rigging_transformed_deltal2}) and applying the matching conditions to first order and the compatibility of the background and first order non-primed objects  
	\begin{eqnarray}
	&&\left[\delta l_a' \right] = (l_a +Z_a) [\delta \lambda] + \lambda [\delta Z_a], \label{difference_delta_la_prime}\\
	&&\left[\delta l^{(2)}{}' \right] = 2 \lambda [\delta \lambda] (l^{(2)} + 2W + Z_a v^a) + 2\lambda^2 ([\delta W] + v^a[\delta Z_a]),  \label{difference_delta_l2_prime}\\
	&&n^a [\delta Z_a] + n^{(2)} [\delta W] = 0. \label{difference_constraint_transformation}
	\end{eqnarray}
	It is clear from these expressions that first order gauge transformations satisfying $[\delta \lambda] = [\delta W] = [\delta Z_a] = 0$  lead to compatible first order transformed riggings.
	The reverse implication requires some further manipulations. We contract $n^a$ with (\ref{difference_delta_la_prime}) and use (\ref{difference_constraint_transformation}) to find
	\begin{equation}
	n^a [\delta l_a'] = (1- n^{(2)}(l^{(2)} + W)) [\delta \lambda] - \lambda n^{(2)} [\delta W]. \label{difference_delta_la_prime_na}
	\end{equation}
	On the one hand, at null points it becomes $[\delta \lambda] = 0$, which plugged into the rest of the equations provides $[\delta Z_a] = [\delta W] = 0$.
	
	On the other hand, at non null points, the combination of  (\ref{difference_delta_la_prime_na}), (\ref{difference_delta_la_prime}) and (\ref{difference_constraint_transformation}) allows us to express the remaining differences in terms of $[\delta \lambda]$ as follows
	\begin{eqnarray*}
	&& \left[ \delta W \right] = \frac{1- n^{(2)}(l^{(2)}+ W)}{\lambda n^{(2)}} [\delta \lambda], \quad [\delta Z_a] = -\frac{l_a + Z_a}{\lambda}[\delta \lambda],
	\end{eqnarray*}
	which inserted into (\ref{difference_delta_l2_prime}) results in $[\delta \lambda]=0$, so that the compatibility of the first order transformations follows.
	
		\hfill $\blacksquare$
\end{proof}

	At this point the matching conditions to first order have been characterized, and we will assume for the rest of the paper that 
	the pair of spacetimes $(\mathcal M^\pm, g^\pm, \Sigma_0^\pm)$ satisfy the full matching conditions up to first order, i.e. the \textit{hypersurface data} set is given by $\{\abshyp, h_{ab}, l_a, l^{(2)}, Y_{ab} \}$ 
	in the background, and the first order perturbations fulfil $\{\delta h_{ab}^+ = \delta h_{ab}^-, \delta l_a^+ = \delta l_a^-, \delta l^{(2)}{}^+ = \delta l^{(2)}{}^-\}$. 	
	Also, when rigging transformations driven by the gauge fields $\{\lambda^\pm, W^\pm, Z_a^\pm, \delta \lambda^\pm, \delta W^\pm, \delta Z_a^\pm \}$ are involved, we assume that these preserve the matching conditions for the rigging up to first order, i.e. $\delta l_a'{}^+ = \delta l_a'{}^-$ and $\delta l^{(2)}{}'{}^+ = \delta l^{(2)}{}'{}^-$ ({\bf assumptions (*) }).
	
	 We extend the analysis of the perturbed matching of spacetimes and define a set of \textit{junction conditions} to first order. 
	One possibility is to formulate them in terms of the Riemann tensor, so that it does not have a singular part in analogy with case (1) in Theorem \ref{theorem_singular_distributions_background}. The extension of this result to the one-parameter family of Riemann tensors in the (upstairs) spacetimes  $(\mathcal{M}_\varepsilon^\pm,\hat{g}_\varepsilon^\pm,\hat{\Sigma}_\varepsilon^\pm)$ provides $\hat{Y}_\varepsilon^+=\hat{Y}_\varepsilon^-$ and the same analysis that we carried out in order to formulate the perturbed matching conditions in Theorem \ref{theorem:perturbed_matching} also applies in this case providing  $[\delta Y_{ab}]=0$.
	
	 However, we proceed as in case (3) from Theorem \ref{theorem_singular_distributions_background} in the exact theory and establish the notion of \textit{perturbed junction conditions} in terms of the shell energy momentum tensor. Therefore we promote the shell energy momentum tensor $\tau^{ab}$ introduced in (\ref{definition_shell}) to a one-parameter family of tensor fields $\tau^{ab}_\varepsilon$ by making the substitutions $\{P^{ab}\rightarrow P^{ab}(\varepsilon), \; n^a \rightarrow n^a(\varepsilon), \; n^{(2)} \rightarrow n^{(2)}(\varepsilon), \; Y_{ab} \rightarrow Y_{ab} (\varepsilon)\}$. The first order shell follows from taking $\varepsilon$-derivatives at $\varepsilon =0$. Assuming the \textit{background junction conditions} (\ref{eq_junction_exact}) hold we obtain that

\begin{eqnarray}
\delta \tau^{ab} &=& -n^a n^b \left(\delta P^{cd}[Y_{cd}] + P^{cd}[\delta Y_{cd}]\right) + \left(n^a P^{bc} + n^b P^{ac}\right) \left(\delta n^d [Y_{cd}] + n^d [\delta Y_{cd}]\right) \nonumber\\
&-& P^{ac}P^{bd} \left(\delta n^{(2)}[Y_{cd}] + n^{(2)}[\delta Y_{cd}]\right) - P^{ab} \left(-n^{(2)} (\delta P^{cd} [Y_{cd}] + P^{cd} [\delta Y_{cd}]) + n^c n^d [\delta Y_{cd}]\right).\nonumber \\
\end{eqnarray}
The energy momentum tensor on the shell depends on the rigging. This property arises very clearly from the distributional approach, where the complete singular part in the energy momentum tensor distribution for the whole spacetime is $\tau_{\alpha \beta} \delta^{\Sigma_0}$. The object $\delta^{\Sigma_0}$ is the scalar Dirac delta distribution with support on $\Sigma_0$ and it depends on the rigging via the normal one form. Since $\tau_{\alpha \beta} \delta^{\Sigma_0}$ is intrinsically defined, the tensor field $\tau_{\alpha \beta}$ also depends on the rigging (see \cite{lorentzian2007} for a detailed discussion).
It transforms as follows
\begin{equation*}
\tau^{ab}{}' = \frac{\tau^{ab}}{\lambda} \Rightarrow \tau^{ab}(\varepsilon){}' = \frac{\tau^{ab}(\varepsilon)}{\lambda (\varepsilon)} \Rightarrow \delta \tau^{ab}{}' = \frac{\delta \tau^{ab}}{\lambda} - \frac{\delta \lambda}{\lambda^2}\tau^{ab}.
\end{equation*}
Hence, provided that the \textit{junction conditions} hold for the background ensuring that $\tau^{ab} = 0$, it make sense to study under which conditions $\delta \tau^{ab} = 0$, since it is a gauge independent equation.

\begin{proposition}{ {\bf(Junction conditions to first order)}}
	
	Assume that the background \textit{junction conditions} are satisfied, so that $\tau_{ab}=0$ and the dimension of $\abshyp$ is $\dimension >1$. The tensor $\delta \tau_{ab}$ vanishes:	
	\begin{itemize}
		\item at null points if and only if 
		\begin{equation}
		\delta n^b [Y_{ab}] + n^b [\delta Y_{ab}] = 0, \quad \delta P^{ab} [Y_{ab}] + P^{ab}[\delta Y_{ab}] = 0, \quad \delta n^{(2)} P^{ac}[Y_{bc}] = 0. \label{eq_deltatau_null}
		\end{equation}
		\item at non-null points if and only if $[\delta Y_{ab}]=0$.
	\end{itemize}
\label{proposition_first_order_junction_conditions}
\end{proposition}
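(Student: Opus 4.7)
The plan is to start from the explicit formula for $\delta\tau^{ab}$ displayed just before the proposition, and exploit the assumed background junction conditions (\ref{eq_junction_exact}) to split the analysis into the non-null and null cases in parallel with the exact-theory proof.

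First, at non-null points I would argue that the background junction conditions already force $[Y_{ab}]=0$ identically: since $n^{(2)}\neq 0$, the first equation in (\ref{eq_junction_exact}) gives $P^{ab}[Y_{bc}]=0$, and combining this with $n^a[Y_{ab}]=0$ via the completeness relation (\ref{constraints_definitions}), namely $\delta^c_a = h_{ad}P^{dc}+l_a n^c$, yields $[Y_{ab}]=h_{ad}P^{dc}[Y_{cb}]+l_an^c[Y_{cb}]=0$. Substituting this into the expression for $\delta\tau^{ab}$ kills every term that carries an explicit $[Y_{cd}]$ factor, and what remains has exactly the structure
\[
\delta\tau^{ab} = K^{abcd}[\delta Y_{cd}],
\]
where $K^{abcd}$ is the tensor appearing in (\ref{definition_shell}). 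Thus $\delta\tau^{ab}$ vanishes at non-null points iff $[\delta Y_{ab}]$ satisfies the three analogues of (\ref{eq_junction_exact}), and by the very argument just used (with $[\delta Y]$ in place of $[Y]$) this is equivalent to $[\delta Y_{ab}]=0$.

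At null points ($n^{(2)}=0$), sufficiency is essentially a substitution check. Inserting $n^{(2)}=0$ into $\delta\tau^{ab}$ and introducing the shorthand $\mathcal{A}_c := \delta n^d[Y_{cd}]+n^d[\delta Y_{cd}]$, $\mathcal{B}:= \delta P^{cd}[Y_{cd}]+P^{cd}[\delta Y_{cd}]$, the expression reduces to
\[
\delta\tau^{ab} = -n^an^b\mathcal{B} + (n^aP^{bc}+n^bP^{ac})\mathcal{A}_c - \delta n^{(2)}P^{ac}P^{bd}[Y_{cd}] - P^{ab}n^cn^d[\delta Y_{cd}].
\]
The first three terms vanish under the hypotheses (\ref{eq_deltatau_null}), while the last term is killed because contracting the first of (\ref{eq_deltatau_null}) with $n^c$ and using the null background condition $n^c[Y_{cd}]=0$ forces $n^cn^d[\delta Y_{cd}]=0$.

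For necessity at null points, the plan is to peel off the three conditions by successive projections using the tangent/normal structure: I would contract $\delta\tau^{ab}=0$ with $l_al_b$, with $l_a$ alone, and with $h_{ac}h_{bd}$ (or equivalently $P$-projections), using the null-point identities $l_an^a=1$ and $P^{ab}l_b=-l^{(2)}n^a$, and invoke Lemma \ref{lemma_vector_reconstruction} to reconstruct vectors from their projected data. The main obstacle is precisely this step: unlike the non-null case, $[Y_{ab}]$ is not forced to vanish, so the $\delta n^{(2)}P^{ac}P^{bd}[Y_{cd}]$ term genuinely mixes with the $[\delta Y]$ terms, and one must check that the system of projected equations is invertible and reproduces exactly the three independent statements of (\ref{eq_deltatau_null}) rather than some weaker combinations; this is where the null-point identity $n^c[Y_{cd}]=0$ is used repeatedly to clear cross-terms in the projections.
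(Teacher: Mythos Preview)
Your approach is essentially the same as the paper's: reduce $\delta\tau^{ab}$ using the background junction conditions, handle the non-null case by first establishing $[Y_{ab}]=0$ (your completeness argument is exactly right), and attack the null case by projecting onto the $\{n^a,P^{ab}\}$ structure. The non-null case and the null sufficiency argument are correct as written.

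The gap is in the null \emph{necessity} step, and it is precisely the place where the hypothesis $\dimension>1$ must enter --- which your outline never uses. If you carry out your contractions at a null point, contracting with $l_a$ yields (in the paper's language) the vector $W^b$ and contracting with $h_{ac}$ yields $Z_c^b$; setting them to zero gives, after using $n^{(2)}=0$ and $n^c[Y_{cd}]=0$,
\[
\mathcal{A}_a = l_a\, n^c n^d[\delta Y_{cd}],\qquad
\mathcal{B} = -\,l^{(2)}\, n^c n^d[\delta Y_{cd}],
\]
so you \emph{cannot} conclude $\mathcal{A}_a=0$ and $\mathcal{B}=0$ directly: the scalar $n^cn^d[\delta Y_{cd}]$ contaminates both equations. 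The paper resolves this by taking the \emph{trace} $Z^b_b$, which at null points equals $-(\dimension-1)\,n^bn^d[\delta Y_{bd}]$; the assumption $\dimension>1$ then forces $n^bn^d[\delta Y_{bd}]=0$, after which your relations collapse to $\mathcal{A}_a=0$, $\mathcal{B}=0$, and the remaining content of $Z_c^b=0$ yields $\delta n^{(2)}P^{bd}[Y_{cd}]=0$. Your list of contractions ($l_al_b$, $l_a$, $h_{ac}h_{bd}$) does not isolate this trace, and Lemma~\ref{lemma_vector_reconstruction} alone will not supply it. Adding the trace step completes your argument and makes it coincide with the paper's.
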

\begin{proof}
	We put forward the following definitions
	\begin{eqnarray}
	W^b &\equiv& \left( P^{bc}\delta n^d -\delta P^{cd} n^b \right)[ Y_{cd}] + \left( P^{bc} n^d - P^{cd} n^b \right)[\delta Y_{cd}] , \label{eq_deltatau_1}\\
	Z_c^b &\equiv& \left(n^b \delta n^d - \delta n^{(2)} P^{bd} \right)[ Y_{cd}] + \left( n^b n^d - n^{(2)} P^{bd} \right)[\delta Y_{cd}] \nonumber\\
	&&+ \delta_c^b \left( n^{(2)} \delta P^{ef}[Y_{ef}] - \left(n^e n^f - n^{(2)} P^{ef} \right) [\delta Y_{ef}]\right), \label{eq_deltatau_2}
	\end{eqnarray}
	which allow us to write $\delta \tau^{ab} = W^b n^a + P^{ac}Z_c^b$. It can be checked with the aid of the \textit{junction conditions} (\ref{eq_junction_exact}) that $n^{(2)} W^b + n^c Z_c^b =0$. Hence $\delta \tau^{ab}=0$ if $W^b=Z_c^b=0$. It is useful to define $Q^{ab} \equiv n^{(2)}P^{ab} -n^a n^b$ and consider the following list of derived objects
	\begin{eqnarray}
	h_{ab}W^b &=&\left( \delta n^d [Y_{ad}] + n^d [\delta Y_{ad}]\right) + l_a \left \lbrace n^{(2)} \delta P^{cd}[Y_{cd}] + Q^{cd}[\delta Y_{cd}]\right \rbrace,\label{eq_deltatau_3}\\
	%%%%%%%%%%%%%%%%%%%%%%%%%%
	l_b W^b &=&-\left(\delta P^{cd} [Y_{cd}] + P^{cd} [\delta Y_{cd}]\right) + l^{(2)} \left( n^{(2)} \delta P^{cd} [Y_{cd}] + Q^{cd} [\delta Y_{cd}]\right), \label{eq_deltatau_4}\\
	%%%%%%%%%%%%%%%%%%%%%%%%%%
	Z^b_b &=& (\dimension-1) Q^{bd}[\delta Y_{bd}] + (\dimension) n^{(2)} \delta P^{bd} [Y_{bd}]. \label{eq_deltatau_5}
	\end{eqnarray}
	We start by (\ref{eq_deltatau_5}). Its second term vanishes at null points, because $n^{(2)} = 0$ there, and also at non-null points, because $[Y_{ab}] = 0$ there. Also, as in the background case, we are assuming $\dimension-1 \neq 0$, so that we obtain that $Q^{ab}[\delta Y_{ab}] = 0$.
	
	We focus now on null points exclusively, so that $Q^{ab}[\delta Y_{ab}] = 0 \Rightarrow n^an^b [\delta Y_{ab}] = 0$. Equating (\ref{eq_deltatau_3}) and (\ref{eq_deltatau_4}) to zero establishes the first and second conditions in (\ref{eq_deltatau_null}). The conditions obtained so far suffice for $W^b=0$. The remaining conditions, the third one in (\ref{eq_deltatau_null}), follow from imposing that $Z_c^b=0$.
	
	At non-null points we follow the same strategy, but the fact that $[Y_{ab}]=0$ simplifies the procedure. The equations arising from (\ref{eq_deltatau_3})-(\ref{eq_deltatau_5}) provide $n^a [\delta Y_{ab}] = P^{ab}[\delta Y_{ab}] = 0$. Back to $Z_c^b = 0$ we find $P^{ac}[\delta Y_{bc}] = 0$ and therefore $[\delta Y_{ab}] = 0$.
	
	Sufficiency is proven by direct substitution of the conditions listed in the proposition into the expression given for $\delta \tau^{ab}$.
\hfill $\blacksquare$
\end{proof}

\begin{proposition}{{\bf (Rigging independence of the junction conditions to first order at null points)}}

	Consider the {\bf assumptions (*)}. 	
	Assume that there are null points at $\abshyp$, where the background junction conditions are  satisfied. Then  the first order \textit{junction conditions} $\delta n^b [Y_{ab}] + n^b [\delta Y_{ab}] = 0$,  $\delta P^{ab} [Y_{ab}] + P^{ab}[\delta Y_{ab}] = 0$ and $\delta n^{(2)} P^{ac}[Y_{bc}] = 0$ imply that  $\delta n^b{}' [Y_{ab}{}'] + n^b{}' [\delta Y_{ab}{}'] = 0$,  $\delta P^{ab}{}' [Y_{ab}{}'] + P^{ab}{}'[\delta Y_{ab}{}'] = 0$ and $\delta n^{(2)}{}' P^{ac}{}'[Y_{bc}{}'] = 0$ at null points.
	\label{proposition_perturbedjunction_independent_rigging_1}
	\end{proposition}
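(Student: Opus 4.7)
The plan is to leverage the transformation law for $\delta\tau^{ab}$ together with Proposition \ref{proposition_first_order_junction_conditions}, reducing the statement to a routine invariance check. The null character of points is a purely geometric property of $\Sigma_0$ in $\mathcal{M}$, so the set of null points of $\abshyp$ is unchanged under a rigging transformation; in particular, the conclusion of the proposition is to be checked at the same set of points on both sides.

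First, I would invoke Theorem \ref{theorem_junction_Y_riggings} (together with the observation after it) to guarantee that the background junction conditions survive the rigging transformation: at null points we still have $n^a{}'[Y'_{ab}]=0$ and $P^{ab}{}'[Y'_{ab}]=0$, and $n^{(2)}{}'=0$ there. Thus Proposition \ref{proposition_first_order_junction_conditions} is applicable both in the original and in the primed rigging. By that proposition, the three first-order conditions of the statement are equivalent to $\delta\tau^{ab}=0$ at null points, while the primed versions are equivalent to $\delta\tau^{ab}{}'=0$ at null points. It therefore suffices to prove that, under the assumptions, $\delta\tau^{ab}=0$ on null points if and only if $\delta\tau^{ab}{}'=0$ on null points.

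To establish this equivalence I would use the already recorded transformation law
\begin{equation*}
\delta\tau^{ab}{}' = \frac{\delta\tau^{ab}}{\lambda}-\frac{\delta\lambda}{\lambda^2}\,\tau^{ab}.
\end{equation*}
Since the background junction conditions hold, $\tau^{ab}=0$ identically on $\abshyp$, so the transformation collapses to $\delta\tau^{ab}{}'=\delta\tau^{ab}/\lambda$. Because $\lambda\neq 0$ everywhere on $\abshyp$, the vanishing of $\delta\tau^{ab}$ at a null point is equivalent to the vanishing of $\delta\tau^{ab}{}'$ at the same point, and the required implication follows. The only delicate point is to make sure that assumptions (*) are genuinely being used: they ensure that the primed data still define a bona fide matching so that Proposition \ref{proposition_first_order_junction_conditions} can be invoked in the primed setting, and that the background transformation satisfies $[\lambda]=[W]=[Z_a]=0$, which is what makes $\lambda$ (and hence the factor $1/\lambda$ above) a well-defined function across $\Sigma_0$ rather than a two-sided object. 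No separate argument is needed for $\delta n^{(2)}{}' P^{ac}{}'[Y'_{bc}]=0$ beyond this, since Proposition \ref{proposition_first_order_junction_conditions} packages it together with the other two primed conditions. I do not anticipate any serious obstacle; the main task is simply to be explicit about where each of the hypotheses feeds in.
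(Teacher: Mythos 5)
Your proof is correct, but it takes a genuinely different route from the paper's. The paper proves this proposition by brute force: it inserts the background transformations (\ref{inverse_data_transformations}) and the first-order transformations of Lemma \ref{lemma_rigging_transformation_perturbed_inverse_data} into each primed combination, sets $n^{(2)}=0$, and exhibits each primed junction condition explicitly as a linear combination (with coefficients built from $\lambda$, $Z_a$, $\overline{v}^a$) of the unprimed ones, whence the implication is read off term by term. You instead route everything through the shell tensor: Proposition \ref{proposition_first_order_junction_conditions} identifies the three conditions with $\delta\tau^{ab}=0$ at null points in each gauge, and the transformation law $\delta\tau^{ab}{}'=\delta\tau^{ab}/\lambda-(\delta\lambda/\lambda^2)\tau^{ab}$ together with $\tau^{ab}=0$ and $\lambda\neq 0$ closes the argument. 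Your bookkeeping of the hypotheses is sound: the remark following Theorem \ref{theorem_junction_Y_riggings} does supply the primed background junction conditions at null points, assumptions (*) (via Lemma \ref{lemma_rigging_compatibility}) give $[\lambda]=[W]=[Z_a]=0$ so that Proposition \ref{proposition_first_order_junction_conditions} applies verbatim in the primed gauge, and $n^{(2)}{}'=n^{(2)}/\lambda^2$ confirms the null set is unchanged. What your approach buys is brevity and a stronger conclusion (an equivalence rather than a one-way implication). What it costs is that it leans entirely on the displayed first-order transformation law for $\delta\tau^{ab}$, which the paper obtains by formally differentiating the exact law $\tau^{ab}{}'=\tau^{ab}/\lambda$ along the $\varepsilon$-family rather than by an independent first-order computation; the direct verification in the paper's proof is arguably what substantiates, condition by condition, the gauge-independence that the transformation law asserts wholesale. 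In that sense your argument is essentially the one the paper sketches in the sentence preceding Proposition \ref{proposition_first_order_junction_conditions}, promoted to a proof, while the paper's own proof is the independent cross-check.
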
 
\begin{proof}
 	Take the transformations (\ref{inverse_data_transformations}) and those in Lemma \ref{lemma_rigging_transformation_perturbed_inverse_data}, set $n^{(2)} =0$ there, and make use of the background \textit{junction conditions} for null points. After some intermediate calculations following the previous steps, we get the following expressions for the primed \textit{junction conditions} stated in the proposition 
 	\begin{eqnarray}
 	\delta n^b{}' [Y_{ab}{}'] + n^b{}' [\delta Y_{ab}{}'] &=& \delta n^b [Y_{ab}] + n^b [\delta Y_{ab}] -\delta n^{(2)}\overline{v}^b[Y_{ab}] \nonumber \\
 	&=& \left(\delta n^b [Y_{ab}] + n^b [\delta Y_{ab}] \right)- \left(\delta n^{(2)}P^{bc}[Y_{ab}] \right)Z_c,\nonumber\\
 	%%%%%%%%%%%%%%%%%%%%%%%%%
 	\delta P^{ab}{}' [Y_{ab}{}'] + P^{ab}{}'[\delta Y_{ab}{}'] &=& \lambda \left \lbrace \left( \delta P^{ab} [Y_{ab}] + P^{ab}[\delta Y_{ab}] \right) -2 \overline{v}^a \left( \delta n^b [Y_{ab}] + n^b [\delta Y_{ab}] \right) \right.\nonumber\\
 	%%%%%%%%%%%%%%%%%%%%%%%%%
 	&&\left.  + \overline{v}^a \overline{v}^b \delta n^{(2)} [Y_{ab}]\right \rbrace = \lambda \left \lbrace \left( \delta P^{ab} [Y_{ab}] + P^{ab}[\delta Y_{ab}] \right) \right.\nonumber\\
 	&&\left. -2 \overline{v}^a \left( \delta n^b [Y_{ab}] + n^b [\delta Y_{ab}] \right)   + \left(\delta n^{(2)} P^{ac}   [Y_{ab}]\right) P^{bd} Z_c Z_d\right \rbrace, \nonumber \\
 	%%%%%%%%%%%%%%%%%%%%%%%%
 	\delta n^{(2)}{}' P^{ac}{}'[Y_{bc}{}'] &=&\frac{1}{\lambda} \left(\delta_d^a - n^a Z_d \right) \left(\delta n^{(2)} P^{cd} [Y_{bc}] \right).\nonumber
 	\end{eqnarray}
 	Imposing the \textit{first order junction conditions} for null points in the expressions above, the result follows.
	\hfill $\blacksquare$
\end{proof}
\begin{proposition} {{\bf (Rigging independence of the junction conditions to first order at non-null points)}}
	
	Consider the {\bf assumptions (*)}. 	
	Assume that there are non-null points in $\abshyp$, where the background junction conditions are satisfied. 	Then  the first order \textit{junction conditions}  $[\delta Y_{ab}] = 0$ imply that $[\delta Y'_{ab}] = 0$. .
\end{proposition}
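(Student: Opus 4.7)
The plan is to start from the transformation law (\ref{rigging_transformed_deltalY}) for $\delta Y_{ab}'$, take its jump $[\,\cdot\,]$ across $\Sigma_0$, and verify that each resulting term vanishes. The argument reduces to bookkeeping, since every needed continuity property has been established in preceding results.

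First I would assemble all the jump-zero data in force. The background matching conditions give $[h_{ab}] = [l_a] = [l^{(2)}] = 0$, whence $[n^{(2)}] = [n^a] = [P^{ab}] = 0$. Lemma \ref{lemma_rigging_compatibility} yields $[\lambda] = [W] = [Z_a] = 0$, from which $[\overline{v}^a] = [P^{ab} Z_b + n^a W] = 0$. At non-null points, case (2) of Theorem \ref{theorem_singular_distributions_background}, applied to the background proper matching, produces $[Y_{ab}] = 0$. On the first-order side, assumptions (*) give $[\delta h_{ab}] = [\delta l_a] = [\delta l^{(2)}] = 0$; Lemma \ref{lemma_inverse_perturbed_data} then gives $[\delta n^{(2)}] = [\delta n^a] = [\delta P^{ab}] = 0$; and the rigging-compatibility proposition for first-order perturbations (stated immediately before Proposition \ref{proposition_first_order_junction_conditions}) yields $[\delta \lambda] = [\delta W] = [\delta Z_a] = 0$, whence $[\delta \overline{v}^a] = 0$. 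The hypothesis of the statement supplies $[\delta Y_{ab}] = 0$.

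Next I would take jumps term by term in (\ref{rigging_transformed_deltalY}), using the identity $[ab] = a^+[b] + [a]b^-$. The term $\lambda\,\delta Y_{ab}$ has zero jump because $[\lambda] = 0$ and $[\delta Y_{ab}] = 0$. The term $\delta\lambda\, Y_{ab}$ has zero jump because $[\delta\lambda] = 0$ and, crucially, $[Y_{ab}] = 0$ at non-null points. Each derivative cross term, e.g.\ $\delta l_a\,\partial_b\lambda$ and $l_a\,\partial_b\delta\lambda$, has zero jump because both factors do (note that $\lambda^+ = \lambda^-$ and $\delta\lambda^+ = \delta\lambda^-$ as functions on $\abshyp$ imply $[\partial_b\lambda] = [\partial_b\delta\lambda] = 0$). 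Finally, $\mathcal{L}_{\delta\lambda\,\overline{v} + \lambda\,\delta\overline{v}} h_{ab}$, written in coordinates on $\abshyp$, involves only $h_{ab}$, the vector field $\delta\lambda\,\overline{v} + \lambda\,\delta\overline{v}$, and their intrinsic derivatives on $\abshyp$, all of which have vanishing jumps. Summing the four contributions gives $[\delta Y_{ab}'] = 0$.

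I do not anticipate a genuine obstacle, as all the required continuity inputs are already in place. The only subtle point, and the reason the non-null and null cases are proved separately, is that the argument critically relies on $[Y_{ab}] = 0$ to annihilate the $\delta\lambda\, Y_{ab}$ term; at null points one has only the weaker conditions $n^a[Y_{ab}] = P^{ab}[Y_{ab}] = 0$, which forces the more delicate contracted approach of Proposition \ref{proposition_perturbedjunction_independent_rigging_1}.
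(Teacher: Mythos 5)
Your proposal is correct and follows essentially the same route as the paper's (much terser) proof: take the jump of the transformation law (\ref{rigging_transformed_deltalY}) term by term, using $[\lambda]=[\delta\lambda]=0$, $[\partial_b\delta\lambda]=\partial_b[\delta\lambda]=0$, the continuity of the background and first-order data, and the fact that $[Y_{ab}]=0$ at non-null points. Your closing remark correctly identifies why the null case requires the separate contracted argument of Proposition \ref{proposition_perturbedjunction_independent_rigging_1}.
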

\begin{proof}
	Take the difference of the gauge transformed rigged fundamental form (\ref{rigging_transformed_deltalY}) as seen from the different $+$ and $-$ sides. Consider also that $ [\partial_b \delta \lambda] =  \partial_b [\delta \lambda] = 0$. Then it is straightforward to see that every single term in the difference $[\delta Y_{ab}']$ vanishes.
	
	\hfill $\blacksquare$
\end{proof}

The \textit{first order junction conditions} formulated from the analysis of the tensor $\delta \tau^{ab}$ are therefore independent of rigging transformations. In addition they are hypersurface gauge independent. This fact follows from the derivatives of the conditions (\ref{eq_junction_exact}) which provide
\begin{eqnarray}
&&n^a\mathcal L_{\vec{u}_\abshyp} [Y_{ab}] = 0, \quad P^{ac}\mathcal L_{\vec{u}_\abshyp} [Y_{ab}] = 0  \quad \text{at non-null points},\nonumber\\
&& P^{ac}[Y_{ab}] \mathcal L_{\vec{u}_\abshyp} n^{(2)} = 0, \quad \mathcal L_{\vec{u}_\abshyp} P^{ab}[Y_{ab}] = 0,  \quad \mathcal L_{\vec{u}_\abshyp} n^a[Y_{ab}] = 0 \quad \text{at null points}.
\end{eqnarray}
The two equalities in the first row above imply that $\mathcal L_{\vec{u}_\abshyp} [Y_{ab}] = 0$ at non-null points which proves the hypersurface gauge invariance of the condition $[\delta Y_{ab}] = 0$. 

The gauge invariance of the \textit{junction conditions} at null points follows from the identities in the second row. Also in this case an argument based in Lemma 2.2 in \cite{Stewart49} would provide the desired result, since it is possible to write all the \textit{first order junction conditions} in the form $\delta t$ for some background tensor field $t$ in $\abshyp$, which vanishes everywhere. For instance at null points
\begin{eqnarray}
\delta n^{(2)} P^{ac}[Y_{bc}] &=& \delta \left( n^{(2)} P^{ac}[Y_{bc}] \right) - n^{(2)} \delta \left( P^{ac}[Y_{bc}]\right)\nonumber\\
&=& \delta \left( n^{(2)} P^{ac}[Y_{bc}] \right) ,
\end{eqnarray}
which is unaffected by hypersurface gauge transformations since $ n^{(2)} P^{ac}[Y_{bc}] =0$ is a \textit{junction condition} everywhere (recall (\ref{eq_junction_exact})). The same argument holds for the rest of the conditions.

\section*{Acknowledgements}

We are grateful to Marc Mars, Ra\"ul Vera, Abraham Harte, Arnaud Mortier, Ko Sanders and Peter Taylor for valuable discussions and suggestions. Figures \ref{figure:per_picture_3} and \ref{figure:per_picture_4} were made modifying a diagram provided to us by Ra\"ul Vera.

 Work supported by the Spanish MINECO and FEDER (FIS2017-85076-P), by the Basque Government (IT-956-16, POS-2016-1-0075, IT-979-16), by the ERC Advanced Grant 339169 ``Selfcompletion'' and by the Spanish Research Agency (Agencia Estatal de Investigación) through the grant IFT Centro de Excelencia Severo Ochoa SEV-2016-0597 and the project FPA2015-65480-P.
 
 BR and BN derived the results presented in this paper. KS independently derived the expressions of the hypersurface data to first order in perturbation theory and obtained some of the gauge redundancies of the geometric data presented in sections \ref{section_perturbations_data} and \ref{section_freedom}. The manuscript was prepared by BR and BN, and revised and finalised by all three authors.

\appendix
\section{Constant signature hypersurfaces }
\label{appendix_constant_signature}
 As discussed in Section \ref{section_freedom}, when the causal character of the $\Sigma_\varepsilon$ does not vary with $\varepsilon$, the rigging data is specified directly from the $\Sigma_\varepsilon$ demanding that $ l^{(2)}_\varepsilon = C$, $ (\boldsymbol{l}_\varepsilon) = \boldsymbol C$, where $C$ and $\boldsymbol C$ are a constant and a constant one covariant tensor (with respect to $\varepsilon$) in $\abshyp$ . This is implicitly used to fix the freedom in the first order perturbations of the rigging data, so that $\delta l^{(2)} = \delta l_a = 0$.

\subsection*{Standard hypersurfaces}
\label{section_standard}
Since we have followed the methods in \cite{Mars2005} to construct the perturbations, we show in this section that our expressions agree with those given therein when we restrict ourselves to standard hypersurfaces in the gauge where the normal vector is chosen as the rigging.
% The key idea for this is fixing the rigging vectors of $\Sigma_\varepsilon$ so that they follow the corresponding normal vectors. 
Indeed, choosing the background rigging as $\vec{l}=\norm \vec{n}$, with $\norm = 1 (-1)$ for timelike (spacelike) hypersurfaces leads immediately to $l^{(2)} = \norm$, $l_a = 0$, $n^a =0$ and $P^{ac} h_{bc} = \delta_b^a$. In fact $h$ is not degenerate and we will denote $P^{ab} \equiv h^{ab}$. Regarding the objects that are built from taking derivatives of the basis we find that $\varphi_a = 0$, $\Psi_a^b = \norm \kappa_a^b$, $Y_{ab} = \norm \kappa_{ab}$ and $\vec{a} \equiv \nabla_{\vec{l}} \vec{l} = \nabla_{\vec{n}} \vec{n}$.
The hypersurface deformation vector reads
\begin{equation}
\vec Z = \norm Q \vec n + \vec T. \label{tostandard}
\end{equation} 
Note that this decomposition  differs slightly from the decomposition from \cite{Mars2005}, where the deformation vector is expanded as $\vec{Z} = Q \vec{n} + \vec{T}$. Therefore at the time of comparing our expressions with those in \cite{Mars2005} we will have to make the substitution $\norm Q \rightarrow Q$.
Under these choices the perturbation of the first fundamental form becomes
\begin{eqnarray}
\delta h_{ab} &=& 2\norm Q\kappa_{ab} + \mathcal{L}_{\vec T{}_\abshyp} h_{ab} + g_1 (\vec{e}_a, \vec{e}_b).\label{pert_fff_timelike}
\end{eqnarray}
This is to be compared with the corresponding expression in (26) in \cite{Mars2005}.

Since the whole family $\{\Sigma_\varepsilon\}$ is timelike/spacelike, the riggings are chosen so that $\vec{l}_\varepsilon = \norm \vec{n}_\varepsilon$, and therefore $C=\norm$, $C_a = 0$, leading to $\delta l^{(2)} = \delta l_a = 0$. Note also that this choice entails that, as a vector, $\vec{l}_1 = \norm \vec{n}_1$. After using (\ref{tostandard}) and the expressions immediately above, the perturbations of the rigging data (\ref{l12_b}) and (\ref{l1a_b}) become
\begin{eqnarray*}
\delta l^{(2)}= 2 \norm \Phi^*(n_\alpha \zeta^\alpha) + g_1(\vec{n},\vec{n}) = 0,\quad \delta l_a =   \Phi^* (\mathcal \zeta) + \norm \vec{e}_a (Q) + \norm g_1(\vec{n},\vec{e}_a) = 0.
\end{eqnarray*}
Using the explicit definition of $\vec \zeta \equiv  \vec{l}_1 + Q \nabla_{\vec{l}} \vec{l} = \norm \vec{n}_1 + Q \nabla_{\vec{n}}\vec{n}$, we find $\Phi^*(n_\alpha \zeta^\alpha) = \norm n_\alpha n_1^\alpha + Q n_\alpha a^\alpha$, which in combination with the first of the two equations above allows us to get
\begin{equation}
n_\alpha n_1^\alpha = -\norm Q n_\alpha a^\alpha - \frac{1}{2}  g_1(\vec{n},\vec{n}). \label{eq:nn1}
\end{equation}
Considering $\delta l_a=0$ and repeating the same procedure with the tangent part of $n_1$ one gets
\begin{equation}
(n_1)_\alpha e_a^\alpha =  -\norm  Q a_\alpha e_a^\alpha -  \vec{e}_a(Q) - g_1(\vec{n},\vec{e}_a) \label{eq:en1}
\end{equation}
Thus the formulas (\ref{l12_b}) and (\ref{l1a_b}) allow us to obtain the decomposition of the perturbed normal vector $\vec{n}_1$ in the basis $\{\vec{n},\vec{e}_a\}$. In fact a vector $\vec{v}$ and its metrically related one form $\boldsymbol{v}$ defined at points of $\Sigma_0$ admit decompositions in the basis adapted to $\Sigma_0$ (as specified above), that are related as follows
\begin{eqnarray}
v^\alpha \equiv \norm V n^\alpha + V^a e_a^\alpha, \quad v_\alpha \equiv \norm U n_\alpha + U_a \omega^a_\alpha,\quad \text{with} \;\; V=U, \quad \text{and} \;\; V^a = h^{ab}U_b.
\end{eqnarray}
Therefore we can reconstruct the perturbation of the normal vector entirely
\begin{equation}
n_1^\alpha = - \frac{\norm}{2} g_1(\vec{n}, \vec{n}) n^\alpha  - h^{ab}g_1(\vec{n},\vec{e}_b)e_a^\alpha -\norm Q a^\alpha - h^{ab} \partial_b Q e_a^\alpha .\label{n1vector}
\end{equation}
If we borrow the definitions $Y' \equiv g_1(\vec{n},\vec{n})$ and $\tau_a\equiv  g_1(\vec{e}_a, \vec{n})$ from \cite{Mars2005} and consider the redefinition in $Q$ we immediately obtain expression (31) from \cite{Mars2005}. Also, we note that we have an explicit expression for the vector $\vec \zeta =  - \frac{1}{2} g_1(\vec{n}, \vec{n}) n^\alpha  - \norm h^{ab}g_1(\vec{n},\vec{e}_b)e_a^\alpha - \norm h^{ab} \partial_b Q e_a^\alpha$ that does not contain an acceleration term in its right hand side. This is the step where the accelerations disappear from the calculation of the perturbations of the second fundamental form, as we see next.

Due to the rigging choice of the $\varepsilon$-family of hypersurfaces, the second fundamental forms fulfil $\kappa_\varepsilon = \norm Y_\varepsilon$, and therefore we can extract the first order perturbations $\delta \kappa$ from our expression for $\delta Y$ appropriately particularised to the conventions above for standard hypersurfaces. The formula (\ref{Yp_1_b}) for the rigged extrinsic curvature yields 
\begin{eqnarray}
\delta \kappa_{ab} &=& \norm \delta Y_{ab} = \frac{1}{2}\norm \Phi^* \mathcal{L}_{\vec \zeta}g 
 +\norm Q(-R_{\alpha \gamma \beta \mu} n^\gamma n^\mu e_a^\alpha e_b^\beta + \kappa_a^c \kappa_b^d h_{cd}) + \mathcal{L}_{\vec{T}{}_\abshyp} \kappa_{ab} + \frac{1}{2} \Phi^*\mathcal{L}_{\vec n}g_1.\label{recovering_extrinsic_curvature}
\end{eqnarray}
We need to rewrite some of the terms in order to complete the comparison with \cite{Mars2005}.
From our previous discussion about the perturbation of the normal vector it is immediate to obtain that 
\begin{equation*}
\frac{\norm}{2}\Phi^* \mathcal{L}_{\vec \zeta}g = -\frac{1}{2} \Phi^* \mathcal L_{\frac{\norm Y'}{2} n^\mu + \tau^\mu} g -  \overline{\nabla}_a \overline{\nabla}_b Q.
\end{equation*}

The other term that deserves attention is the last one in (\ref{recovering_extrinsic_curvature}). For this we apply Lemma \ref{lemma:S} with $A=g_1$ and $\vec{X}= \vec{n}$ to find that
\begin{eqnarray}
\mathcal L_{\vec{n}} g_1 &=& -2 n_\mu S^\mu_{\alpha \beta} + \mathcal L_{\norm Y'n^\mu + \tau^\mu} g_{\alpha \beta}. \nonumber
\end{eqnarray}

Plugging these results into (\ref{recovering_extrinsic_curvature}) we finally get
\begin{eqnarray}
\delta \kappa_{ab} &=& \mathcal{L}_{\vec{T}{}_\abshyp} \kappa_{ab}  -  \overline{\nabla}_a \overline{\nabla}_b Q  + \norm Q(-R_{\alpha \gamma \beta \mu} n^\gamma n^\mu e_a^\alpha e_b^\beta + \kappa_{ac} \kappa_b^c ) + \frac{1}{2} \Phi^* \mathcal L_{\frac{\norm Y'}{2} \vec{n} } g   - n_\mu S^\mu_{\alpha \beta}e_a^\alpha e_b^\beta, \nonumber
\end{eqnarray}
which recovers the corresponding expression in \cite{Mars2005} after taking into account the change $Q \rightarrow \norm Q$ and that $\frac{1}{2} \Phi^* \mathcal L_{\frac{\norm Y'}{2} \vec{n} } g = \frac{\norm Y'}{2} \kappa$.
\subsection*{Null hypersurfaces}
\label{section_null}
We compare our method with the geometrical methods introduced in \cite{VegaPoissonMassey} to study perturbations of null hypersurfaces, applied to the deformation of a black hole due to a tidal field.  Firstly, we state some general properties about null hypersurfaces needed to carry out the analysis. The intrinsic coordinates in $\abshyp$ are $y^a = \{\lambda, \alpha^A\}$, where the upper-case Latin indices are coordinates in the spherical slices of the hypersurface. These admit a two dimensional Riemannian metric $\gamma_{AB}$ whose compatible connection we denote by $\Gamma_{AB}^C$. This hypersurface is embedded into the spacetime via $\Phi_0$ and the null normal vector is thus defined by $\vec{n} = \partial_\lambda \Phi^\alpha$. The basis is completed with the vectors $\vec{e}_A = \partial_A \Phi_0$ and a null rigging, which is chosen so that it satisfies the condition $l_\alpha e_A^\alpha = 0$, plus its defining condition $l_\alpha n^\alpha = 1$. Note that the sign in the latter differs from the convention taken in \cite{VegaPoissonMassey}, but the effect of this choice is just an overall sign in $Y$ and $\delta Y$.
It is straightforward to formulate the tangential derivatives of the basis elements
\begin{eqnarray*}
&&n^\alpha \nabla_\alpha n^\beta = \overline \kappa n^\alpha, \quad e_A^\alpha \nabla_\alpha n^\beta = \omega_A n^\beta + B_{A}^{\phantom A B}e_B^\beta,\\
&& n^\alpha \nabla_\alpha e_A^\beta = \omega_A n^\beta + B_{A}^{\phantom A B}e_B^\beta, \quad e_A^\alpha \nabla_\alpha e_B^\beta =-B_{AB} l^\beta - \mathcal{K}_{AB} n^\beta + \Gamma_{AB}^C e_C^\beta,\\
&&n^\alpha \nabla_\alpha l^\beta = -\overline \kappa l^\beta - \omega^B e_B^\beta, \quad e_A^\alpha \nabla_\alpha l^\beta = -\omega_A l^\beta + \mathcal{K}_A^{\phantom{A}B}e_B^\beta,
\end{eqnarray*}
where the operations of raising and lowering upper-case Latin indices were relative to the metric $\gamma_{AB}$.
A quick comparison between these previous expressions and the rigged fundamental form allows us to establish
\begin{eqnarray*}
Y_{ab} = \begin{bmatrix}
-\overline \kappa   & -\omega_A \\
-\omega_B & \mathcal{K}_{AB}
\end{bmatrix}, \quad
\Psi_{a}^{\phantom{a}b} = \begin{bmatrix}
0   & -\omega_A \\
0 & \mathcal{K}_{A}^{\phantom{A}B}
\end{bmatrix}, \quad 
\varphi_a =  \begin{bmatrix}
-\overline \kappa\\
-\omega_A
\end{bmatrix}.
\end{eqnarray*}

Apart from these conventions, the ambient geometry is taken to be Schwarzschild, expressed in the ingoing Eddington Finkelstein coordinates
\begin{equation*}
g= -f dv^2 + (drdv + dvdr) + r^2 d \Omega^2, \quad f:= \left(1-\frac{2M}{r}\right),
\end{equation*}
where in this setting the two dimensional metric $\gamma_{AB} = r^2 \Omega_{AB}$, and $d \Omega^2 = \Omega_{AB}dx^A dx^B$ is the metric on the unit sphere. Its associated covariant derivative will be denoted by $D_A$ in the remaining part of this appendix.

 The embedding is simply $\Phi_0 = \{v=\lambda, \; r= 2M, \; x^A = y^A \}$.
Then the tangent basis satisfying the requisites from the general setting introduced above is $\{\vec{l}, \vec{e}_a \}$ is given by $\vec{l} = \partial_r$, $\vec{e}_1 \equiv \vec{n} = \partial_v$ and $\vec{e}_A = \partial_A \Phi_0$ and the dual basis $\{\boldsymbol{n}, \mathbf{\omega}^a\}$ is $\boldsymbol{n} = -f dv + dr$, $\omega^1 = dv$ and $\omega^A$ satisfies $\mathbf{\omega}^A (\vec{e}_B) = \delta^A_B$.
The relevant objects for the data approach are thus $l^{(2)} = n^{(2)} = 0$, $l_a = d\lambda$, $n^a =\partial_\lambda $ and
\begin{eqnarray*}
h_{ab} = (2M)^2\begin{bmatrix}
0   & 0 \\
0 & \Omega_{AB}
\end{bmatrix}, \quad
P^{ab} = (2M)^{-2}\begin{bmatrix}
0   & 0 \\
0 & \Omega^{AB}
\end{bmatrix}, \quad
Y_{ab} = \begin{bmatrix}
-\frac{1}{4M}   & 0 \\
0 & 2M\Omega_{AB}
\end{bmatrix} .
\end{eqnarray*}
Because of the conventions we are using, our expression $\mathcal K_{AB} = 2M \Omega_{AB}$ differs from \cite{VegaPoissonMassey} in a sign. This difference will propagate also to its first order perturbations.

The $\varepsilon-$family of embeddings is considered to be $\Phi_{\varepsilon} = \{v= \lambda, \; r = 2M (1+\varepsilon B(\lambda, y^A)), \; x^A = y^A + \varepsilon \Xi^A (\lambda, y^A) \}$. Note that into our setting, this is translated into a deformation vector of the form $\vec{Z} = Q \vec{l} + \vec{T} = 2MB \vec{l} + \Xi^C \vec{e}_C$. Furthermore, the spacetime metric is perturbed via $g_1$, which is a general metric perturbation tensor for which no symmetries are assumed a priori, and no spacetime gauge fixing conditions are imposed. This tensor, called $p_{\alpha \beta}$ in \cite{VegaPoissonMassey} is expanded in scalar, vector and tensor spherical harmonics therein, exploiting thus the spherical symmetry of the background. However, in the calculations carried out in this appendix we will skip the decomposition in harmonics and work with $g_1$ and its projections into the basis $\{\vec{l}, \vec{n}, \vec{e}_A\}$ generically.
 
We are ready to apply formula (\ref{pert_fff}) for the first fundamental form, whose $A-B$ components are found to be
\begin{eqnarray*}
\delta h_{AB} &=& 2Q Y_{AB} + \mathcal{L}_{\vec{T}_\abshyp} h_{AB} + g_1 (\vec{e}_A, \vec{e}_B) = (2M)^2 \left(2B \Omega_{AB} + \Omega_{CB}D_A \Xi^C + \Omega_{AC} D_B \Xi^C\right) + g_1 (\vec{e}_A, \vec{e}_B) .
\end{eqnarray*}
This result agrees with the first fundamental form (3.35) in \cite{VegaPoissonMassey}.

Regarding the rigging vector, we know that it is null and that $l_a \equiv l_\alpha e_a^\alpha = \delta_a^n$, that is, $l_a$ is constant. These two properties can be promoted to a general behaviour of the $\varepsilon-$family of riggings, which amounts to selecting a gauge where $\delta l_a \equiv \partial_\varepsilon \boldsymbol{l}_\varepsilon|_{\varepsilon = 0} = 0$, supplemented with the null condition $\delta l^{(2)} \equiv \partial_\varepsilon l^{(2)}_\varepsilon|_{\varepsilon = 0} = 0$. A direct application of fomulas (\ref{zeta_normal}) and (\ref{s_deltas}) tells us that
\begin{eqnarray*}
\alpha + Q(n_\mu a^\mu) = -n^a g_1 (\vec{l},\vec{e}_a), \quad \vec{s}_{\abshyp} + Q(\omega_\mu^a a^\mu) = -n^a\left( \frac{1}{2} g_1 (\vec{l},\vec{l})\right) - P^{ab} g_1(\vec{l},\vec{e}_b),
\end{eqnarray*}
where we used that $\LL_{\vec T_{\abshyp}}l_a = T^c (\partial_c l_a - \partial_a l_c) + \partial_a (T^c l_c) = 0$, since $T^c l_c = 0$ and $l_c$ is constant.
Hence the assumptions taken on the perturbed rigging data completely determine the vector field $\vec{\zeta}$ whose components are in full agreement with the perturbation of the transverse vector $N$ given in \cite{VegaPoissonMassey} (see (3.28)-(3.30) therein).
Now we can use expression (\ref{Yp_1_b}) for $\delta Y$ and after some computations, we can read its components. For instance we see that
\begin{eqnarray*}
\delta \overline \kappa &\equiv& -\delta Y_{\lambda \lambda}= -\frac{1}{4M}\left ( 2B + {g_1}{}_{vr} + 2M (\partial_r {g_1}{}_{vv} - 2 \partial_v {g_1}{}_{vr})\right),
\end{eqnarray*}
which is (3.52)-(3.53) from \cite{VegaPoissonMassey}. 

Moreover, the perturbed hypersurface is required to remain null. In fact, this choice is translated into an equation for the (perturbed) null generators of the hypersurface, which in turn constrains the functions that perturb the embedding. In our method, this should be translated in restrictions of the deformation vector $\vec{Z}$, and in particular in its rigged component $Q$, which controls how it is deformed as a set of points in the ambient spacetime. Therefore, following the discussion from Section \ref{section_freedom}, we find a similar condition asking for $\delta n^{(2)} = 0$. A direct application of (\ref{delta_n2}) to this setting yields
\begin{equation*}
g_1{}_{vv} - \frac{Q}{2M} + 2\partial_v Q|_{\Sigma_0}=0 \Rightarrow B - 4M \partial_v B = g_1{}_{vv}.
%g_1{}_{vv} - B + 4M\partial_v B|_{\Sigma_0}=0.
\end{equation*}
We can ask for further conditions to the perturbed embeddings, in particular we demand that the perturbed normal does not acquire components along the directions tangential to the spheres, i.e. $\delta n^A=0$. This provides
\begin{equation*}
(g_1{}_{vA} +  \partial_A Q + 4M^2 \Omega_{AB} \partial_v T^B) = 0 \Rightarrow \partial_v \Xi_A = -(2M)^{-2} \left( g_1{}_{vA} + 2M\partial_A B\right),
\end{equation*}
where we have defined $\Xi_A \equiv \Omega_{AB} \Xi^B$,  in order to ease the comparison with \cite{VegaPoissonMassey}.
These three conditions on the deformation vector $\vec{Z}$ expanded into spherical harmonics agree with (3.25)-(3.27) in \cite{VegaPoissonMassey}, and plugged into the expression for $\delta Y$ provide the remaining components (see (3.54)-(3.60) in \cite{VegaPoissonMassey})
\begin{eqnarray*}
\delta \omega_A &\equiv& -\delta Y_{\lambda A} = \frac{1}{2} \left( \frac{1}{M} (g_1{}_{vA} + 2M \partial_A B) + \partial_A g_1{}_{vr} - \partial_r g_1{}_{vA} + \partial_v g_1{}_{rA} \right),\\
\delta \mathcal K_{AB} &\equiv& \delta Y_{AB} = 2M(B - g_1{}_{vr})\Omega_{AB} + 2M (D_A \Xi_B + D_B \Xi_A) -\frac{1}{2}(D_A g_1{}_{rB} + D_B g_1{}_{rA}) + \frac{1}{2} \partial_r g_1{}_{AB}. \nonumber\\
\end{eqnarray*}

%%%%%%%%% APPENDIX B

\section{Useful formulas and another derivation of $\delta Y$ }
In this appendix we include some of the formulas needed in the application of Lemma \ref{lemma_marc_perturbations} to compute different perturbations. We use the following formulas in order to commute different derivative operators \cite{geroch2013differential}
\begin{eqnarray}
&&\mathcal L_{\vec{u}} \nabla_\alpha v_\beta - \nabla_\alpha \mathcal L_{\vec{u}} v_\beta = (R_{\mu \alpha \beta}^{\phantom{\mu \alpha \beta} \rho} u^\mu - \nabla_\alpha \nabla_\beta u^\rho)v_\rho, \label{commute_Lie_cov}\\
&&\left( \LL_{\vec{u}} \LL_{\vec{v}} - \LL_{\vec{v}} \LL_{\vec{u}} \right)  A = \LL_{[\vec u, \vec v]}  A,
\end{eqnarray}
for any pair of vectors $\vec{u}$ and $\vec{v}$ and a tensor $ A$.

The following lemma is useful to compute $\delta Y$, not only in the calculations from Section \ref{section_perturbations_data} but also in an alternative way to compute it that we provide later in this same appendix.

\begin{lemma}{}\label{lemma_pullbackLie}
	Let $\abshyp$ be an embedded submanifold of $\mathcal{M}$ with embedding $\Phi:\abshyp \rightarrow \Sigma_0 \subset \mathcal{M}$ and $\mathcal A$ any covariant tensor defined on a neighbourhood $\mathcal{U}$ of $\Sigma_0$.
	\begin{enumerate}
		\item Let $\vec{V}$ be a vector field on a neighbourhood $\mathcal{U}$ of $\Phi(\abshyp)$ tangent to this hypersurface and define $\vec{V} \equiv d\Phi (\vec{V}_\abshyp)$. Then
		\begin{equation}
		\Phi^*\left(\mathcal{L}_{\vec{V}}\mathcal A \right) =  \mathcal{L}_{\vec{V}_\abshyp}\left(\Phi^* \mathcal A \right). \label{eq:lemma_5_parallel}
		\end{equation}
		\item Let $\vec{V}$ be any vector field on a neighbourhood $\mathcal{U}$ of $\Phi(\abshyp)$ proportional to the selected rigging vector, i.e. $\vec{V} = F \vec{l}$. Then
		\begin{eqnarray}
		\Phi^*\left(\mathcal{L}_{\vec{V}} \mathcal A \right) &=& F \Phi^*\mathcal{L}_{\vec{l}} \mathcal A + \sum_{k=1}^r \mathcal A_{a_1 \dots \mu \dots a_r} l^\mu \vec{e}_{a_k}(F) \\
		&=& F \left(\Phi^* (\nabla_{\vec{l}}\mathcal A) + \sum_{k=1}^r l^\mu \mathcal A_{a_1 \dots \mu \dots a_r} \varphi_{a_k} + \sum_{k=1}^r \mathcal A_{a_1 \dots b \dots a_r} \Psi_{a_k}^b\right)\nonumber\\
		&+& \sum_{k=1}^r \mathcal A_{a_1 \dots \mu \dots a_r} l^\mu \vec{e}_{a_k}(F) \label{eq:lemma_5_rigging}
		%\\&=& e_a^\alpha e_b^\beta \left( \mathcal L_{F_1 %A^\mu_{\phantom \mu \nu} l^\nu}g_{\alpha \beta} - 2 F_1 %l_\mu S(A)^\mu_{\phantom \mu \alpha \beta}\right) \label{eq:lemma_5_rigging_S}
		\end{eqnarray}
		\item Let $\vec{V}$ be an arbitrary vector field on a neighbourhood $\mathcal{U}$ of $\Phi(\abshyp)$, admitting the decomposition $\vec{V} = F \vec{l} + \vec{T}$, with $\vec{T} \equiv d\Phi (\vec{T}_\abshyp)$. Then
		\begin{eqnarray}
		\Phi^*\left(\mathcal{L}_{\vec{V}} \mathcal A \right) &=&(F \circ \Phi)  \Phi^* \mathcal{L}_{\vec{l}}\left( \mathcal A \right) + \mathcal{L}_{\vec{T}_\abshyp}\left(\Phi^* \mathcal A \right) + \sum_{k=1}^r l^\mu \mathcal A_{a_1 \dots \mu \dots a_r} \vec{e}_{a_k}(F)\\
		&=&(F \circ \Phi) \left(\Phi^* (\nabla_{\vec{l}} \mathcal A) + \sum_{k=1}^r l^\mu \mathcal A_{a_1 \dots \mu \dots a_r} \varphi_{a_k} + \sum_{k=1}^r \mathcal A_{a_1 \dots b \dots a_r} \Psi_{a_k}^b \right) \nonumber \\
		&+ & \mathcal{L}_{\vec{T}_\abshyp}\left(\Phi^* \mathcal A \right) + \sum_{k=1}^r l^\mu \mathcal A_{a_1 \dots \mu \dots a_r} \vec{e}_{a_k}(F) \label{eq:lemma_5_general}
		\end{eqnarray}
		
	\end{enumerate}

\end{lemma}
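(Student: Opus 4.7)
The plan is to establish the three parts in order, with Part 3 reducing to a linear combination of Parts 1 and 2. In both Parts 1 and 2 the strategy is to turn the question into a local identity in index notation and then use the pullback to move the resulting expression to $\abshyp$.

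For Part 1, I would invoke the naturality of the Lie derivative under flows. Because $\vec V$ restricted to $\Sigma_0$ is tangent to $\Sigma_0$, the flow $\psi_s$ of $\vec V$ on $\mathcal U$ satisfies $\psi_s(\Sigma_0)\subset \Sigma_0$, and the map $\overline\psi_s := \Phi^{-1}\circ\psi_s|_{\Sigma_0}\circ\Phi$ is precisely the flow of $\vec V_\abshyp$ on $\abshyp$. Then, using $\psi_s\circ\Phi = \Phi\circ\overline\psi_s$,
\begin{equation*}
\Phi^*(\mathcal L_{\vec V}\mathcal A) = \left.\frac{d}{ds}\right|_{s=0}(\psi_s\circ\Phi)^*\mathcal A = \left.\frac{d}{ds}\right|_{s=0}(\Phi\circ\overline\psi_s)^*\mathcal A = \mathcal L_{\vec V_\abshyp}(\Phi^*\mathcal A),
\end{equation*}
which is (\ref{eq:lemma_5_parallel}). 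Alternatively, an index calculation shows that the pullback $e^{\mu_k}_{a_k}$ only samples tangential derivatives of $V^\rho$, so any tangent extension reproduces the same pulled-back expression.

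For Part 2, I would start with the direct identity, obtained by substituting $V^\rho = Fl^\rho$ into the standard formula for the Lie derivative of a covariant tensor and applying the Leibniz rule to $\partial_{\mu_k}(Fl^\rho)$,
\begin{equation*}
\mathcal L_{F\vec l} A_{\mu_1\cdots\mu_r} = F\,\mathcal L_{\vec l}A_{\mu_1\cdots\mu_r} + \sum_{k=1}^{r} A_{\mu_1\cdots\rho\cdots\mu_r}\,l^\rho\,\partial_{\mu_k}F.
\end{equation*}
Pulling back with $e^{\mu_k}_{a_k}$ converts $e^{\mu_k}_{a_k}\partial_{\mu_k}F$ into $\vec e_{a_k}(F)$ and produces the partially pulled-back tensor $\mathcal A_{a_1\cdots\rho\cdots a_r}$ in which only the $k$-th slot remains a spacetime index contracted with $l^\rho$; this delivers the first form (\ref{eq:lemma_5_rigging}). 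To obtain the second form I would trade the partial derivatives in $\mathcal L_{\vec l}A$ for covariant ones (the Christoffel terms cancel by torsion-freeness), giving $\mathcal L_{\vec l}A_{\mu_1\cdots\mu_r}=l^\rho\nabla_\rho A_{\mu_1\cdots\mu_r}+\sum_k A_{\mu_1\cdots\rho\cdots\mu_r}\nabla_{\mu_k}l^\rho$, and then invoke (\ref{nabla_normal_rigging}) to decompose
\begin{equation*}
e^{\mu_k}_{a_k}\nabla_{\mu_k}l^\rho = \varphi_{a_k}l^\rho + \Psi^{b}_{a_k}e^{\rho}_b,
\end{equation*}
whose two pieces contribute respectively the $\varphi_{a_k}$ and the $\Psi^b_{a_k}$ summands displayed in the lemma.

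Part 3 is then immediate from the linearity of the Lie derivative in its vector slot: writing $\vec V = F\vec l+\vec T$ and pulling back, Part 2 applies to $\mathcal L_{F\vec l}\mathcal A$ and Part 1 applies to $\mathcal L_{\vec T}\mathcal A$, assembling both displayed forms of (\ref{eq:lemma_5_general}). The main obstacle is purely clerical: the partially pulled-back tensor carrying exactly one un-projected spacetime index (the one contracted with $\vec l$) must be tracked carefully through every step, and the implicit identification $F\mapsto F\circ\Phi$ inside $\vec e_{a_k}(F)$ must be kept consistent. No geometric input beyond equation (\ref{nabla_normal_rigging}) and the standard Lie-derivative identities is needed.
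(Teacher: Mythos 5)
Your proposal is correct, and Parts 2 and 3 follow essentially the same route as the paper: expand the Lie derivative in covariant derivatives, apply \eqref{nabla_normal_rigging} to $e^{\mu}_{a}\nabla_{\mu}l^{\rho}$, and assemble the general case by linearity of $\mathcal L_{\vec V}$ in $\vec V$ together with the Leibniz identity for $\mathcal L_{F\vec l}$. The genuine difference is in Part 1. The paper proves \eqref{eq:lemma_5_parallel} by a direct index computation: it writes $\Phi^*(\mathcal L_{\vec V}A)$ as the pullback of $V^\mu\nabla_\mu A + \sum_k A\,\nabla V$, applies the rigged-connection relation \eqref{nabla_nablabar} to the first summand, and decomposes $\nabla_{\vec e_a}V^\mu = -(\kappa_{ac}V^c)l^\mu + (\overline\nabla_a V^c)e_c^\mu$ in the second; the $\kappa$-terms cancel and what remains is $\mathcal L_{\vec V_\abshyp}(\Phi^*\mathcal A)$ written with the torsion-free rigged connection $\overline\nabla$. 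You instead invoke naturality of the Lie derivative under the flow of $\vec V$, using that a vector field tangent to $\Sigma_0$ along $\Sigma_0$ has a flow preserving $\Sigma_0$ and that this restricted flow is $\Phi$-conjugate to the flow of $\vec V_\abshyp$. Your argument is shorter, coordinate-free, and makes no use of the rigging at all (which is aesthetically appropriate, since \eqref{eq:lemma_5_parallel} is rigging-independent); the paper's computation has the minor advantage of exhibiting explicitly how the second-fundamental-form contributions cancel within the rigged formalism, which is the machinery used throughout the rest of the paper. Either route is complete; just make sure, if you keep the flow argument, to state explicitly that uniqueness of integral curves is what confines the flow to $\Sigma_0$ and that $\overline\psi_s$ is well defined because $\Phi$ is an embedding.
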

\proof
The lemma follows from a direct computation. The left hand side in (\ref{eq:lemma_5_parallel}) can be expanded as
\begin{eqnarray}
\Phi^*\left(\mathcal{L}_{\vec{V}}A \right) =e^{\alpha_1}_{a_1} \cdot \cdot \cdot e^{\alpha_r}_{a_r} \lbrace (V^\mu \nabla_\mu A_{\alpha_1 \cdot \cdot \cdot \alpha_r}) + \sum_{k=1}^r A_{\alpha_1 \dots \alpha_{k-1}\mu\alpha_{k+1}\dots \alpha_r} \nabla_{\alpha_k}V^\mu\rbrace. \label{eq:lemma_5_parallel_1}
\end{eqnarray}
For the first summand we can apply formula (\ref{nabla_nablabar}) directly, while for the second we can expand the derivative in the basis  $\{\vec{l},\vec{e}_a\}$ as
\begin{equation*}
\nabla_{\vec{e}_a} V^\mu = (\nabla_{\vec{e}_a} V^\mu)^{\vec{l}} + (\nabla_{\vec{e}_a} V^\mu)^{||},
\end{equation*}
where the superscript $\vec{l}$ denotes the  component along the rigging and the $||$ is used for the completely tangent part. A straightforward calculation using formulas (\ref{nabla_normal_rigging}) and (\ref{nabla_nablabar}) provides
\begin{equation}
\nabla_{\vec{e}_a} V^\mu = -(\kappa_{ac}V^c) l^\mu + (\overline{\nabla}_a V^c)e_c^\mu.
\end{equation}
Inserting this decomposition into (\ref{eq:lemma_5_parallel_1}) the result (\ref{eq:lemma_5_parallel}) follows.
Equation (\ref{eq:lemma_5_rigging}) is obtained after expanding the Lie derivative in terms of the spacetime covariant derivative and applying (\ref{nabla_normal_rigging}). 

Finally, taking into account that the Lie derivative satisfies the following property
\begin{eqnarray*}
\LL_{\vec{v}} A_{\alpha_1 \dots \alpha_r} = \LL_{R\vec{l} + \vec{T}} A_{\alpha_1 \dots \alpha_r} &=& R \LL_{\vec{l}} A_{\alpha_1 \dots \alpha_r} +\sum_{k=1}^r l^\gamma A_{\alpha_1\dots \alpha_{k-1}\gamma\alpha_{k+1} \dots \alpha_r} \nabla_{\alpha_k} R \nonumber\\
&&+\LL_{\vec{T}} A_{\alpha_1 \dots \alpha_r},
\end{eqnarray*}
and applying results (\ref{eq:lemma_5_parallel}) and (\ref{eq:lemma_5_rigging}), we obtain (\ref{eq:lemma_5_general}).

\hfill $\blacksquare$

\begin{corollary}
	The pullback to $\abshyp$ of the Lie derivative of the metric along a transverse vector $F\vec{l}$ ($F$ is a function in $\mathcal{M}$) reads
	\begin{equation}
	\Phi^* \mathcal{L}_{F \vec{l}} g = \Phi^* \lbrace F \mathcal{L}_{ \vec{l}} g + dF \otimes \mathbf{l} + \mathbf{l} \otimes dF \rbrace = 2F Y_{ab} + l_a \partial_b F  + l_b \partial_a F , \label{Llmetric}
	\end{equation}
	where $\mathbf{l} \equiv g(\vec{l},\cdot)$. The same operation, but along a tangent vector $\vec{T}$ results into
	\begin{equation}
	\Phi^* \mathcal{L}_{ \vec{T}} g =   \mathcal{L}_{d\Phi \vec{T}_\abshyp} \Phi^* g = \mathcal{L}_{T_\abshyp} h_{ab} , \label{LTmetric}
	\end{equation}
	where $\vec{T} \equiv d\Phi \vec{T}_\abshyp$.
	\label{pullbackliemetric}
\end{corollary}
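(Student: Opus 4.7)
My plan is to derive both identities as direct consequences of Lemma~\ref{lemma_pullbackLie}, together with the definition $Y_{ab}=\tfrac{1}{2}\Phi^*\mathcal{L}_{\vec{l}}g$ of the rigged fundamental form.

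For the first identity, I will begin by establishing the intermediate Leibniz-type expansion $\mathcal{L}_{F\vec{l}}g = F\mathcal{L}_{\vec{l}}g + dF\otimes\boldsymbol{l} + \boldsymbol{l}\otimes dF$ in $\mathcal{M}$. This follows from the standard action of the Lie derivative on a covariant 2-tensor: writing $\mathcal{L}_{F\vec{l}}g_{\alpha\beta}$ in terms of $\nabla$ and distributing the Leibniz rule onto $F$ produces precisely the two extra terms $g_{\mu\beta}l^\mu\partial_\alpha F + g_{\alpha\mu}l^\mu\partial_\beta F = l_\beta\partial_\alpha F + l_\alpha\partial_\beta F$, giving the claimed identity. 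Pulling back with $\Phi^*$ is then routine: the first term contributes $(F\circ\Phi)\cdot\Phi^*\mathcal{L}_{\vec{l}}g = 2F Y_{ab}$ by definition of $Y$; the cross terms yield $l_a\partial_b F + l_b\partial_a F$ after using $\Phi^*\boldsymbol{l} = l_a$ (equivalently $l_\alpha e_a^\alpha = l_a$) and $\Phi^*(dF) = d(F\circ\Phi)$, which I will continue to denote $\partial_a F$ under the standard abuse of notation. Alternatively, the whole calculation can be replaced by a single invocation of part~2 of Lemma~\ref{lemma_pullbackLie} applied to $\mathcal{A}=g$, since $\nabla_{\vec{l}}g=0$ by metric compatibility collapses the bulk of that formula and leaves only the $Y_{ab}$ and $\partial F$ pieces (the $\varphi_{a}$ and $\Psi_a^b$ contractions with the two copies of $g$ reassemble into $2Y_{ab}$ via the definitions \eqref{definition_objects} and \eqref{nabla_normal_rigging}).

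For the second identity, the argument is shorter: $\vec{T}$ is by hypothesis tangent to $\Sigma_0$, so part~1 of Lemma~\ref{lemma_pullbackLie} applied with $\mathcal{A}=g$ gives immediately $\Phi^*(\mathcal{L}_{\vec{T}}g) = \mathcal{L}_{\vec{T}_\abshyp}(\Phi^*g) = \mathcal{L}_{\vec{T}_\abshyp}h_{ab}$, which is exactly \eqref{LTmetric}.

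There is no real obstacle here; the only mild subtlety is to be consistent with the notational convention whereby a function $F$ on $\mathcal{M}$ and its pullback $F\circ\Phi$ on $\abshyp$ are denoted by the same symbol, so that $\vec{e}_a(F)$ and $\partial_a F$ refer to the same object, and with the identification $\Phi^*\boldsymbol{l} = l_a$. Once this is understood, both identities are one-line consequences of the preceding lemma.
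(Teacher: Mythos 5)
Your proof is correct and follows essentially the route the paper intends: the corollary is stated without a separate proof precisely because it is the specialisation of Lemma~\ref{lemma_pullbackLie} (parts 2 and 1) to $\mathcal{A}=g$, which is exactly what you invoke, and your preliminary Leibniz expansion of $\mathcal{L}_{F\vec{l}}g$ together with the identification $2Y_{ab}=\varphi_a l_b+\varphi_b l_a+h_{cb}\Psi^c_a+h_{ca}\Psi^c_b$ is consistent with the definitions \eqref{nabla_normal_rigging} and \eqref{definition_objects}.
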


We provide a different computation of $\delta Y$. The general approach is completely equivalent to the one used in Proposition \ref{proposition_deltaY}, but some of the intermediate calculations are different, more similar to those in \cite{Mars2005} so that this stands mostly as a check. Before computing $\delta Y$, it is useful to state some useful lemmas.

\begin{lemma} \label{lemma:S} {\bf (Mars 2005 \cite{Mars2005})}\label{lemma:S_object} Let $\vec X$ be an arbitrary vector field and $A_{\alpha \beta}$ any symmetric two covariant tensor. Define 
	\begin{equation*}
	S(A)^\mu_{\phantom{\mu}\alpha \beta} \equiv \frac{1}{2} \left( \nabla_\alpha A^\mu_{\phantom{\mu}\beta} + \nabla_\beta A^{\phantom{\alpha}\mu}_{\alpha}-\nabla^\mu A_{\alpha \beta}\right), \quad H^\alpha \equiv A^{\alpha}_{\phantom{\alpha}\mu}X^\mu.
	\end{equation*}
	The following identity holds
	\begin{equation*}
	\left(\mathcal L_{\vec X} A\right)_{\alpha \beta} +2 X_\mu S(A)^\mu_{\phantom{\mu}\alpha \beta} = \left( \mathcal L_{\vec H}g\right)_{\alpha \beta}
	\end{equation*}
	
\end{lemma}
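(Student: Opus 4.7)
The proof should be a direct index computation relying on two standard identities: first, that the Lie derivative of a covariant tensor along $\vec{X}$ can be written with $\nabla$ in place of the partial derivative thanks to the Levi-Civita connection being torsion-free; second, that for the metric itself one has the classical formula $(\mathcal{L}_{\vec{H}}g)_{\alpha\beta}=\nabla_\alpha H_\beta+\nabla_\beta H_\alpha$. With these in hand, both sides of the claimed identity reduce to explicit polynomial expressions in $\nabla A$, $\nabla X$, $A$ and $X$, and the result follows by inspection.

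My plan is to start from the right-hand side, substitute $H_\alpha=A_{\alpha\mu}X^\mu$, and apply Leibniz to obtain
\begin{equation*}
(\mathcal{L}_{\vec{H}}g)_{\alpha\beta}=X^\mu\nabla_\alpha A_{\beta\mu}+A_{\beta\mu}\nabla_\alpha X^\mu+X^\mu\nabla_\beta A_{\alpha\mu}+A_{\alpha\mu}\nabla_\beta X^\mu.
\end{equation*}
Then, on the left-hand side, I would write the Lie derivative of $A$ in its covariant form
\begin{equation*}
(\mathcal{L}_{\vec{X}}A)_{\alpha\beta}=X^\mu\nabla_\mu A_{\alpha\beta}+A_{\mu\beta}\nabla_\alpha X^\mu+A_{\alpha\mu}\nabla_\beta X^\mu,
\end{equation*}
and expand the $S(A)$ contraction as
\begin{equation*}
2X_\mu S(A)^\mu{}_{\alpha\beta}=X^\mu\nabla_\alpha A_{\mu\beta}+X^\mu\nabla_\beta A_{\mu\alpha}-X^\mu\nabla_\mu A_{\alpha\beta},
\end{equation*}
using that raising/lowering with $g$ commutes with $\nabla$. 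Adding these two expressions, the transport terms $\pm X^\mu\nabla_\mu A_{\alpha\beta}$ cancel, and the symmetry of $A$ allows the remaining four terms to be rewritten exactly as the right-hand side computed above.

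The identity is then immediate, as every term matches term-by-term. There is no substantive obstacle: the only point requiring mild care is the bookkeeping of the index on $A$ (symmetry of $A_{\alpha\beta}$ is what turns $A_{\mu\beta}$ in the Lie-derivative formula into $A_{\beta\mu}$ in the expansion of $\nabla_\alpha H_\beta$) and the fact that $X_\mu \nabla^\mu A_{\alpha\beta}=X^\mu\nabla_\mu A_{\alpha\beta}$ since $\nabla g=0$. Accordingly the proof would consist of at most five displayed lines of algebra with a one-sentence remark pointing out these two facts.
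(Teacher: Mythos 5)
Your computation is correct: expanding $(\mathcal{L}_{\vec H}g)_{\alpha\beta}=\nabla_\alpha H_\beta+\nabla_\beta H_\alpha$ by Leibniz, writing $(\mathcal{L}_{\vec X}A)_{\alpha\beta}$ in covariant form, and cancelling the transport terms $\pm X^\mu\nabla_\mu A_{\alpha\beta}$ does yield the identity term by term, with the symmetry of $A$ and $\nabla g=0$ being the only facts needed. The paper itself gives no proof of this lemma --- it is quoted verbatim from Mars 2005 --- so there is nothing to compare against; your direct index verification is exactly the standard argument one would expect, and it is complete.
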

%{\color{red} Does $A$ need to be symmetric??}
\begin{lemma}{\bf (Mars 2005 \cite{Mars2005})}
	The second Lie derivative of a two covariant symmetric tensor field $A_{\alpha \beta}$ with respect to a vector $\vec{x}$ reads
	\begin{eqnarray}
	\LL^2_{\vec{x}} A_{\alpha \beta} &:=& \LL_{\vec{x}} \LL_{\vec{x}} A_{\alpha \beta}  \nonumber\\
	&=& \LL_{\nabla_{\vec{x}}\vec{x}} A_{\alpha \beta} + x^\mu x^\rho\nabla_\rho \nabla_\mu A_{\alpha \beta} + R_{\alpha \rho \eta}^\mu x^\eta x^\rho A_{\mu \beta} + R_{\beta \rho \eta}^\mu x^\eta x^\rho A_{\alpha\mu} \nonumber\\
	&&+2(\nabla_\alpha x^\mu)(x^\nu \nabla_\nu A_{\alpha \beta}) + 2 (\nabla_\beta x^\mu)(x^\nu \nabla_\nu A_{\alpha \mu}) + 2 A_{\mu \nu} (\nabla_\alpha x^\mu)(\nabla_\beta x^\nu)\nonumber
	\end{eqnarray}
	If $A_{\alpha \beta} = g_{\alpha \beta}$ then
	\begin{eqnarray}
	\LL^2_{\vec{x}} g_{\alpha \beta} &=& \LL_{\nabla_{\vec{x}}\vec{x}} g_{\alpha \beta} -2 R_{\alpha \gamma \beta \mu}x^\gamma x^\mu + 2(\nabla_\alpha x_\mu)(\nabla_\beta x^\mu)
	\end{eqnarray}
\end{lemma}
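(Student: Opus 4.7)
The plan is to expand the double Lie derivative directly, using the torsion-free Levi--Civita expression
\begin{equation*}
\LL_{\vec x} A_{\alpha\beta} = x^\mu \nabla_\mu A_{\alpha\beta} + A_{\mu\beta}\nabla_\alpha x^\mu + A_{\alpha\mu}\nabla_\beta x^\mu,
\end{equation*}
and then applying $\LL_{\vec x}$ a second time. Three groups of terms emerge and have to be reassembled. The transport piece $x^\nu\nabla_\nu A_{\alpha\beta}$ Lie-derives into $(\nabla_{\vec x}\vec x)^\nu \nabla_\nu A_{\alpha\beta}$ plus $x^\nu\LL_{\vec x}\nabla_\nu A_{\alpha\beta}$. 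I would simplify the latter by commuting $\LL_{\vec x}$ past $\nabla_\nu$ via the identity (\ref{commute_Lie_cov}) applied to each of the two free covariant indices of $A$; this move produces the two Riemann-tensor terms displayed in the statement, the clean double-derivative summand $x^\mu x^\rho\nabla_\rho\nabla_\mu A_{\alpha\beta}$, and some residual $\nabla\nabla x$ contributions.

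The other two groups, coming from $\LL_{\vec x}(A_{\mu\beta}\nabla_\alpha x^\mu)$ and $\LL_{\vec x}(A_{\alpha\mu}\nabla_\beta x^\mu)$, each generate a transport term along $\vec x$ of the form $(\nabla_\alpha x^\mu)(x^\nu\nabla_\nu A_{\mu\beta})$ (and its $\alpha\leftrightarrow\beta$ counterpart), the quadratic wedge $2A_{\mu\nu}(\nabla_\alpha x^\mu)(\nabla_\beta x^\nu)$, plus further $\nabla\nabla x$ pieces. The latter cancel the residual $\nabla\nabla x$ terms from the first group upon invoking the Ricci identity; reorganising what survives into $\LL_{\nabla_{\vec x}\vec x}A_{\alpha\beta}$ --- by matching the coefficient of $\nabla_{\vec x}\vec x$ and of its first derivative in the torsion-free expression for the Lie derivative --- yields the stated first identity.

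For the metric case $A=g$, metric compatibility $\nabla g = 0$ annihilates both transport terms and the second-derivative summand, while the wedge collapses to $2(\nabla_\alpha x_\mu)(\nabla_\beta x^\mu)$. The two curvature terms become $R_{\alpha\rho\eta\beta}x^\eta x^\rho + R_{\beta\rho\eta\alpha}x^\eta x^\rho$; applying the pair-exchange symmetry $R_{abcd}=R_{cdab}$ and the two antisymmetries of the Riemann tensor, together with relabelling under the symmetric contraction $x^\eta x^\rho$, combines them into $-2R_{\alpha\gamma\beta\mu}x^\gamma x^\mu$, producing the second identity. The main obstacle is not conceptual but combinatorial: tracking all the $\nabla\nabla x$ pieces through repeated commutations, and ensuring that the Ricci identity brings about their cancellation with the correct signs; once that is done, the Riemann symmetries reduce the final specialisation to a single term.
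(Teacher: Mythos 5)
The paper does not actually prove this lemma: it is imported verbatim from Mars 2005 and stated without proof, so there is no in-paper argument to compare against. Your strategy --- expand $\LL_{\vec x}\LL_{\vec x}A$ using the torsion-free covariant expression for the Lie derivative, generate the curvature terms by commuting second derivatives, and reassemble the surviving pieces into $\LL_{\nabla_{\vec x}\vec x}A$ --- is the standard and essentially the only route, and carried out carefully it does reproduce both identities (including the correct reduction of the two Riemann terms to $-2R_{\alpha\gamma\beta\mu}x^\gamma x^\mu$ via pair-exchange symmetry, and the implicit correction of the paper's typo $A_{\alpha\beta}\to A_{\mu\beta}$ in the first transport term). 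So the proposal is sound in substance.

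Two bookkeeping points in your sketch would need repair in a full write-up. First, the claim that the transport piece ``Lie-derives into $(\nabla_{\vec x}\vec x)^\nu\nabla_\nu A_{\alpha\beta}$ plus $x^\nu\LL_{\vec x}\nabla_\nu A_{\alpha\beta}$'' double-counts if read literally: since $\LL_{\vec x}\vec x=0$, the Leibniz rule gives $\LL_{\vec x}\bigl(x^\nu\nabla_\nu A_{\alpha\beta}\bigr)=x^\nu\bigl(\LL_{\vec x}\nabla A\bigr)_{\nu\alpha\beta}$ with no separate acceleration term; the $(\nabla_{\vec x}\vec x)^\nu\nabla_\nu A_{\alpha\beta}$ contribution only appears after $\nabla_\nu\LL_{\vec x}A_{\alpha\beta}$ is itself expanded. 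Second, the Riemann terms of the final formula do \emph{not} come from the curvature part of the commutation identity (\ref{commute_Lie_cov}): once that identity is contracted with $x^\gamma$ and $u^\mu=x^\mu$, its Riemann contribution $R_{\mu\gamma\cdot}{}^{\rho}x^\mu x^\gamma$ vanishes by antisymmetry in the first index pair. The curvature terms that survive arise instead exactly where you invoke the Ricci identity, namely when the residual $A_{\mu\beta}\,x^\nu\nabla_\nu\nabla_\alpha x^\mu$ pieces are matched against the $A_{\mu\beta}\,x^\nu\nabla_\alpha\nabla_\nu x^\mu$ pieces required by $\LL_{\nabla_{\vec x}\vec x}A_{\alpha\beta}$. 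Neither issue breaks the proof --- the terms all land in the right place once the expansion is done consistently --- but the attribution of which step produces which term should be fixed before this is presented as a proof rather than a plan.
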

As in \cite{Mars2005} we apply the previous lemma to a vector field $\vec{x}$ which is proportional to the rigging vector. The following expression generalizes (23) from \cite{Mars2005}.
\begin{corollary} \label{corollary_doubleLie_metric}
	Let $A$ be the metric tensor $g$ on $\mathcal M$ and $F_1$, $F_2$ two arbitrary functions defined on a neighbourhood $U$ of $\Sigma_0 \equiv \Phi(\abshyp)$, and $\vec{a}$ the acceleration of the rigging vector.
	Then
	\begin{eqnarray*}
	\left.\LL_{F_1\vec{l}}\LL_{F_2\vec{l}}g_{\alpha \beta}\right|_{\Sigma_0} &=& F_1 \vec{l}(F_2) \mathcal{L}_{\vec{l}}g + \mathcal{L}_{F_1F_2 \vec a}g - 2F_1 a_{(\alpha} \nabla_{\beta)} F_2  \nonumber\\
	&+&2F_1F_2 \left \lbrace -R_{\alpha \gamma \beta \mu}l^\gamma l^\mu +a_\mu a^\mu n_\alpha n_\beta + 2n_{(\alpha}\omega^a_{\beta)}\left(\frac{1}{2}\vec{l}(l^\mu l_\mu)\varphi_a + a_b \Psi^b_a\right)\right.\nonumber\\
	&+&\left.\omega^a_\alpha \omega^b_\beta (\Psi^c_a \Psi^d_b h_{cd} + (l^\mu l_\mu) \varphi_a \varphi_b + \varphi_al_c \Psi^c_b + \varphi_b l_c \Psi^c_a)\right\rbrace\nonumber\\
	&+& F_2 (\nabla_{(\alpha} F_1) ( \nabla_{\beta)} (l^\mu l_\mu))\nonumber\\
	& +&2F_1 (\nabla_{(\alpha}F_2)( \LL_{\vec{l}}l_{\beta)}) + 2l_\mu l^\mu (\nabla_{(\alpha}F_1) (\nabla_{\beta)}F_2)\nonumber\\
	&+&2l_{(\alpha}\LL_{F_1 \vec{l}}\nabla_{\beta)}F_2,
	\end{eqnarray*}
	where the right hand side makes sense at points of $\Sigma_0$.
\end{corollary}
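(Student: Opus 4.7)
I will compute $\mathcal{L}_{F_1\vec{l}}\mathcal{L}_{F_2\vec{l}}g$ directly, invoking the preceding lemma only once (with $\vec{x}=\vec{l}$). The first step is to expand the inner Lie derivative via the standard identity $\mathcal{L}_{fX}g_{\alpha\beta} = f\mathcal{L}_X g_{\alpha\beta} + X_\alpha\nabla_\beta f + X_\beta\nabla_\alpha f$, giving
\[
\mathcal{L}_{F_2\vec{l}}g_{\alpha\beta} = F_2\,\mathcal{L}_{\vec{l}}g_{\alpha\beta} + 2l_{(\alpha}\nabla_{\beta)}F_2.
\]
Applying $\mathcal{L}_{F_1\vec{l}}$ to the first summand, Leibniz together with the contraction $l^\mu(\mathcal{L}_{\vec{l}}g)_{\mu\beta} = a_\beta + \tfrac{1}{2}\nabla_\beta(l^\mu l_\mu)$ (which is the $\varepsilon\to 0$ limit of (\ref{Lll})) yields
\[
\mathcal{L}_{F_1\vec{l}}\bigl(F_2\mathcal{L}_{\vec{l}}g_{\alpha\beta}\bigr) = F_1\vec{l}(F_2)\,\mathcal{L}_{\vec{l}}g_{\alpha\beta} + F_1F_2\,\mathcal{L}^2_{\vec{l}}g_{\alpha\beta} + 2F_2\,a_{(\alpha}\nabla_{\beta)}F_1 + F_2\nabla_{(\alpha}F_1\,\nabla_{\beta)}(l^\mu l_\mu).
\]
Applying $\mathcal{L}_{F_1\vec{l}}$ to the second summand, using $\mathcal{L}_X(\omega\otimes\tau)=(\mathcal{L}_X\omega)\otimes\tau+\omega\otimes(\mathcal{L}_X\tau)$, the commutation $\mathcal{L}_X dF = d(XF)$, and the explicit formula $(\mathcal{L}_{F_1\vec{l}}\boldsymbol{l})_\alpha = F_1\mathcal{L}_{\vec{l}}l_\alpha + (l^\mu l_\mu)\nabla_\alpha F_1$, gives
\[
\mathcal{L}_{F_1\vec{l}}\bigl(2l_{(\alpha}\nabla_{\beta)}F_2\bigr) = 2F_1(\mathcal{L}_{\vec{l}}l_{(\alpha})\nabla_{\beta)}F_2 + 2(l^\mu l_\mu)\nabla_{(\alpha}F_1\nabla_{\beta)}F_2 + 2l_{(\alpha}\mathcal{L}_{F_1\vec{l}}\nabla_{\beta)}F_2.
\]

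The second stage invokes the preceding lemma with $\vec{x}=\vec{l}$, whereby $\mathcal{L}^2_{\vec{l}}g_{\alpha\beta} = \mathcal{L}_{\vec{a}}g_{\alpha\beta} - 2R_{\alpha\gamma\beta\mu}l^\gamma l^\mu + 2(\nabla_\alpha l^\mu)(\nabla_\beta l_\mu)$. The prefactor $F_1F_2$ is absorbed into the first term via $F_1F_2\mathcal{L}_{\vec{a}}g_{\alpha\beta} = \mathcal{L}_{F_1F_2\vec{a}}g_{\alpha\beta} - 2a_{(\alpha}\bigl(F_1\nabla_{\beta)}F_2+F_2\nabla_{\beta)}F_1\bigr)$. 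Crucially, the $-2F_2\,a_{(\alpha}\nabla_{\beta)}F_1$ produced here exactly cancels the $+2F_2\,a_{(\alpha}\nabla_{\beta)}F_1$ already present from the first summand, leaving only the asymmetric piece $-2F_1\,a_{(\alpha}\nabla_{\beta)}F_2$ that appears in the statement (and making the asymmetry in $F_1\leftrightarrow F_2$ manifest). The only piece still to be processed is then the curvature-plus-quadratic-connection term $-2F_1F_2R_{\alpha\gamma\beta\mu}l^\gamma l^\mu + 2F_1F_2(\nabla_\alpha l^\mu)(\nabla_\beta l_\mu)$; the Riemann part already matches the first term inside the braces of the statement, and only $2F_1F_2(\nabla_\alpha l^\mu)(\nabla_\beta l_\mu)$ remains to be evaluated on $\Sigma_0$.

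For this final step, inserting the completeness relation (\ref{basis_completeness}) into $\nabla_\alpha l^\mu$ and using the definitions (\ref{nabla_normal_rigging}) gives, at points of $\Sigma_0$,
\[
\nabla_\alpha l^\mu\big|_{\Sigma_0} = a^\mu n_\alpha + \varphi_a l^\mu\omega^a_\alpha + \Psi_a^b e_b^\mu\omega^a_\alpha.
\]
Multiplying this by the analogous expression for $\nabla_\beta l_\mu$ produces nine cross-terms. Collapsing them with the identities $l_\mu a^\mu = \tfrac{1}{2}\vec{l}(l^\mu l_\mu)$, $g(\vec{e}_b,\vec{e}_d)=h_{bd}$, $l_\mu e_b^\mu=l_b$ and $a_\mu e_b^\mu=a_b$, and symmetrising in $\alpha\leftrightarrow\beta$, groups them into the three blocks shown inside the braces: the $a_\mu a^\mu n_\alpha n_\beta$ piece (from the $(a,a)$ term), the $2n_{(\alpha}\omega^a_{\beta)}\bigl(\tfrac12\vec{l}(l^\mu l_\mu)\varphi_a+a_b\Psi^b_a\bigr)$ piece (from the mixed $(a,\varphi)$ and $(a,\Psi)$ terms), and the fully tangential $\omega^a_\alpha\omega^b_\beta$ piece quartic in $\varphi$ and $\Psi$. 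Combined with the results of the previous paragraph, this reproduces the stated formula. The main obstacle is purely bookkeeping: one must correctly match these nine cross-terms against the second and third blocks of the statement, and track the sign cancellations that leave the asymmetric acceleration term behind; there is no deeper conceptual content beyond the preceding lemma.
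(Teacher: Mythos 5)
Your proof is correct, and I verified each stage: the Leibniz expansions, the cancellation that leaves only $-2F_1 a_{(\alpha}\nabla_{\beta)}F_2$, and the nine-term expansion of $2F_1F_2(\nabla_\alpha l^\mu)(\nabla_\beta l_\mu)$ using $\nabla_\alpha l^\mu|_{\Sigma_0}=a^\mu n_\alpha+(\varphi_a l^\mu+\Psi_a^b e_b^\mu)\omega^a_\alpha$ all reproduce the stated right-hand side term by term (with $l_\mu a^\mu=\tfrac12\vec l(l^\mu l_\mu)$ collapsing the mixed terms into the $2n_{(\alpha}\omega^a_{\beta)}$ block). The paper gives no written proof of this corollary — it only remarks that one should "apply the previous lemma to a vector field proportional to the rigging", i.e.\ presumably set $\vec{x}=F\vec{l}$ in the $\LL^2_{\vec{x}}$ formula, which as stated requires the \emph{same} vector field twice and so does not directly accommodate two distinct functions $F_1\neq F_2$. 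Your organization — peeling off the scalar prefactors first via $\mathcal{L}_{fX}g=f\mathcal{L}_Xg+2X_{(\alpha}\nabla_{\beta)}f$ and its tensorial analogue, so that the lemma is invoked only once with the bare $\vec{x}=\vec{l}$ — cleanly sidesteps this mismatch and makes the $F_1\leftrightarrow F_2$ asymmetry of the result transparent; the trade-off is the extra bookkeeping of the Leibniz terms, which you track correctly.
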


We use Corollary \ref{corollary_doubleLie_metric}, setting $F_1=Q$ and $F_2=1$, to compute the second derivative of the metric tensor at points of the hypersurface $\Sigma_0$
	\begin{eqnarray}
	\left. \LL_{Q\vec{l}}\LL_{\vec{l}}g_{\alpha \beta} \right|_{\Sigma_0} &=&  \LL_{Q \vec{a}}g +  (\nabla_{(\alpha} Q) ( \nabla_{\beta)} l^\mu l_\mu)\nonumber\\
	&+&2Q \left \lbrace -R_{\alpha \gamma \beta \mu}l^\gamma l^\mu +a_\mu a^\mu n_\alpha n_\beta + 2n_{(\alpha}\omega^a_{\beta)}(a_\nu l^\nu \varphi_a + a_\nu e^\nu_b \Psi^b_a)\right.\nonumber\\
	&+&\left.\omega^a_\alpha \omega^b_\beta (\Psi^c_a \Psi^d_b h_{cd} + l_\mu l^\mu \varphi_a \varphi_b+ \varphi_al_c \Psi^c_b + \varphi_b l_c \Psi^c_a)\right\rbrace .\nonumber
	\end{eqnarray}
	With this formula, and using the properties of the Lie derivative  and subcase 1 of Lemma \ref{lemma_pullbackLie} for derivatives along tangent vectors we can work out the third term in (\ref{mastereq}) for $\delta Y$, which reads
	\begin{eqnarray*}
	\Phi^* \left( \mathcal{L}_{\vec{Z}} \mathcal{L}_{\vec{l}} g\right) &=& \Phi^* \left( \mathcal{L}_{Q\vec{l}+\vec{T}} \mathcal{L}_{\vec{l}} g\right) = \Phi^* \left( \mathcal{L}_{Q\vec{l}} \mathcal{L}_{\vec{l}} g\right) + \Phi^* \left( \mathcal{L}_{\vec{T}} \mathcal{L}_{\vec{l}} g\right)\nonumber\\
	&=&\Phi^* \left( \mathcal{L}_{Q\vec{a}}  g\right) + \vec{e}_{(a}Q\vec{e}_{b)}l^{(2)} + 2Q \left( -R_{\alpha \gamma \beta \mu}l^\gamma l^\mu e_a^\alpha e_b^\beta + \Psi^c_a \Psi^d_b h_{cd} \right.\nonumber\\
	&&\left.  + l^{(2)} \varphi_a \varphi_b+ \varphi_al_c \Psi^c_b + \varphi_b l_c \Psi^c_a \right) + 2 \mathcal{L}_{\vec{T}_\abshyp} Y.
	\end{eqnarray*}
	The remaining two terms in (\ref{mastereq}) can be expanded applying Lemma \ref{lemma_pullbackLie}, resulting in
	\begin{eqnarray}
	\Phi^* \left( \mathcal{L}_{\vec{l}_1}g\right) + \Phi^* \left( \mathcal{L}_{\vec{l}}g_1\right) &=& 2\alpha Y + l_a \vec{e}_b (\alpha) + l_b \vec{e}_a (\alpha) + \mathcal{L}_{\vec{s}_\abshyp}h \nonumber \\
	&+& \Phi^* \left(\nabla_{\vec{l}}g_1\right) + g_1(\vec{l},\vec{e}_a) \varphi_b + g_1(\vec{l},\vec{e}_b) \varphi_a + g_1(\vec{e}_a,\vec{e}_c)  \Psi^c_b + g_1(\vec{e}_b,\vec{e}_c)  \Psi^c_a.\nonumber
	\end{eqnarray}
	
	Putting all these results together, we get the following expression
	\begin{eqnarray}
	\delta Y_{ab} &=&\frac{1}{2}\Phi^* \left( \mathcal{L}_{Q\vec{a}}  g\right) + \frac{1}{2}\vec{e}_{(a}Q\vec{e}_{b)}l^{(2)} + Q \left( -R_{\alpha \gamma \beta \mu}l^\gamma l^\mu e_a^\alpha e_b^\beta + \Psi^c_a \Psi^d_b h_{cd} \right.\nonumber\\
	&+&\left.   l^{(2)} \varphi_a \varphi_b+ \varphi_al_c \Psi^c_b + \varphi_b l_c \Psi^c_a \right) +  \mathcal{L}_{\vec{T}_\abshyp} Y + \alpha Y + l_{(a} \vec{e}_{b)} (\alpha)  + \frac{1}{2} \mathcal{L}_{\vec{s}_\abshyp}h +\nonumber \\
	&+&  \frac{1}{2}\left \lbrace \Phi^* \left(\nabla_{\vec{l}}g_1\right) + g_1(\vec{l},\vec{e}_a) \varphi_b + g_1(\vec{l},\vec{e}_b) \varphi_a + g_1(\vec{e}_a,\vec{e}_c)  \Psi^c_b + g_1(\vec{e}_b,\vec{e}_c)  \Psi^c_a \right \rbrace, \label{Yp_1}
	\end{eqnarray}
	that can be written in terms of the vector $\vec{\zeta}$ to produce (\ref{Yp_1_b}).

\bibliographystyle{review_bib}

\bibliography{references}

\begin{thebibliography}{10}

\bibitem{Battye01}
R.~A. Battye and B.~Carter (2001) \emph{{Generic junction conditions in
  brane-world scenarios}}.
\newblock \PLB \textbf{509} 331.

\bibitem{Brizuela}
D.~Brizuela, J.~M. Mart{\'i}n-Garc{\'i}a, U.~Sperhake and K.~D. Kokkotas (2010)
  \emph{{High-order perturbations of a spherical collapsing star}}.
\newblock \PRD \textbf{82} 104039.

\bibitem{Capovilla:1994bs}
R.~Capovilla and J.~Guven (1995) \emph{{Geometry of deformations of
  relativistic membranes}}.
\newblock \PRD \textbf{51} 6736--6743.

\bibitem{ClarkeDray}
C.~J.~S. Clarke and T.~Dray (1987) \emph{Junction conditions for null
  hypersurfaces}.
\newblock \CQG \textbf{4} 265.

\bibitem{geroch2013differential}
R.~Geroch, \emph{Differential Geometry: 1972 Lecture Notes}.
\newblock Lecture Notes Series (Minkowski Institute Press 2013).

\bibitem{Israel1966}
W.~Israel (1966) \emph{Singular hypersurfaces and thin shells in general
  relativity}.
\newblock Il Nuovo Cimento B (1965-1970) \textbf{44} 1--14.

\bibitem{Mars2005}
M.~Mars (2005) \emph{{First and second order perturbations of hypersurfaces}}.
\newblock \CQG \textbf{22} 3325--3348.

\bibitem{Mars_constraints}
M.~Mars (2013) \emph{{Constraint equations for general hypersurfaces and
  applications to shells}}.
\newblock \GRG \textbf{45} 2175--2221.

\bibitem{MarsMenaVera2008}
M.~Mars, F.~C. Mena and R.~Vera (2008) \emph{First order perturbations of the
  {Einstein-Straus} and {Oppenheimer-Snyder} models}.
\newblock \PRD \textbf{78} 084022.

\bibitem{Mars1993}
M.~Mars and J.~M.~M. Senovilla (1993) \emph{Geometry of general hypersurfaces
  in spacetime: junction conditions}.
\newblock \CQG \textbf{10} 1865.

\bibitem{lorentzian2007}
M.~Mars, J.~M.~M. Senovilla and R.~Vera (2007) \emph{{Lorentzian and signature
  changing branes}}.
\newblock \PRD \textbf{76} 044029.

\bibitem{Mukohyama00}
S.~Mukohyama (2000) \emph{{Gauge-invariant gravitational perturbations of
  maximally symmetric spacetimes}}.
\newblock \PRD \textbf{62} 084015.

\bibitem{Price_Thorne}
R.~{Price} and K.~S. {Thorne} (1969) \emph{{Non-Radial Pulsation of
  General-Relativistic Stellar Models. II. Properties of the Gravitational
  Waves}}.
\newblock \apj \textbf{155} 163.

\bibitem{ReinaVera2014}
B.~Reina and R.~Vera (2015) \emph{Revisiting {Hartle's} model using perturbed
  matching theory to second order: amending the change in mass}.
\newblock \CQG \textbf{32} 155008.

\bibitem{schouten2013ricci}
J.~Schouten, \emph{Ricci-Calculus: An Introduction to Tensor Analysis and Its
  Geometrical Applications}.
\newblock Grundlehren der mathematischen Wissenschaften (Springer Berlin
  Heidelberg 2013).

\bibitem{Stewart49}
J.~Stewart and M.~Walker (1974) \emph{Perturbations of space-times in general
  relativity}.
\newblock \PRSA \textbf{341} 49--74.

\bibitem{VegaPoissonMassey}
I.~Vega, E.~Poisson and R.~Massey (2011) \emph{Intrinsic and extrinsic
  geometries of a tidally deformed black hole}.
\newblock \CQG \textbf{28} 175006.

\bibitem{Wald}
R.~M. Wald, \emph{{General Relativity}}.
\newblock {Physics/Astrophysics} (University of Chicago Press 1984).

\end{thebibliography}

\end{document}